\documentclass[11pt]{article}
\usepackage[utf8]{inputenc}
\usepackage{hyperref}
\usepackage{amsmath,amsthm,amssymb,caption,subcaption,nicefrac}
\usepackage{thmtools}
\usepackage{fullpage}
\usepackage{bm}
\usepackage[table]{xcolor}
\usepackage{multirow}
\usepackage{arydshln}
\usepackage{csquotes}
\usepackage{anyfontsize}
\usepackage[shortlabels]{enumitem}
\usepackage{authblk}

\usepackage{natbib}
\setcitestyle{authoryear,open={(},close={)}} 

\usepackage{algorithm}
\usepackage{algorithmic}
\usepackage[algo2e,ruled,noend]{algorithm2e}

\usepackage{tikz}

\usepackage{hyperref}
\hypersetup{
	colorlinks=true,
	linkcolor=blue,
	citecolor=blue,
	filecolor=blue,
	urlcolor=blue
}

\usepackage[nameinlink]{cleveref} 
\crefname{algocf}{algorithm}{algorithms}
\Crefname{algocf}{Algorithm}{Algorithms}
\makeatletter
\AddToHook{cmd/appendix/before}{\def\cref@section@alias{appendix}\def\cref@subsection@alias{appendix}}
\makeatother

\newtheorem{theorem}{Theorem}
\newtheorem{proposition}[theorem]{Proposition}
\newtheorem{lemma}[theorem]{Lemma}
\newtheorem{corollary}[theorem]{Corollary}

\newtheorem{definition}[theorem]{Definition}

\newtheorem{remark}[theorem]{Remark}

\usepackage{hyperref}

\usepackage{bm}

\usepackage{booktabs}
\usepackage{algorithm}
\usepackage{algorithmic}

\usepackage{nicefrac,xfrac}

\definecolor{Gred}{RGB}{219, 50, 54}
\definecolor{Ggreen}{RGB}{60, 186, 84}
\definecolor{Gblue}{RGB}{72, 133, 237}
\definecolor{Gyellow}{RGB}{247, 178, 16}
\definecolor{ToCgreen}{RGB}{0, 128, 0}
\definecolor{myGold}{RGB}{231,141,20}
\definecolor{myBlue}{rgb}{0.19,0.41,.65}
\definecolor{myPurple}{RGB}{175,0,124}
\definecolor{Gmagenta}{RGB}{255, 0, 255}

\usepackage{enumitem}
\setitemize[1]{leftmargin=4mm,itemsep=1pt}
\setenumerate[1]{leftmargin=4mm,itemsep=1pt}

\DeclareMathOperator{\score}{score}
\DeclareMathOperator{\Potential}{\mathtt{val}}
\newcommand{\ring}{\mathbb{K}}
\newcommand{\zero}{\mathbf{0}}
\newcommand{\one}{\mathbf{1}}

\newcommand{\cA}{\mathcal{A}}

\newcommand{\eps}{\varepsilon}

\renewcommand{\phi}{\varphi}

\newcommand{\scope}[1]{\mathrm{scope}{#1}}

\newcommand{\tw}{\operatorname{tw}}

\DeclareMathOperator{\Lap}{Lap}

\newcommand{\items}{I}

\newcommand{\size}{\mathrm{size}}
\DeclareMathOperator{\local}{Score}
\newcommand{\epsstep}{\varepsilon_{\mathrm{step}}}
\renewcommand{\Pr}{\mathbb{P}}
\newcommand{\forget}{\mathrm{Forget}}
\newcommand{\fforget}{\mathrm{\footnotesize Forget}}
\DeclareMathOperator{\poly}{poly}
\newcommand{\domain}{\mathcal{D}}
\newcommand{\family}{\mathcal{F}}
\newcommand{\dataset}{X}
\newcommand{\Paren}[1]{\left(#1\right)}
\newcommand{\lo}{\mathrm{lo}}
\newcommand{\hi}{\mathrm{hi}}

\renewcommand{\setminus}{\smallsetminus}

\allowdisplaybreaks

\title{Fixed-Parameter Tractability of Private Synthetic Data Generation}
\date{\today}

\author[1]{Badih Ghazi}
\author[2]{Crist\'obal Guzm\'an}
\author[3]{Pritish Kamath} 
\author[3]{Alexander Knop}
\author[3]{Ravi Kumar}
\author[3]{Pasin Manurangsi}
\affil[1]{Google Deepmind}
\affil[2]{Institute for Mathematical and Computational Engineering, Faculty of Mathematics and School of Engineering, Pontificia Universidad Cat\'olica de Chile}
\affil[3]{Google Research}

\date{}

\begin{document}

\maketitle

\begin{abstract}
    We study the problem of generating synthetic data under differential privacy. We establish fixed-parameter tractability (FPT) for this problem where the parameter is the treewidth of the query family's incidence graph. Our algorithms attain  optimal error rates across all regimes and are realized by two different approaches:    
  the first is based on linear programming (LP) and the FPT of the separation problem for the LP dual; the second is based on a subsampled private multiplicative weights method, where we obtain FPT for sampling from Gibbs distributions.  Both approaches are unified by a dynamic programming framework over a tree decomposition.\end{abstract}

\section{Introduction}
\label{sec:introduction}

Private synthetic data generation is a widely studied problem at the heart of private data analysis, with major practical implications \citep[see, e.g.,][]{DworkR14,Barak:2007,Blum:2008,HardtR10,Abouwd:2018,Cormode:2025,Ponomareva:2025}. The goal is to use privacy-preserving methods on a sensitive dataset to create a new version that retains key statistical properties of the original.  The advantage of private synthetic data is the flexibility of sharing it without privacy concerns for downstream applications.

Let the data domain be $\domain=\{0,1\}^d$, and consider a family $\family$ of counting queries (i.e., the queries are averages of functions $f:\domain\to \{0, 1\}$). The goal of synthetic data generation is: given a dataset $X \in \domain^*$ where $n = |X|$, produce a dataset $X' \in \domain^*$ such that $f(X) \approx f(X')$ holds, where $f(X) := \frac{1}{|X|} \sum_{x \in X} f(x)$. 
Ignoring computational constraints, the optimal error rates for generating synthetic data satifying differential privacy (DP) are known in several cases. We first consider the case of $(\eps, \delta)$-DP (i.e., approximate-DP). Here, the optimal error rates\footnote{As is standard in literature, we assume that $\eps \leq 1$ and $\delta = o\Paren{\frac{1}{n}}$ throughout. Furthermore, we consider the $\ell_\infty$-error; see \Cref{sec:preliminaries} for a formal definition.} are 
\[ 
    \textstyle
    \Theta\Big(\min\Big\{ 
        \frac{\sqrt{|\domain|\ln|\family|}}{n\eps}, 
        \frac{\sqrt{|\family|\log(1/\delta)}}{n\eps}, 
        \Big(\frac{\sqrt{\ln|\domain|}\ln|\family|\ln(1/\delta)}{n\eps}\Big)^{1/2} 
    \Big\}\Big).\footnote{Upper bounds in the first regime follow from \citep[Theorem 7.2.9]{Vadhan:2017}, for the second regime from \citep{GKM21,DK22}, and for the third regime from \citep{HardtR10}. Lower bounds can be found in \citep{Vadhan:2017}.}
\] 
For $\eps$-DP (i.e., pure-DP), the optimal error rates are still open \citep{DPorg-open-problem-optimal-query-release,Nikolov:2023}, and the best known upper bound is 
\[ \textstyle O\Big(\min\Big\{ \frac{\sqrt{|\domain|\ln|\family|}}{n\eps}, \frac{|\family|}{n\eps}, \Big(\frac{\ln|\domain|\ln|\family|}{n\eps}\Big)^{1/3} \Big\}\Big). \]
For both cases, the first regime is best for small domain size $|\domain|$, the second one is best for small query family size $|\family|$, and the third one is best when the number of datapoints $n$ is small (known as the \emph{sparse setting}).
All known algorithms that attain these rates run in time exponential in $d$. 

Unfortunately, under computational constraints, it is known that producing accurate DP synthetic data 
even for modest and natural classes of queries (e.g., 2-way marginals) is hard, and the above mentioned ``exponential in $d$'' bounds are optimal
under standard cryptographic assumptions \citep{Ullman:2011}. 
Despite this fundamental roadblock, follow-up research has identified specific settings where DP synthetic data can be efficiently produced. 
A remarkable example is the US Census TopDown algorithm \citep{Abowd:2019TopDown,Abowd:2022}, which produced DP synthetic data for 
{\em hierarchical} queries, specifically, where queries are monotone disjunctions over sets from
a laminar family. 
Other works have explored methods based on  Bayesian models \citep{ZhangPrivBayes:2017}, graphical 
models~\citep{McKenna:2019,McKenna:2021, McKenna:2022}, minimax duality~\citep{Gaboardi:2014, Vietri:2020}, 
and heuristics. It is desirable to obtain a unified approach that is capable of handling general query workloads, 
and provides a clear distinction between easy and hard instances. Quoting \citep{McKenna:2021},
\begin{quote}
\small
    {\sl In general, it would be nice to have a mechanism that can automatically adapt to an analyst-provided workload, and generate synthetic data that provides high utility on the queries and tasks in that workload. Several workload-adaptive mechanisms exist, but they are generally restricted to settings where the full high-dimensional histogram can be explicitly materialized in vector form, and are thus unable to scale to high-dimensional domains.}
\end{quote}

\subsection{Contributions}

Our main contribution is to show that the optimal error rates for DP synthetic data generation over worst-case query workloads are achievable via fixed-parameter tractable (FPT) algorithms. Our techniques are simple, broadly applicable, and both unify and extend existing methods.
\begin{itemize}\item \textsl{Structural parametrization.}  We identify the {\em incidence graph treewidth} of the query class as a key parameter for achieving FPT. 
    This parameter does not necessarily grow with the query scope size and this allows us to recover, e.g., the polynomial tractability for hierarchical queries  \citep{Abowd:2019TopDown,Abowd:2022}. The FPT results for our subsequent algorithms follow from a unified dynamic programming framework over a tree decomposition, detailed in~\Cref{sec:dynamic_program_FPT}. We further show that bounded treewidth does not only improve running time, but also provides tighter error rates than previously known; see \Cref{sec:error_tw}. \item \textsl{Small family size.} 
        When $|\family|$ is small, we propose an FPT DP synthetic data generator based on the dual 
        of a natural linear program (LP) that minimizes query error  over fractional histograms over
        $\domain$. 
        We show that optimizing this dual is FPT via the ellipsoid method~\citep{Grotschel:1981}, as the separation problem is rendered FPT by our dynamic programming framework. \item \textsl{Sparse regime.}
    When $n$ is small, we adapt the standard private multiplicative weights update (PMWU)~\citep{HardtR10} method to implicitly handle the high-dimensional distributions used for the histograms. We prove that sampling from these distributions is sufficient to maintain accuracy (\Cref{thm:implicit_pmwu_accuracy}). Moreover, in \Cref{sec:Gibbs_sampling} we show that this sampling task is FPT, by computing the partition function (by the aforementioned dynamic program) and recursively sampling  conditional probabilities from root to leaves. \end{itemize}

\subsection{Related Work}

The literature on DP synthetic data generation is vast. We summarize here the work closely related to ours. 
The noisy LP was one of the earliest proposed approaches for DP synthetic data generation \citep{Barak:2007,DworkNRRV09}; this LP perspective played a key role in the development of PMWU  \citep{HardtR10,HardtLM12}. 

The hardness of generating synthetic data for all two-way marginals \citep{Ullman:2011} has motivated alternative approaches that are either heuristic or constrain the family of distributions and/or queries, to achieve algorithmic efficiency. \cite{HardtLM12} observed that if features are partitioned and each query is supported on a distinct block, then the MWU distributions factorize, yielding a compact representation; this turns out to be a low treewidth query class. 

Other approaches are based on restricting the histogram distribution to specific parametric families. The most common one  involves  probabilistic graphical models.  For example, \cite{ZhangPrivBayes:2017} construct a degree-$k$ Bayesian network privately via the exponential mechanism, which results in running time exponential in $k$, and lacks provable accuracy guarantees.  Another class of algorithms leverage ideas from graphical models and maximum entropy estimators to design DP synthetic data without making parametric assumptions
\citep{McKenna:2019,McKenna:2021,Cai:2021}.  
For marginal queries under square loss, their algorithms admit an FPT implementation, as they are based on {\em belief propagation} \citep{Koller:2009, Murphy:2012}. 
Aside from the different notion of accuracy, their results are comparable to ours when $|\family|$ is small and consists of marginal queries. 

A general approach for efficiently generating private synthetic data in the statistical setting was proposed by \cite{Boedihardjo:2021}.  They obtain (approximately) optimal rates for the small $|\family|$ regime by solving a saddle-point problem on the histogram-query product space, similar to our approach. For computational efficiency, they assume access to a sampler with bounded R\'enyi condition number with respect to a target distribution, which is used to drastically reduce the histogram dimension.  The results are only applicable to the statistical setting, and under strong side information on the target distribution. For $k$-way marginal queries their algorithm is FPT in $k$, a running time that is not attainable in general \citep{Ullman:2011}.

A well-established technique to achieve DP for optimization, saddle-point problems (both of them relevant for DP synthetic data), and online learning, is based on combining MWU approaches with sampling from the resulting Gibbs distributions \citep[see, e.g.,][]{Hsu:2013,Gaboardi:2014,Vietri:2020,Asi:23a,Gonzalez:2026}. Crucially, in all of these works, privacy is achieved by the sampling  itself, as it corresponds to an application of the exponential mechanism. Since these solutions become accurate only when collecting sufficiently many samples, there is a degradation of the privacy budget for larger samples (which can be effectively mitigated in some cases). In our approach, privacy is guaranteed by the exponential mechanism over queries, and therefore the sampling can be performed repeatedly without degrading privacy. Our choice of sample size is only limited by the required concentration for accurate estimates of the queries, as well as the need to keep a moderate sample size for efficiency purposes.

Within the sampling approaches outlined above, some works have 
explored the possibilities of dualizing the synthetic data LP,  using
integer programming-based algorithms for the maximization subproblems. These problems are NP-hard in the worst-case but admit FPT for suitable notions of treewidth \citep{Gaboardi:2014, Vietri:2020}. Unfortunately, the resulting error rates are suboptimal compared to those obtained by PMWU. 

There is a vast FPT literature in  optimization, sampling and decision problems in connection to various notions of 
treewidth \citep[see][and the references therein]{Cygan:2015, SamerSzeider:2010, Bezakova:2016}. 
The majority of works in this area either focus on specific graph problems, or on general constraint satisfaction problems,
as well as more structural characterizations of FPT under bounded treewidth, such as \citeauthor{Courcelle:1990}'s Theorem~\citep{Courcelle:1990}.
Our approaches for FPT in optimization and sampling follow standard techniques from this literature: 
however, to make the presentation self-contained, we describe
the algorithm and running times in detail. 

\section{Preliminaries}

\label{sec:preliminaries}

Consider a data domain $\domain=\{0,1\}^d$. A dataset is a sequence $\dataset = (x_1, \ldots, x_n) \in \domain^{\ast}$, and we denote its size (i.e., number of datapoints) by $n = |\dataset|$. A dataset $\dataset$ can be represented as a \emph{histogram}, $h^\dataset\in \mathbb{R}_+^\domain$, where $h_x^\dataset=\frac1n\sum_{i=1}^n [x=x_i]$. We define a \emph{fractional histogram} as any element of the standard simplex $\Delta_\domain =\{h\in\mathbb{R}_+^\domain: \, \sum_{x\in \domain} h_x=1\}$. 

\paragraph{Differential Privacy and Synthetic Data Generation.}

A pair $X, X' \in \domain^{\ast}$ of datasets  are {\em neighbors} (denoted by $X \simeq X'$) if they differ by at most one substitution of their datapoints. We say that an algorithm ${\cal A}:\domain^{\ast}\to {\cal O}$ is \emph{$(\varepsilon,\delta)$-DP}  
if for every pair $X\simeq X'$, 
and for every (measurable) 
$E \subseteq {\cal O}$, $\mathbb{P}[{\cal A}(X)\in E]\leq e^{\varepsilon} \cdot \mathbb{P}[{\cal A}(X')\in E]+\delta$. When $\delta = 0$, we say that ${\cal A}$ is $\eps$-DP (i.e., pure-DP).  Properties and examples 
of DP algorithms are included in \Cref{app:basics_DP}.   

The \emph{synthetic data generation} problem is to design a (pure- or) approximate-DP algorithm ${\cal A}: \domain^{\ast}\to\domain^{\ast}$ that, on an input dataset $X$, outputs a synthetic dataset $\hat{X} = {\cal A}(X)$. The error of the synthetic dataset is evaluated on a family ${\cal F}$ of Boolean queries as 
$
\|(f(X) - f(\hat{X}))_{f\in \family}\|_{\infty}=\max_{f\in \family} |f(X)-f(\hat{X}))|. $
We say that $\cA$ is \emph{$\alpha$-accurate} if its expected error is at most $\alpha$.

\paragraph{Boolean Queries and Scope.}  
Consider a class $\family$ of Boolean queries comprised of functions $f:\domain\to\{0,1\}$. 
For such a function, we denote its average over the dataset $\dataset=(x_1,\ldots,x_n)$ as
$f(\dataset):=\frac{1}{n}\sum_{i=1}^n f(x_i)$, and for a (fractional) histogram $h\in\Delta_{\domain}$, 
we denote its expected value as $\langle f,h\rangle=\sum_{x \in \domain} f(x) \cdot h_x=\mathbb{E}_{x\sim h}[f(x)]$. 
In our context, it is often more convenient to compute a histogram rather than a dataset. Nevertheless, it is simple to turn the former into the latter via sampling, using a direct application of the Hoeffding bound and a union bound.  Formally, we have:\begin{lemma} \label{lem:subsampling}
Let $h \in \Delta_{\domain}$ be any fractional histogram, and let $X =(x_1,\ldots,x_m)\stackrel{\mathrm{i.i.d.}}{\sim} h$. Then,
\[
    \textstyle\Pr\Big[\forall f\in {\cal F}:\,\textstyle |f(X)-\langle f,h\rangle|
    \leq 
    \sqrt{\frac{1}{m}\ln \frac{2|\family|}{\beta}}\Big] \geq 1 - \beta.
\]
\end{lemma}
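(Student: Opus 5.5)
The plan is to fix a single query $f\in\family$, bound its deviation with Hoeffding's inequality, and then take a union bound over $\family$.

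\emph{Single query.} Fix $f\in\family$ and set $Z_i := f(x_i)$ for $i=1,\ldots,m$. Since $x_1,\ldots,x_m$ are i.i.d.\ draws from $h$, the $Z_i$ are i.i.d.\ random variables taking values in $\{0,1\}\subseteq[0,1]$. Their common mean is
\[
\mathbb{E}[Z_i]=\sum_{x\in\domain} f(x)\, h_x=\langle f,h\rangle,
\]
and by definition $f(X)=\frac1m\sum_{i=1}^m Z_i$. The two-sided Hoeffding bound for variables with range length $1$ then gives, for every $t>0$,
\[
\Pr\bigl[|f(X)-\langle f,h\rangle|>t\bigr]\le 2e^{-2mt^2}.
\]

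\emph{Union bound.} Taking a union bound over the $|\family|$ queries, the probability that some $f\in\family$ has deviation larger than $t$ is at most $2|\family|e^{-2mt^2}$. Choose $t$ so that this quantity equals $\beta$, namely
\[
t=\sqrt{\tfrac{1}{2m}\ln\tfrac{2|\family|}{\beta}}.
\]
This value is at most $\sqrt{\frac1m\ln\frac{2|\family|}{\beta}}$. The event in the lemma contains the event that every deviation is at most this smaller $t$, so the event in the lemma holds with probability at least $1-\beta$.

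The argument is entirely routine. The only points that need care are the following.
\begin{itemize}
\item The bound must be two-sided, which is where the factor $2$ comes from.
\item The stated bound omits the factor $\tfrac12$ inside the square root, so it is weaker than what Hoeffding gives. This is harmless, since containment of events goes in the right direction.
\item We implicitly assume $\beta\in(0,1)$ and that $\family$ is finite, so that the logarithm is positive and the union bound is over finitely many events.
\end{itemize}
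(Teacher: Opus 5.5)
Your proposal is correct and follows essentially the same route the paper indicates, namely Hoeffding's inequality for each fixed $f\in\family$ followed by a union bound over $\family$. Your observations that the stated bound is weaker by the factor $\tfrac12$ inside the square root, and that one needs $\beta\in(0,1)$ for the logarithm to be nonnegative, are both accurate.
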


We consider Boolean functions $f:\domain \to\{0,1\}$ represented in the \emph{(reduced) ordered binary decision diagram (rOBDD)} form\footnote{
A \emph{BDD} is a representation of a Boolean function as a rooted directed acyclic
graph in which (i) the terminal nodes correspond to the output values 0 and 1, (ii) the internal nodes correspond to the 
variables of the Boolean function, and (iii) each internal node has two
outgoing edges, one taken if the variable is false and the other taken if the variable is true.  
An \emph{OBDD} is a BDD in which the variables appear in the same order on all paths from the root to the terminal
nodes.  An OBDD is \emph{reduced} if it contains neither redundant nodes nor
isomorphic subgraphs.  
}
\citep{Bryant:1992}. 
For our examples of interest, this representation can be produced with $O(d)$ bits;
in particular, this encoding is efficient. 
We denote $\size(f)$ as the size of the provided encoding, and for a set $\family$ of Boolean functions, we denote $\size(\family)=\max\{\size(f):f\in \family\}$. For  more details of rOBDD and examples, see \Cref{app:boolean_rOBDD}.
\begin{definition}
    Given $f:\domain \to\{0,1\}$, we define its scope as
    $\scope[f]\triangleq\{i\in[d]:\, \exists x\in \domain, f(x)\neq f(x^{\oplus i})\},$ 
    where $x^{\oplus i}$ is the vector $x$ with its $i$th coordinate flipped.
\end{definition}
The importance of this definition is that $f$ is uniquely determined given a
partial evaluation of the Boolean variables within its scope. Hence, we
adopt a slight abuse of notation: for any index set $S \subseteq [d]$ such that
$\scope[f]\subseteq S$ and any vector $y \in \{0,1\}^S$, we use $f(y)$ to
denote the unique value of $f(x)$ for any input $x$ satisfying
$x|_{\scope[f]} = y|_{\scope[f]}$.

Given a function $f$ in an rOBDD representation, one can compute its scope in $O(\size(f))$ time. This follows from the next result whose proof is straightforward and hence omitted. \begin{lemma} \label{lem:scope_computation}
For $f:\domain \to\{0,1\}$, $\scope[f]=\{i\in [d]: \mbox{$i$th variable appears in the rOBDD of $f$}\}$. 
\end{lemma}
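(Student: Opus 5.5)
We prove the two inclusions separately, using two structural facts about reduced OBDDs. First, every node $v$ of an OBDD computes a subfunction $f_v$: the restriction of $f$ obtained by fixing the variables tested on any root-to-$v$ path. Since the variable order is the same on every path, $f_v$ depends only on variables strictly later than $\mathrm{var}(v)$ in the order, together with $\mathrm{var}(v)$ itself. Second, in a reduced OBDD distinct nodes compute distinct subfunctions, and no node has identical low and high children; otherwise the nodes would be merged, or the node removed, contradicting reducedness.

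For the inclusion $\scope[f]\subseteq\{i:\ \text{variable } i \text{ appears}\}$, we argue directly. Suppose variable $i$ labels no node. Fix any $x$ and follow the evaluation path of $x$ from the root. Every node on this path tests a variable $j\neq i$, and $x$ and $x^{\oplus i}$ agree on every such $j$. Hence both inputs follow the same path and reach the same terminal, so $f(x)=f(x^{\oplus i})$ for every $x$, and $i\notin\scope[f]$.

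For the reverse inclusion, suppose a node $v$ is labeled by variable $i$. We must exhibit some $x$ with $f(x)\neq f(x^{\oplus i})$.
\begin{enumerate}[(i)]
\item Since the diagram is reduced, the low child $v_0$ and the high child $v_1$ of $v$ are distinct nodes. Hence they compute distinct subfunctions $f_{v_0}\neq f_{v_1}$.
\item Both $f_{v_0}$ and $f_{v_1}$ depend only on variables strictly after $i$ in the order. So we can pick an assignment $z$ to those later variables with $f_{v_0}(z)\neq f_{v_1}(z)$.
\item Every node of a reduced OBDD is reachable from the root, since unreachable nodes would have been deleted. Fix a root-to-$v$ path; the variables it tests all precede $i$, by ordering. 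Let $w$ be the assignment to those variables that induces this path.
\item Build $x$ from $w$ on the earlier tested variables, $z$ on the later variables, arbitrary values elsewhere, and $x_i=0$. The evaluation of $x$ follows the path to $v$, because the path depends only on the values in $w$. From $v$ it then goes to $v_0$, and the evaluation of $x^{\oplus i}$ goes to $v_1$. The outputs are $f_{v_0}(z)\neq f_{v_1}(z)$, so $i\in\scope[f]$.
\end{enumerate}

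The main obstacle is justifying step (i), i.e., that distinct nodes of a reduced OBDD compute distinct functions. This is exactly Bryant's canonicity theorem \citep{Bryant:1992}, which we invoke rather than reprove. Absent isomorphic subgraphs and redundant nodes, two nodes computing the same function would, by induction from the terminals upward, have isomorphic subgraphs.

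Finally, the complexity claim: we traverse the rOBDD once and collect the variable labels, which takes $O(\size(f))$ time.
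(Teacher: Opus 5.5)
Your proof is correct. The paper omits this argument as straightforward, and yours is the standard one: the evaluation-path argument gives $\mathrm{scope}[f]\subseteq\{i : i \text{ appears}\}$, and for the converse you combine a root-to-$v$ path with Bryant's canonicity applied to the two distinct children of a node labeled $i$.

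Your height-induction sketch for canonicity is not circular. The inductive hypothesis applied to $\mathrm{lo}(u)\neq\mathrm{hi}(u)$ already shows that $f_u$ depends on the variable labeling $u$.

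One small fix: every node is reachable because the paper builds the rOBDD from the full decision tree, and merging nodes or redirecting edges preserves reachability. It does not come from a rule deleting unreachable nodes, since the paper's list of reduction rules has no such rule.
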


\paragraph{Incidence Graph and Treewidth.}
For a graph $G=(V,E)$, let $V(G)=V$ and $E(G)=E$.  We denote a bipartite graph by $G = (U, V, E)$, where $U,V$ are a partition of $V$ and $E \subseteq U \times V$.  For $A\subseteq V(G)$, we define its boundary $\partial(A)=\{v\in V(G) \setminus A: \exists u\in A, \{u,v\}\in E(G)\}$.

\begin{definition}[Incidence Graph]
Let $\family$ be a family of functions $f:\domain\to\{0,1\}$. The \emph{incidence graph}  of $\family$ is the bipartite graph $G(\family)=([d], \family, \{\{i,f\} : i\in \scope[f]\})$.
\end{definition}
By \Cref{lem:scope_computation}, given  ${\cal F}$ there exists an $O(|\family|\cdot\size(\family))$ algorithm to compute its incidence graph.

\begin{definition}[Treewidth]
    Let $G = (V, E)$ be an undirected graph.  
    A \emph{tree decomposition} of $G$ is a pair $(T, \{X_t\}_{t \in V(T)})$, 
    where $T$ is a tree and each bag $X_t \subseteq V(G)$ is such that
    (i) $\cup_t X_t = V(G)$, 
    (ii) for every $\{u, v\} \in E(G)$, there is an $X_t$ such that $u, v \in X_t$, and 
    (iii) for every $u \in V(G)$, the set of nodes $\{ t \in V(T) \mid u \in X_t \}$ 
        is a connected subtree in $T$.  
    The \emph{width} of the tree decomposition is $\max_t |X_t|-1$.  
    The \emph{treewidth} of $G$ is the minimum width tree decomposition of $G$.  
\end{definition}

For an incidence graph $G(\family)$ of a family $\family$, let $\tw(\family)$ denote the treewidth of $G(\family)$. We defer some examples and applications of bounded treewidth graphs and function classes to \Cref{app:treewidth_examples}.

\section{Algorithm Based on Linear Programming} 
\label{sec:noisy_LP}

We begin by considering  
a synthetic data approach based on Linear Programming (LP). This approach yields FPT with optimal rates for the small family regime (i.e., $|\family|$ is small).

As input to our problem, we are given a set of noisy query estimates.  In particular, we assume that there is a DP mechanism that outputs a noisy count vector $\{\hat f(X): f\in \family\}\in \mathbb{R}^\family$ that is $\alpha$-accurate, i.e.,
$\mathbb{E}\|(\hat f(X)-f(X))_{f\in \family}\|_{\infty}\leq \alpha$. For example, the Laplace mechanism with composition
yields $\alpha\lesssim \frac{|\family|}{n\varepsilon}$ (see \Cref{app:basics_DP}), though improved bounds are known for certain families
~\citep{Li:2013,Nikolov:2013,Nikolov:2023,Lebeda:2025}.

Our goal is to construct a synthetic dataset (i.e., a fractional histogram $h$) that is consistent with these noisy counts. We formulate this as the following primal LP \eqref{eqn:LP_histogram}, where we wish to find a histogram $h$ that minimizes the maximum deviation $\alpha$ from the noisy counts: \begin{equation} \label{eqn:LP_histogram}\tag{$P$}
\min_{\alpha\geq 0,\, h\in\Delta_\domain}
\Big\{\alpha:\, \hat f(X)-\alpha \leq \sum_{x\in \domain} f(x) \cdot h_x \leq \hat f(X)+\alpha\,\, (\forall f\in \family) \Big\}.\vspace{-0.1cm}
\end{equation}
The main challenge in solving \eqref{eqn:LP_histogram} is the size of the histogram $h$, which contains $|\domain| = 2^d$ variables.  We can however leverage LP duality to arrive at a dual formulation \eqref{eqn:dual_LP_histogram} of this problem, in which there are $2|\family|$ variables and $2^{d}$ constraints. 
\begin{equation} \tag{$D$}  \label{eqn:dual_LP_histogram}
\max_{(y^+,y^-)\in\mathbb{R}_+^{2\family}}\Big\{\sum_{f\in \family}\hat f(X) \cdot (y_f^--y_f^+):\, \sum_{f\in \family} f(x) \cdot (y_f^--y_f^+)\leq 0\, (\forall x\in \domain),\, 
\sum_{f\in \family}(y_f^-+y_f^+)\leq 1 \Big\}.\end{equation}
We can solve the dual problem via the ellipsoid method. Namely, we will use the classical result that if we can check the feasibility of a solution (separation oracle) in polynomial time, then we can  optimize in polynomial time~\citep{Grotschel:1981}.  In our case, the separation oracle reduces to checking if a candidate vector $(y^+,y^-)$ violates any constraint, i.e., if 
\begin{equation} \label{eqn:dual_sep} 
\max_{x\in \{0,1\}^d} \sum_{f\in \family}f(x) \cdot (y_f^--y_f^{+}) \leq 0.\vspace{-0.1cm}
\end{equation}
In \Cref{sec:dynamic_program_FPT}, 
we show that this problem is FPT in terms of the treewidth of $\family$, and this yields:
\begin{lemma}
\label{lem:dual_LP}
The dual LP \eqref{eqn:dual_LP_histogram} is solvable in time $\poly(d,|\family|,2^{\tw(\family)},\size(\family))$. 
\end{lemma}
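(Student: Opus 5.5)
The proof has two parts. First, a separation oracle for the feasible region of \eqref{eqn:dual_LP_histogram} that runs in time $\poly(d,|\family|,2^{\tw(\family)},\size(\family))$. Second, the Gr\"otschel--Lov\'asz--Schrijver equivalence of separation and optimization \citep{Grotschel:1981}.

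The feasible region is a polytope in $\mathbb{R}^{2|\family|}$. Every constraint has coefficients in $\{-1,0,1\}$, so its facet complexity is $O(|\family|)$, which is polynomial. The objective coefficients $\hat f(X)$ are assumed to be given as rationals of polynomial bit length; this is harmless, since we may round the noisy counts without affecting accuracy. The region is nonempty, because it contains $y=0$, and it is bounded. With these facts, the ellipsoid method computes an exact optimal vertex using polynomially many oracle calls, each on a rational point of polynomial encoding length.

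\textbf{Separation oracle.} Given a candidate $(y^+,y^-)$, we check nonnegativity and the constraint $\sum_f (y_f^-+y_f^+)\le 1$ directly in $O(|\family|)$ time. For the exponentially many constraints indexed by $x\in\domain$, set $w_f = y_f^- - y_f^+$ and solve \eqref{eqn:dual_sep}, that is, maximize $\Phi(x)=\sum_f w_f f(x)$ over $x\in\{0,1\}^d$, and also return a maximizer $x^*$. If $\Phi(x^*)>0$, the inequality $\sum_f f(x^*)(y_f^--y_f^+)\le 0$ is a violated constraint and serves as the separating hyperplane.

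We solve this maximization by dynamic programming over a tree decomposition of $G(\family)$. First compute the incidence graph in $O(|\family|\size(\family))$ time using \Cref{lem:scope_computation}. Then obtain a tree decomposition of width $O(\tw(\family))$, for instance with Bodlaender's algorithm or a constant-factor approximation running in $2^{O(\tw)}\cdot\poly$ time, and convert it to a nice decomposition with $O(d+|\family|)$ nodes. Property (ii) guarantees that for every $f$ some bag contains $f$ together with all of $\scope[f]$; assign each $f$ to the topmost such bag.

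For each node $t$, let $B_t = X_t\cap[d]$. The table $\mathrm{DP}_t(\sigma)$, indexed by $\sigma\in\{0,1\}^{B_t}$, stores the maximum of $\sum w_f f(x)$ over the functions $f$ assigned within the subtree of $t$, maximized over assignments to the variables forgotten below $t$ that are consistent with $\sigma$. The transitions are standard:
\begin{itemize}
\item Introduce nodes copy the table.
\item Forget nodes for a variable maximize over its two values.
\item Join nodes add the two child tables pointwise.
\item Function terms are added at their assigned node, evaluating $f(\sigma)$ on its scope by walking the rOBDD in $O(\size(f))$ time.
\end{itemize}
Correctness follows from the connectivity property (iii): a variable forgotten at a node never reappears outside that subtree, so the subproblems in different branches interact only through the shared bag variables. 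Each table has at most $2^{\tw+1}$ entries, which gives total time $2^{O(\tw(\family))}\poly(d,|\family|,\size(\family))$. A maximizer $x^*$ is recovered by storing argmax pointers and tracing back from the root.

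\textbf{Main obstacle.} The hardest part is stating the dynamic program precisely and establishing its correctness. We need the invariant that $\mathrm{DP}_t$ exactly equals the constrained subtree maximum, and we need to confirm that the function nodes of the incidence graph, which lie in bags but carry no assignment state, do not enlarge the table beyond $2^{|B_t|}$. A secondary, routine point is the bit-complexity bookkeeping needed to apply \citep{Grotschel:1981}: the oracle must return exact rational hyperplanes, which holds because all arithmetic consists of sums of the input rationals $w_f$.
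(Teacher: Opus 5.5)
Your overall route is the same as the paper's: the ellipsoid method \citep{Grotschel:1981}, with a separation oracle for \eqref{eqn:dual_sep} given by dynamic programming over a tree decomposition. The LP and bit-complexity bookkeeping is fine. The gap is in the oracle.

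\textbf{The step that fails.} You claim that property (ii) gives, for every $f$, a single bag containing $f$ together with all of $\mathrm{scope}[f]$. Property (ii) only places each \emph{edge} $\{i,f\}$ in some bag. In the bipartite graph $G(\mathcal{F})$ the variables of $\mathrm{scope}[f]$ are pairwise non-adjacent, so no clique (Helly) argument forces them into a common bag. For example, a single disjunction $f=x_1\vee\cdots\vee x_d$ has $G(\mathcal{F})=K_{1,d}$ and $\tw(\mathcal{F})=1$ (bags $\{f,i\}$), yet any bag holding $f$ and its scope has $d+1$ elements. Your tables are indexed by $\{0,1\}^{B_t}$ with $B_t\supseteq\mathrm{scope}[f]$ wherever $f$ is charged. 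They are therefore exponential in the treewidth of the graph on $[d]$ in which every scope is a clique. That quantity can be $\Theta(d)$ while $\tw(\mathcal{F})$ is constant, which is exactly the case the paper wants to cover (e.g.\ hierarchical queries with large sets). What your argument does prove is the bound with $2^{\tw(\mathcal{F})}$ replaced by $2^{O(k(\tw(\mathcal{F})+1))}$, where $k=\max_f|\mathrm{scope}[f]|$: add $\mathrm{scope}[f]$ to every bag containing $f$ (the construction in the proof of \Cref{lem:bound_tw_marginals}), and then run your DP. Your closing remark that function nodes ``carry no assignment state'' is exactly what breaks: when a scope is split over several bags, the function node must carry state.

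\textbf{The paper's fix, and a caution.} The paper's \Cref{alg:dp_nice_td} is built for this. A state on $X_t$ is $(x,u)$, with a guessed value $u_f$ for each function in the bag, admitted only if $u_f\in R_f(t,x)$, and $\psi(z_f u_f)$ is charged when $f$ is forgotten. If you adopt this, the guess must be tied to the variables of $\mathrm{scope}[f]$ that have already been forgotten, and a residual test against the current bag alone does not achieve that. Take $\mathcal{F}=\{f=x_1\vee x_2,\ g_1=x_1,\ g_2=x_2\}$ with weights $(3,-2,-2)$, whose incidence graph is a path, and bags $\{g_1,x_1\},\{x_1,f\},\{f,x_2\},\{x_2,g_2\}$. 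Both $(x_1,u_f)=(0,1)$ and $(u_f,x_2)=(1,0)$ pass the residual test, so the recursion returns $3$, while the true maximum is $1$.

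What works is a state recording what the forgotten part of the scope has already forced. One ``already satisfied'' bit suffices for a disjunction, and one ``already falsified'' bit for a conjunction of literals, as in incidence-treewidth algorithms for SAT. For arbitrary rOBDDs no such bounded summary can exist. Give each literal occurrence of a CNF formula its own variable. Let one rOBDD (clause-by-clause order) check the clauses, and another (variable-by-variable order) check that copies of the same variable agree; both have linear size. This family has incidence treewidth at most $2$, and with $\hat f=(1,1)$ the LP value is $0$ if the formula is satisfiable and at least $1/2$ otherwise. So, unless $\mathrm{P}=\mathrm{NP}$, the lemma cannot be proved in the stated generality. Your argument should be stated for a restricted class, such as bounded scopes or functions admitting a constant-size summary (disjunctions, marginals).
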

The FPT of this problem implies that privately generating synthetic data is also FPT.
\begin{theorem} \label{thm:DP_synth_data_noisy_LP}
If there is an $(\eps, \delta)$-DP $\alpha$-accurate algorithm that runs in time $T$ for answering queries from the family $\family$,
then there is an $(\eps, \delta)$-DP $O(\alpha)$-accurate algorithm for the synthetic data generation problem with running time $T + O\Paren{\frac{\log(|\family|/\alpha)}{\alpha^2}} + \poly(d,|\family|,2^{\tw(\family)},\size(\family))$. \end{theorem}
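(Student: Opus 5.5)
The algorithm has three stages, and privacy and accuracy will be handled separately. First, run the given $(\eps,\delta)$-DP, $\alpha$-accurate mechanism to obtain the noisy counts $(\hat f(X))_{f\in\family}$. Second, using only these counts, solve \eqref{eqn:LP_histogram} to get an (implicitly represented) fractional histogram $\hat h$. Third, draw $m$ i.i.d.\ samples from $\hat h$ and output them as $\hat X$. Stages two and three only post-process the DP output, so the whole algorithm is $(\eps,\delta)$-DP by the post-processing property of DP (\Cref{app:basics_DP}).

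\textbf{Accuracy.} The true histogram $h^X$ is feasible for \eqref{eqn:LP_histogram} with value $\alpha^\ast := \max_f|\hat f(X)-f(X)|$, since $\langle f,h^X\rangle = f(X)$. Hence the LP optimum $\hat h$ satisfies $\max_f|\langle f,\hat h\rangle - \hat f(X)|\le \alpha^\ast$. By the triangle inequality, $\max_f|\langle f,\hat h\rangle - f(X)|\le 2\alpha^\ast$, and $\mathbb{E}[\alpha^\ast]\le\alpha$.

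For the sampling stage, apply \Cref{lem:subsampling} with $m = \lceil \alpha^{-2}\ln(2|\family|/\alpha)\rceil$ and $\beta=\alpha$. With probability $1-\alpha$ the sampling error is at most $\alpha$; otherwise the error is at most $1$, contributing at most $\alpha$ in expectation. The total expected error is therefore at most $2\alpha+\alpha+\alpha=O(\alpha)$. Producing the $m$ samples accounts for the $O(\log(|\family|/\alpha)/\alpha^2)$ term in the running time, with each sample drawn cheaply as described below.

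\textbf{Computation, which is the main obstacle.} The difficulty is obtaining and sampling from $\hat h$ even though it has $2^d$ coordinates. I would use the ellipsoid method on \eqref{eqn:dual_LP_histogram}, which by \Cref{lem:dual_LP} runs in time $\poly(d,|\family|,2^{\tw(\family)},\size(\family))$. It issues only polynomially many separation-oracle calls, and each violated constraint it reports corresponds to a point $x\in\domain$ found by the maximization \eqref{eqn:dual_sep}. Following \citep{Grotschel:1981}, the constraints returned during the run certify optimality of the dual. The primal can therefore be restricted to histograms supported on this polynomial-size set $S\subseteq\domain$ of returned points, and the restricted primal $(P)$ over $S$ is an explicit LP with $|S|+1$ variables and $2|\family|$ constraints. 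Solving it in polynomial time gives an explicit, sparse optimal $\hat h$ of the same optimal value.

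Sampling from this sparse $\hat h$ is then trivial. Evaluating each $f(x)$ for $x\in S$ takes $O(\size(f))$ time via the rOBDD. The step needing the most care is justifying that the restricted primal attains the full optimum, including handling approximate optimality and bit complexity in the ellipsoid method. If exactness is delicate, I would settle for an additive approximation of, say, $\alpha$ in the optimum, which still yields $O(\alpha)$ error.
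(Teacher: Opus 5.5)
Your proposal is correct and follows the paper's proof. You run the DP mechanism, solve \eqref{eqn:dual_LP_histogram} by the ellipsoid method with the FPT separation oracle, extract a sparse optimal primal histogram, and subsample it via \Cref{lem:subsampling}. Your restricted-primal argument (solving \eqref{eqn:LP_histogram} over the points returned by the oracle) is an explicit version of the paper's brief appeal to complementary slackness. Your accuracy accounting is also more careful than the paper's: it tracks the factor $2$ from the triangle inequality, uses $\mathbb{E}[\alpha^\ast]\le\alpha$, and charges the sampling failure probability.
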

\begin{proof} 
First, we run the DP algorithm to obtain the noisy estimates $( \hat f(X) )_{f\in \family}$.
By the post-processing property of DP, any computation on these estimates remain $(\eps, \delta)$-DP.  
Next, we solve the dual LP \eqref{eqn:dual_LP_histogram}; let $(y_f^+,y_f^-)_{f\in \family}$ be the optimal dual solution found by the ellipsoid method with a separation oracle given by the algorithm from \Cref{lem:dual_LP}. By complementary slackness, we can find in time $\poly(d,|\family|,2^{\tw(\family)},\size(\family))$ a primal optimal solution to \eqref{eqn:LP_histogram}, call it $\tilde h$, with support of size $\leq 2|\family|$.  Since the true histogram is a feasible solution to \eqref{eqn:LP_histogram} with value $\leq \alpha$, the optimal solution $\tilde{h}$ also has value $\leq \alpha$. 
To conclude, we create a dataset by drawing $O\Paren{\frac{\log(|\family|/\alpha)}{\alpha^2}}$ samples from $\tilde{h}$. By \Cref{lem:subsampling}, this yields a dataset with expected error $O(\alpha)$. \end{proof}

\begin{remark}
    In the particular case where $\alpha \gtrsim |\family|/n$, the sampling argument at the end of the proof above can be substituted by a simple (closest integer) histogram rounding on $n\cdot \hat h$, leading to an error rate  $O(\alpha)$. This is the case, e.g.,~for answering queries by the Laplace mechanism with $\varepsilon$ being an absolute constant.
\end{remark}

\begin{remark}
    Another popular approach for solving LPs under separation oracles is to use (nonprivate) MWU methods \citep{Plotkin:1995,Arora:2012}. Given the structure of our dual LP, we can alternatively solve the dual using $O(\frac{d}{\alpha^2})$ calls to the separation oracle \eqref{eqn:dual_sep}. Depending on the parameter regime, this can yield faster algorithms. 
\end{remark}

\section{Algorithm Based on Multiplicative Weights Update} \label{sec:pmwu}

A second approach we propose is based on on the \emph{private multiplicative weights update (PMWU)} framework~ \citep{HardtR10,HardtLM12}. This approach yields FPT with optimal error rates for DP synthetic data generation in the sparse dataset regime (i.e., $n$ is small).

Once again, the main obstacle for making PMWU efficient is the cost of maintaining and updating a list of $2^{d}$ histogram variables. Our solution to this problem is surprisingly simple:  maintain these distributions \emph{implicitly}, and only subsample from them when needed.  See \Cref{alg:IPMWU} for a complete description.  Note that \textsc{IPMWU} is same as the standard PMWU except that we sample $X_t$ from the Gibbs distribution using our novel algorithm $\textsc{Sample}(z,\beta)$; see \Cref{sec:Gibbs_sampling}.

\begin{algorithm}[H]
\small

\SetKwInOut{KwParams}{Parameters}

\SetAlgoNoEnd

\caption{$\textsc{ImplicitPrivateMultiplicativeWeights}_{{\cal F}, \eps,\delta}(X)$ \ ($\textsc{IPMWU}_{{\cal F}, \eps,\delta}$)}
\label{alg:IPMWU}

\KwParams{
${\cal F}$: a class of Boolean functions described as rOBDD; $\eps,\delta$: privacy parameters
}
\KwIn{
Dataset $X\in (\{0,1\}^d)^{\ast}$
}
\KwOut{
Dataset $\hat X\in (\{0,1\}^d)^{\ast}$
}

\textbf{Initialization:} $z\gets 0\in \mathbb{R}^{\cal F}$: weights; $\eta>0$: stepsize; $\tau>0$: threshold; $m$: sample size

$\epsstep \gets \frac{\eps}{2T}$ if $\delta = 0$; otherwise, $\epsstep = \frac{\eps}{\sqrt{32T\ln(1/\delta)}}$

\For{$t=1,\ldots,T=\lceil 20d/\tau^2 \rceil + 1$}
{
$X_t \gets (\mbox{\textsc{Sample}}(z,\eta))^m$ \tcp*[f]{simulate $m$ samples from Gibbs distribution}

$(\hat f_t,s_t) \gets 
\mbox{\textsc{ExponentialMechanism}}(X,\score((\cdot,\cdot),X), \epsstep)$ \\
\qquad where for $f\in {\cal F}$, $s\in \{-1,+1\}$ and $X\in \domain^{\ast}$, $\score((f,s),X)\triangleq s[f(X_t)-f(X)]$

\If{{\em $s_t[\hat f_t(X_t)-\hat f_t(X)]+\mbox{Lap}\Paren{\nicefrac{1}{n\epsstep}}<\tau$}} {
    \Return{$\hat X=X_t$}
}
$z_{\hat f_t}\gets z_{\hat f_t}+s_t$ 
}
\Return{FAIL}
\end{algorithm}
 
The DP guarantee of \textsc{IPMWU} follows directly from the privacy guarantees of the exponential mechanism, Laplace mechanism, and the composition theorem (see \Cref{app:basics_DP}):

\begin{proposition}
    Algorithm $\textsc{IPMWU}_{{\cal F}, \eps, \delta}$ satisfies $(\eps,\delta)$-DP.
\end{proposition}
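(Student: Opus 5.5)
The plan is to view \textsc{IPMWU} as an adaptive composition of $2T$ mechanisms that touch the private dataset $X$. In each round $t$ these are the exponential mechanism selecting $(\hat f_t,s_t)$ and the Laplace-noised threshold test. Everything else is post-processing or independent of $X$, so it costs no privacy. This includes the Gibbs samples $X_t=(\textsc{Sample}(z,\eta))^m$ and the weight update $z_{\hat f_t}\gets z_{\hat f_t}+s_t$, since $z$ is a function only of previously released outputs $(\hat f_1,s_1),\dots,(\hat f_{t-1},s_{t-1})$ and the algorithm's own randomness. Conditionally on the transcript so far, $X_t$ is a fixed dataset for the purposes of the round-$t$ privacy analysis.

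First I bound sensitivities. For fixed $X_t$, the score $s[f(X_t)-f(X)]$ changes by at most $1/n$ when $X$ is replaced by a neighbor, because $f$ is $\{0,1\}$-valued and $f(X)$ is an average of $n$ terms. By the standard guarantee of the exponential mechanism (\Cref{app:basics_DP}), run with parameter $\epsstep$ and sensitivity $1/n$, each selection step is $\epsstep$-DP. The statistic $s_t[\hat f_t(X_t)-\hat f_t(X)]$, with $(\hat f_t,s_t)$ already released, also has sensitivity $1/n$. Adding $\Lap(1/(n\epsstep))$ therefore makes the threshold comparison $\epsstep$-DP by the Laplace mechanism and post-processing. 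So each round is $2\epsstep$-DP, and the whole run is an adaptive composition of $2T$ mechanisms, each $\epsstep$-DP. Stopping early only truncates the composition, so it does not hurt the bound.

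Finally I compose. If $\delta=0$, basic composition gives $2T\cdot\epsstep=\eps$, i.e., $\eps$-DP. If $\delta>0$, I apply advanced composition to $k=2T$ mechanisms, which gives $\bigl(\sqrt{2k\ln(1/\delta)}\,\epsstep+k\epsstep(e^{\epsstep}-1),\delta\bigr)$-DP. With $\epsstep=\eps/\sqrt{32T\ln(1/\delta)}$, the first term is $\sqrt{4T\ln(1/\delta)}\cdot\eps/\sqrt{32T\ln(1/\delta)}=\eps/\sqrt8$. The second term is at most $2T\epsstep^2\cdot 2=4T\eps^2/(32T\ln(1/\delta))\le \eps/8$, using $\eps\le1$ and $\ln(1/\delta)\ge1$. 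The total is therefore below $\eps$.

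The main subtlety is arguing that the internal sampling $X_t$ does not consume privacy. The point to make carefully is that \textsc{Sample} depends only on $z$, which is a function of previously released outputs. The remaining step, matching constants in advanced composition, is routine.
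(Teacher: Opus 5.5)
Your proof is correct and follows the same route as the paper. The paper's argument is a single line: each round uses the exponential mechanism plus the Laplace threshold test, and basic or advanced composition over the $2T$ steps gives the bound. You carry this out in full: you argue that the Gibbs samples $X_t$ and the weight updates are post-processing of earlier released outputs, you compute the sensitivities as $1/n$, and your constants in advanced composition check out.
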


\subsection{Accuracy and Running Time of \textsc{IPMWU} (\Cref{alg:IPMWU})}

The correctness of IPMWU is based on the fact that it mirrors the classical PMWU method, with the only  difference being that the exponential mechanism is applied to a subsampled histogram at every step.  We show that, as long as the margin for all queries is approximately preserved, IPMWU provides essentially the same accuracy guarantees as PMWU.  

To generate the subsampled histogram, we use $\textsc{Sample}(z,\beta)$, described in \Cref{sec:Gibbs_sampling}. 
This subroutine generates i.i.d. samples from the Gibbs distribution 
$\Pr[Y=x] \propto \exp(-\beta\sum_{f\in \family} z_f f(x))$.
\begin{theorem} \label{thm:implicit_pmwu_accuracy}
For appropriate values of $\tau, m, \eta$ with $m, \nicefrac{1}{\tau} \leq \poly(d,|\family|,n)$, \Cref{alg:IPMWU} is $\alpha$-accurate with respect to $\family$ for
\[ \textstyle
\alpha\lesssim \Big(\frac{d}{n\eps} \ln\Paren{|\family|dn\eps} \Big)^{1/3} 
\mbox{ for pure-DP, and }
\alpha\lesssim \Big(\frac{\sqrt{d\ln(1/\delta)}}{n\eps} \ln\Paren{|\family|dn\eps} \Big)^{1/2} 
\mbox{ for approximate-DP.}
\]
\end{theorem}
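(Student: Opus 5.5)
The plan is to run the standard potential-function analysis of PMWU (as in \citep{HardtR10}) against the \emph{exact} Gibbs histograms $h_t$, where $h_t(x)\propto \exp(-\eta\sum_f z_f f(x))$ is the weight vector at round $t$. The subsampled datasets $X_t$ enter only through the score function, and we control them by \Cref{lem:subsampling}. Concretely, we fix a failure probability $\beta_0$ and take $m\gtrsim \tau^{-2}\ln(2|\family|T/\beta_0)$. Then, with probability $1-\beta_0$, for every round $t$ and every $f\in\family$ we have $|f(X_t)-\langle f,h_t\rangle|\le \tau/8$. We condition on this event $E_{\mathrm{samp}}$ and also on the event $E_{\mathrm{priv}}$ that every exponential-mechanism selection is within $\gamma\lesssim \frac{\ln(|\family|T/\beta_0)}{n\epsstep}$ of the maximal score, and that every Laplace variable has magnitude at most $\gamma$. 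Both events follow from the standard tail bounds recalled in \Cref{app:basics_DP}.

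The first step is the potential argument. We use the relative entropy $\Psi_t=\mathrm{KL}(h^X\|h_t)$. It satisfies $\Psi_1\le \ln|\domain|=d\ln 2$ and $\Psi_t\ge 0$. Suppose round $t$ does not terminate. Then the Laplace test gives $s_t[\hat f_t(X_t)-\hat f_t(X)]\ge \tau-\gamma$. On $E_{\mathrm{samp}}$ this transfers to the true histogram as $s_t[\langle \hat f_t,h_t\rangle-\hat f_t(X)]\ge \tau-\gamma-\tau/8$. The usual MWU computation, with update $z_{\hat f_t}\gets z_{\hat f_t}+s_t$ and step $\eta$, then gives
\[
\Psi_t-\Psi_{t+1}\ \ge\ \eta\, s_t\big(\langle \hat f_t,h_t\rangle-\hat f_t(X)\big)-\eta^2 \ \ge\ \eta(\tfrac{3\tau}{4}-\gamma)-\eta^2 .
\]
We choose $\eta\approx \tau/4$ and assume $\gamma\le \tau/8$, which makes each nonterminating round decrease $\Psi$ by $\Omega(\tau^2)$. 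Since $T=\lceil 20d/\tau^2\rceil+1$, the algorithm cannot run all $T$ rounds, so it terminates and never returns FAIL (up to constants in the choice of $T$).

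The second step shows that the output is accurate. At the terminating round, the noisy test gives $s_t[\hat f_t(X_t)-\hat f_t(X)]<\tau+\gamma$. By the exponential-mechanism guarantee, $\max_{f,s} s[f(X_t)-f(X)]\le \tau+2\gamma$, so the output $\hat X=X_t$ satisfies $\max_f|f(X_t)-f(X)|\le \tau+2\gamma\lesssim\tau$. Note that the error is measured directly on $X_t$, which is the released dataset, so no further sampling is needed. To pass to expected error we add $\beta_0$, using the fact that the error is always at most $1$, and take $\beta_0=\tau$.

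The final step balances the parameters. The constraint $\gamma\lesssim\tau$ reads $\tau\gtrsim \frac{\ln(|\family|T/\tau)}{n\epsstep}$. In the pure case $\epsstep=\eps/2T$ with $T\asymp d/\tau^2$, which gives $\tau^3\gtrsim \frac{d\ln(|\family|dn\eps)}{n\eps}$. In the approximate case $\epsstep\asymp\eps/\sqrt{T\ln(1/\delta)}$, which gives $\tau^2\gtrsim\frac{\sqrt{d\ln(1/\delta)}\ln(\cdot)}{n\eps}$. Both match the claimed rates. The logarithm $\ln(T/\tau)$ is absorbed into $\ln(|\family|dn\eps)$ because $1/\tau\le\poly(n\eps)$ in the nontrivial regime, and this also gives $m,1/\tau\le\poly(d,|\family|,n)$.

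I expect the main obstacle to be the first step. Selection and the stopping test are both performed on $X_t$ rather than on $h_t$, so the coupling must be uniform over all $T$ rounds, including rounds whose $h_t$ depend adaptively on earlier samples. This works because $X_t$ is drawn fresh from $h_t$ conditional on the past, so \Cref{lem:subsampling} applies at each round, and a union bound over $t$ finishes the argument. It must also be checked that privacy does not depend on $X_t$. It does not: $X_t$ is data-independent given the previous private outputs, so the score function has sensitivity $1/n$ as required.
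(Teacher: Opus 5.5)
Your proposal is correct and follows essentially the same route as the paper. Both condition on three good events: subsampling via \Cref{lem:subsampling} with a union bound over rounds, exponential-mechanism utility, and Laplace concentration. Both then use the MWU potential argument to rule out FAIL, read off accuracy $\tau+O(\gamma)$ at the terminating round, and balance $\tau$ against $1/(n\epsstep)$. The differences are cosmetic: you write out the KL-potential step that the paper delegates to a citation, use slack $\tau/8$ instead of $\tau/3$, and take failure probability $\beta_0=\tau$ instead of $\zeta=1/(|\family|dn\eps)$.
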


\begin{proof}
Let $\zeta = \frac{1}{|\family| d n \eps}$, $C$ be a sufficiently large constant, $\alpha = 2\tau$ where
\begin{align*}
\tau = 
\begin{cases}
C \cdot \Big(\frac{d}{n\eps} \ln\Paren{|\family|dn\eps} \Big)^{1/3} & \text{ if } \delta = 0 \\
C \cdot \Big(\frac{\sqrt{d\ln(1/\delta)}}{n\eps} \ln\Paren{|\family|dn\eps} \Big)^{1/2} & \text{ otherwise.} 
\end{cases}
\end{align*}
It is simple to verify that, for sufficiently large $C$, $\tau > \frac{6}{n\epsstep}\ln\big( \frac{3T}{\zeta}\big)$. Finally, let $m = \left\lceil \frac{1}{\tau^2} \cdot \ln \frac{6|\family|T}{\zeta} \right\rceil$.

Our analysis relies on bounding the probability of three ``good'' events. First, from~\Cref{lem:subsampling} and a union bound (and our choice of $m$), the following holds with probability at least $1 - \nicefrac{\zeta}{3}$:
\begin{equation} \label{eqn:subsampling}
\forall t\in[T], f\in \family: \textstyle |f(X_t)-\langle f,h_t\rangle| < \nicefrac{\tau}{3}.
\end{equation}
Secondly, for the exponential mechanism, by \Cref{prop:exp_mech} and a union bound (and $\tau > \frac{6}{n\epsstep}\ln\big( \frac{3T}{\zeta}\big)$) implies that,  with probability at least $1 - \nicefrac{\zeta}{3}$,
\begin{equation} \label{eqn:exp_mech}
\forall t\in[T], f\in (\family\cup-\family):\, f(X_t)-f(X) \leq \hat f_t(X_t)-\hat f_t(X) + \nicefrac{\tau}{3}. 
\end{equation}
Thirdly, for the noisy threshold queries, we use the concentration of the Laplace distribution. Let $\xi_t\sim \Lap(1/\epsstep)$ be the noise added at round $t$. By a union bound, with probability at least $1 - \nicefrac{\zeta}{3}$,
\begin{equation} \label{eqn:conc_Lap} 
\forall t\in[T]: |\xi_t|\leq \nicefrac{\tau}{3}. 
\end{equation}
Hence, with probability at least $1 - \zeta$, all three good events occur; for the rest of the proof, we assume this holds.  We first study the case where one of the ``if'' conditions in \Cref{alg:IPMWU} is met. If this happens at round $t\in[T]$, by \eqref{eqn:exp_mech} and \eqref{eqn:conc_Lap}, then for all $f\in\family$, $|f(X_t)-f(X)| < 5\tau/3$. 

Next, we argue that the algorithm never returns FAIL. Suppose for the sake of contradiction that this occurs. Then, by \eqref{eqn:subsampling} and \eqref{eqn:conc_Lap}, all queries $(\hat f_t)_{t\in[T]}$ have margin $\tau/3$ (with respect to the Gibbs distribution) at each iteration. Thus, by a standard analysis of MWU \citep{Mohri:2018}\footnote{See also \cite[Claim 3.3]{Ghazi:2025}; our MWU analysis is written in a similar manner to theirs.}, after at most $T - 1 = \Theta(d/\tau^2)$ iterations, the Gibbs distribution must be $(\tau/3)$-accurate. As such, we must satisfy the ``if'' condition in the next iteration, a contradiction.

To conclude, the output is $(5\tau/3)$-accurate with probability at least $1 - \zeta$. Thus, the expected error of the algorithm is at most $(5\tau/3) + \zeta \leq \alpha$ as desired.
\end{proof}

To analyze the efficiency of \Cref{alg:IPMWU}, note that the exponential mechanism can be implemented in time $O(n\cdot|\family|\cdot\size(\family))$, if $\family$ is provided in rOBDD form; see \Cref{app:basics_DP}.
Therefore, the running time of \Cref{alg:IPMWU} is dominated by that of the sampling procedure. In \Cref{sec:dynamic_program_FPT} and \Cref{sec:Gibbs_sampling}, we prove the existence of an FPT sampler, which yields the following: 
\begin{theorem} \label{thm:IPMWU_run_time_accuracy}
For appropriate values of $\tau, m, \eta$,
\Cref{alg:IPMWU} is $(\eps,\delta)$-DP, $\alpha$-accurate for $\alpha$ as in \Cref{thm:implicit_pmwu_accuracy} and runs in time $\poly(d,|\family|,n,2^{\tw(\family)}, \size(\family))$. \end{theorem}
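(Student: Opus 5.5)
The plan is to split \Cref{thm:IPMWU_run_time_accuracy} into its three claims. Privacy and accuracy are already in hand. Privacy is the Proposition preceding \Cref{thm:implicit_pmwu_accuracy}, which holds for any choice of $\tau,m,\eta$, since \textsc{Sample} touches only the weights $z$ and never the data. Accuracy is exactly \Cref{thm:implicit_pmwu_accuracy}. I will fix $\tau,m,\eta$ to the values used in that proof. Concretely, $\tau$ is given by the stated formula, $m=\lceil \tau^{-2}\ln(6|\family|T/\zeta)\rceil$ with $\zeta=1/(|\family|dn\eps)$, $T=\lceil 20d/\tau^2\rceil+1$, and $\eta=\Theta(\tau)$ is the standard MWU stepsize. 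What remains is a running-time bound for this parameter setting.

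\textbf{Step 1: the iteration count and sample size are polynomial.} Since $\tau\ge C(d/(n\eps))^{1/3}$ in the pure case (and similarly in the approximate case), we get $1/\tau\le \poly(d,n,1/\eps)$. The statement of \Cref{thm:implicit_pmwu_accuracy} already records $m,1/\tau\le\poly(d,|\family|,n)$, treating $\eps$ as a constant or absorbing it as that theorem does. Hence both $T=O(d/\tau^2)$ and $m$ are polynomial. Moreover, each coordinate of $z$ changes by $\pm1$ per round, so every weight is an integer with $|z_f|\le T$. This means the Gibbs energies $\eta\sum_f z_f f(x)$ are bounded by $\eta T|\family|$, and they can be written with polynomially many bits.

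\textbf{Step 2: the cost of one round.} Each round consists of three parts.
\begin{enumerate}[(i)]
\item Drawing $m$ calls to \textsc{Sample}$(z,\eta)$. By the FPT sampler promised in \Cref{sec:dynamic_program_FPT} and \Cref{sec:Gibbs_sampling}, each call costs $\poly(d,|\family|,2^{\tw(\family)},\size(\family))$. That sampler first computes a tree decomposition of $G(\family)$ of width $O(\tw(\family))$ via a standard FPT approximation. It then evaluates the partition function by dynamic programming over bags, summing $2^{|X_t\cap[d]|}$ assignments per bag. Finally, it samples the variables bag by bag from root to leaves using the conditional partition functions.
\item Evaluating $f(X_t)$ and $f(X)$ for all $f\in\family$. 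This takes $O((n+m)|\family|\size(\family))$ time by walking the rOBDDs.
\item Running the exponential mechanism over $2|\family|$ candidates and one Laplace draw, both polynomial as noted in \Cref{app:basics_DP}.
\end{enumerate}
Multiplying the per-round cost by $T$ gives the claimed bound $\poly(d,|\family|,n,2^{\tw(\family)},\size(\family))$.

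\textbf{Main obstacle: exactness of the sampler.} \Cref{thm:implicit_pmwu_accuracy} assumes the $X_t$ are i.i.d.\ draws from the exact Gibbs distribution $h_t$. The partition function involves exponentials of real numbers, so the sampler has to be implemented with finite precision. My first approach is to note that the dynamic program uses only sums and products of terms $e^{-\eta z_f}$ with $|z_f|\le T$. I would compute these to $\poly$ bits of precision, so the sampled law is within total variation distance $\zeta/(3mT)$ of $h_t$. Coupling with exact samples then costs only an extra $\zeta/3$ in the failure probability of event \eqref{eqn:subsampling}, and the rest of the proof of \Cref{thm:implicit_pmwu_accuracy} goes through with $\zeta$ replaced by a constant multiple of itself. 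The error bound changes only by constants. If the paper's sampler is exact in a real-RAM model, this step is not needed.
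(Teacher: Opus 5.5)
Your proposal is correct and follows the paper's own argument. Privacy comes from the Proposition preceding \Cref{thm:implicit_pmwu_accuracy}, accuracy from \Cref{thm:implicit_pmwu_accuracy}, and the running time from polynomially many rounds $T$, each dominated by $m$ calls to the FPT sampler of \Cref{thm:Gibbs_sampler_correctness_runtime} plus the $O(n\cdot|\family|\cdot\size(\family))$ exponential mechanism. Your finite-precision coupling step is a sound addition that the paper does not need, because it treats \textsc{Sample} as an exact sampler under exact arithmetic.
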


\section{Improved Error Bounds for  Bounded Treewidth Query Families}
\label{sec:error_tw}

The previous results show running time improvements under bounded treewidth families. We now observe that error rates are also improvable, obtaining error bounds that are polynomial on the number of features and privacy parameters, and exponential in terms of treewidth. Indeed, one can leverage the fact that for query families with treewidth $w$ most queries have scope sizes $O(w)$, and hence those can be estimated by $O(w)$-way marginals. Combining those marginals with the remaining large scope marginals provides a compact query family with treewidth $O(w)$, that can be used to generate synthetic data.

\begin{proposition}
    There exists an $(\varepsilon,\delta)$-DP algorithm with running time \\$\poly(|\family|,n,d^{\tw(\family)},2^{\tw^2(\family)}, \size(\family))$ that produces a synthetic dataset with error $\alpha$ where 
    \[ \alpha\lesssim \frac{(6d)^{2w+1}}{n\varepsilon} 
    \mbox{ for pure-DP, and }
    \alpha\lesssim \frac{2^{2w}\sqrt{(3d)^{2w+1}\ln(1/\delta)}}{n\varepsilon} 
    \mbox{ for approximate-DP.}
    \]
\end{proposition}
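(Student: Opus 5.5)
Write $w=\tw(\family)$ and fix a tree decomposition $(T,\{X_t\})$ of $G(\family)$ of width $w$; one can be computed in FPT time. The plan is to replace $\family$ by a surrogate family $\family'$ whose size is polynomial in $d^{w}$, whose treewidth is $O(w)$, and whose answers determine every $f(X)$ exactly by a fixed linear map with small coefficients. We then answer $\family'$ with the Laplace mechanism (pure-DP) or the Gaussian mechanism (approximate-DP), and feed the noisy answers into \Cref{thm:DP_synth_data_noisy_LP} applied to $\family'$. Accuracy on $\family$ then follows because each $f\in\family$ is a bounded combination of queries in $\family'$.

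\textbf{Step 1: classify queries by scope size.} A query $f$ is adjacent in $G(\family)$ exactly to the variables in $\scope[f]$. Call $f$ \emph{small} if $|\scope[f]|\le w+1$ and \emph{large} otherwise.
\begin{itemize}
\item A small $f$ is determined by the marginal of the data on $S=\scope[f]$. Include in $\family'$ all conjunction queries $x\mapsto\prod_{i\in S'}x_i$ for $S'\subseteq S$. Then $f(X)=\sum_{S'}c_{S'}\,q_{S'}(X)$ by Möbius inversion of the truth table, with $\sum|c_{S'}|\le 2^{w+1}\cdot 2^{w+1}=2^{2w+2}$. This is the source of the $2^{2w}$ factor.
\item For large $f$, the number of such queries cannot be bounded directly, so we handle them through the bag containing $f$.
\end{itemize}

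\textbf{Step 2: reduce every query to bag marginals.} I expect this step to be the main obstacle. The count $(6d)^{2w+1}$ suggests that $\family'$ consists of all marginals on variable sets of size at most $2w+1$. The crude count $\sum_{j\le 2w+1}\binom{d}{j}2^j$ is at most $(3d)^{2w+1}$, and with signs/complements this becomes $(6d)^{2w+1}$.

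For a large $f$, a bag $X_t$ containing $f$ holds at most $w$ of its scope variables, yet $f$ can still depend on many variables spread across the tree. My first attempt is the following. Treat $f$ as a constraint. The DP of \Cref{sec:dynamic_program_FPT} expresses $\mathbf{1}[f(x)=1]$ as a sum over tree-decomposition states, where each state is a partial assignment to a bag's variables together with intermediate rOBDD node indices. This expresses $f$ as a product of factors over bags of size $\le w+1$. I would then argue that, once the synthetic data matches all marginals on unions of two adjacent bags (size $\le 2w+2$), the value of $f$ is determined. This holds if the synthetic histogram is taken to be the junction-tree (max-entropy) distribution consistent with those marginals.

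That is not automatic for arbitrary histograms, so I would instead change what is released:
\begin{itemize}
\item release noisy adjacent-bag marginals;
\item let the post-processing output the junction-tree distribution built from them (after projecting to consistency via the LP of \Cref{sec:noisy_LP} on $\family'$, whose incidence treewidth is $O(w)$);
\item sample from it as in \Cref{sec:Gibbs_sampling}.
\end{itemize}
For this distribution, any query expressed via the tree-structured factorization has error bounded by the sum of marginal errors along the tree. Controlling this sum, and bounding the number of marginals by $d^{O(w)}$, is the delicate part.

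\textbf{Step 3: noise and error.} For pure-DP, the marginal vector on $\family'$ has $\ell_1$-sensitivity $\le 2|\family'|/n$. Laplace noise then gives per-query error $\lesssim |\family'|/(n\eps)\le (6d)^{2w+1}/(n\eps)$ after absorbing the combination coefficients, and the LP step preserves this up to constants by \Cref{thm:DP_synth_data_noisy_LP}. For approximate-DP, the Gaussian mechanism gives $\sqrt{|\family'|\ln(1/\delta)}/(n\eps)$, and multiplying by the $2^{2w}$ coefficient bound yields the stated rate.

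For the running time, \Cref{lem:dual_LP} on $\family'$ gives $\poly(d,|\family'|,2^{\tw(\family')})$ with $|\family'|\le d^{O(w)}$. The $2^{\tw^2}$ term accommodates $\tw(\family')=O(w^2)$ in the worst case, arising from conjunctions over subsets of bag-unions in the incidence graph.
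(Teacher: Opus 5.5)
Your overall architecture matches the paper's. You release a surrogate family with Laplace or Gaussian noise, feed it to \Cref{thm:DP_synth_data_noisy_LP}, pay a $2^{O(w)}$ factor for decomposing small-scope queries into marginals, and accept $\tw=O(w^2)$ for the surrogate family. However, Step 2 has a genuine gap, exactly where you flagged it. Large-scope queries cannot be recovered from bag marginals, because $f(X)$ is in general not a function of the low-order marginals of $X$ at all. Take $\family=\{f\}$ with $f(x)=x_1\oplus\cdots\oplus x_d$. It has an rOBDD of size $O(d)$ and a star incidence graph, so $\tw(\family)=1$. The uniform distribution on $\{0,1\}^d$ and the uniform distribution on even-weight strings agree on every marginal of order $<d$, yet $f$ equals $1/2$ on the first and $0$ on the second. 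Hence no post-processing of noisy bag marginals can be accurate on $f$, whether junction-tree, max-entropy, or an LP over those marginals. Writing $\mathbf{1}[f(x)=1]$ as a product of bag factors via rOBDD states describes $f$ as a function. It does not make the data average of that product a function of the factors' marginals. Likewise, ``error bounded by the sum of marginal errors along the tree'' would require the true data to be tree-structured, which it is not.

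The missing idea is that large-scope queries are \emph{few}, so they can simply be released directly. A graph of treewidth $w$ is $w$-degenerate, so every subgraph on $N$ vertices has at most $wN$ edges. Apply this to the subgraph of $G(\family)$ on $[d]\cup\family_{\mathrm{large}}$, where $\family_{\mathrm{large}}$ is the set of queries with $|\scope[f]|>2w+1$. This gives $(2w+2)|\family_{\mathrm{large}}|\le w(d+|\family_{\mathrm{large}}|)$, hence $|\family_{\mathrm{large}}|<d$. (Your threshold $w+1$ would also work, giving at most $wd/2$ large queries.) The paper keeps these queries verbatim in the released family $\overline{\family}=\family_{\mathrm{large}}\cup\family''$, where $\family''$ consists of the satisfying-assignment indicators of the small-scope queries. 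Then $|\overline{\family}|\le(3d)^{2w+1}$, and the LP enforces accuracy on large queries directly. The $O(w^2)$ treewidth comes from adding each replaced query's scope variables to the bags that contained it, as in \Cref{lem:bound_tw_marginals}, with large queries left in place. This counting step is essential rather than convenient: the stated error does not depend on $|\family|$, so without bounding the number of large queries you cannot afford to release them. Finally, do not include marginals on \emph{every} $(2w+1)$-subset of $[d]$ to match the $(6d)^{2w+1}$ count, since that would destroy the treewidth bound. Include only marginals on scopes of actual queries; $\binom{d}{2w+1}2^{2w+1}$ is merely an upper bound on how many there are.
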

Note that for the bound above we ignore two of the three error-rate regimes discussed in the introduction. We ignore the small domain regime as it is uninteresting in this context, and the sparse regime as the resulting rate is suboptimal.
\begin{proof}
    We start by pre-processing the queries. First, we recall the fact that if $\tw({\cal F})\leq w$, then it is $w$-degenerate \citep{Cygan:2015} (this means that every subgraph has a node with degree at most $w$). We now prove that the number of queries with scope larger than $2w+1$ is at most $d$. For this argument, let ${\cal F}'$ be the set of such functions, and let $G'=G({\cal F'})$ which is a subgraph of $G({\cal F})$. By degeneracy, the average degree of this subgraph is bounded by $2w$, hence
    \[ \sum_{f\in{\cal F}'}\scope[f] \leq 2w(d+|{\cal F}'|). \]
    On the other hand, this sum is at least $(2w+1)|{\cal F}'|$. Therefore, $|{\cal F}'|\leq d$.
    
    Second, for each of the remaining queries in ${\cal F}\setminus {\cal F}'$, we decompose this function as the sum of indicators of variable assignments in its scope with output 1. Notice that each of these indicators is a $(2w+1)$-way marginal, and that the union of all such marginals has cardinality at most $\binom{d}{2w+1}\cdot2^{2w+1}\leq (2d)^{2w+1}$. Call this set ${\cal F}''$. Our pre-processed set of queries is $\overline{\cal F}={\cal F}'\cup{\cal F}''$. 
    Notice further that $|\overline{\cal F}|\leq (3d)^{2w+1}$, and that $\overline{\cal F}$ has incidence graph treewidth $4w(w+1)-1$: we defer the proof of the last fact to Lemma \ref{lem:bound_tw_marginals} in \Cref{app:improved_err_tw}.
    
    Apply Laplace or Gaussian (depending on the value of $\delta$) to the queries $\overline{\cal F}$, together with the algorithm from \Cref{thm:DP_synth_data_noisy_LP}. This algorithm runs in time $\mbox{poly}(d, |\overline{\cal F}|, n,2^{w^2}, \size({\cal F}))=\mbox{poly}(n,d^{w},2^{w^2}, \size({\cal F}))$, where we used the bound on $|{\cal F}''|$. The obtained synthetic data is $\alpha$-accurate w.r.t. ${\cal F}''$, which implies $(2^{2w+1}\alpha)$-accuracy w.r.t. ${\cal F}$, by the triangle inequality and the aforementioned decomposition of queries into marginals.
\end{proof}
From this analysis, two natural questions arise: (i) is there a treewidth-based upper bound   independent of the number of variables and (ii) is there an error bound that is subexponential in  treewidth?  We now show both these questions have negative answers.  

For (i), note that such a bound is not possible for one-way marginals. This is a function family with treewidth 1 but its optimal error rates are $\frac{d}{n\varepsilon}$ (for pure-DP) and $\frac{\sqrt{d\ln(1/\delta)}}{n\varepsilon}$ (for approximate-DP) \citep{Vadhan:2017}.

For (ii), note that for the family of all Boolean functions (for which $|{\cal F}|=2^{2^d}$ and its incidence graph treewidth is $d$), the optimal error rates are  exponential. Under pure-DP, packing arguments imply a lower bound $\alpha=\Omega\Big(\frac{2^d}{n\varepsilon}\Big)$   \citep{HardtTalwar:2010};  for approximate-DP, the hereditary discrepancy method yields a lower bound $\alpha=\Omega\Big(\frac{2^{d/2}}{n\varepsilon}\Big)$ 
\citep{Muthukrishnan:2012, Nikolov:2013}
 
\section{Dynamic Programming Algorithms for Optimization and Inference}

\label{sec:dynamic_program_FPT}

Following existing approaches, we will unify the algorithms for optimization and inference over the Boolean hypercube \citep[see, e.g.,][Chapter 20]{Murphy:2012}. Given a class ${\cal F}$ of Boolean functions, we consider the problems of computing the following values:
\[
    \max_{x\in \{0,1\}^d} \sum_{f\in {\cal F}}  y_f \cdot f(x) \qquad \mbox{and} \qquad
    \sum_{x\in \{0,1\}^d} \prod_{f\in {\cal F}}\exp\Big(-\beta r_f \cdot f(x) \Big), 
\]
where $(y_f)_{f\in {\cal F}}$ is a given real-valued vector,  $(r_f)_{f \in \cal F}$ is an integer-valued vector, and $\beta>0$ is a parameter. Notice the first one corresponds to evaluating the optimal value of an optimization problem over the Boolean hypercube, whereas the second one corresponds to the evaluation of the normalizing constant for a Gibbs distribution with potential $-r_f\sum_{f\in {\cal F}}f(x)$ (known as the partition function). Both problems are intractable in general. 

If we denote the maximum operator by $\oplus$ and the sum by $\otimes$ for the first problem, and the sum by $\oplus$ and the product by $\otimes$ for the second problem, 
we can express both problems commonly as
\begin{equation} \label{eqn:compute_opt_partition}
    \bigoplus_{x\in \{0,1\}^{d}} \bigotimes_{f\in {\cal F}} \psi(z_f \cdot f(x)),
\end{equation}
where $(z_f)_{f\in{\cal F}}$ is a real-valued vector, and $\psi(x) = x$ for the optimization case, and $\psi(x)=\exp(x)$ for the inference case. We conclude by noting that the algorithm we present is able to tackle these two problems under the same template, as both objectives enjoy a {\em commutative semiring} structure (see \Cref{app:semirings}); we denote by $\zero$ the identity element for $\oplus$ and $\one$ the identity element for $\otimes$. 

The following subclass of tree decompositions is useful for the dynamic program.
\begin{definition}[Nice tree decomposition]
We say that a tree decomposition $(T,\{X_t\}_{t\in V(T)})$ is \emph{nice} if it is a rooted tree with root $r$; $X_r=\emptyset$ and $X_\ell=\emptyset$ for every leaf $\ell$ of $T$; and every non-leaf node $t\in V(T)$ is one of the following three types:
\begin{itemize}[nosep]
        \item \textsl{Introduce node:} a node $t$ with exactly one child $t'$ such that $X_t=X_{t'}\cup\{v\}$ for some $v\notin X_{t'}$; in this case, we say that $v$ is \emph{introduced} at $t$.
        \item \textsl{Forget node:} a node $t$ with exactly one child $t'$ such that $X_t=X_{t'}\setminus\{v\}$ for some $v\in X_{t'}$; in this case, we say $v$ is \emph{forgotten} at $t$.
        \item \textsl{Join node:} a node $t$ with exactly two children $t_1,t_2$ such that $X_t=X_{t_1}=X_{t_2}$.
    \end{itemize}
\end{definition}
We note that the problem of computing a tree decomposition of width $O(\tw({\cal F}))$ can be solved in time $O(2^{\tw({\cal F})}(|{\cal F}|+d))$ \citep{Korhonen:2021}. Furthermore,  given a tree decomposition, one can efficiently compute a nice tree decomposition with the same width \cite[Lemma 7.4]{Cygan:2015}. Thus, we henceforth assume access to a nice tree decomposition of the instance, and describe our dynamic programming algorithm for problem \eqref{eqn:compute_opt_partition}.
 
Let $(T,\{X_t\}_{t\in V(T)})$ be a nice tree decomposition of width $w$ of the incidence graph $G=G({\cal F})$. The algorithm proceeds from the leaves to the root, computing a function that we specify below. 
For $t\in V(T)$, let $T_t$ the subtree of $T$ with root $t$ and let $X_t^{\downarrow}\triangleq\bigcup_{s\in V(T_t)} X_s$ be the union of all the bags in the subtree $T_t$. We also define the bag restriction to features, $I_t \triangleq X_t\cap [d]$, and $I_t^{\downarrow}\triangleq X_t^{\downarrow}\cap [d]$, as well as the bag restriction to functions, ${\cal F}_t\triangleq X_t\cap {\cal F}$.

At each node $t$ we have a {\em state} $\sigma=(x,u) \in \{0, 1\}^{X_t}$, where $x=(x_i:i\in I_t) \in \{0,1\}^{I_t} \mbox{ and } u=(u_f:f\in {\cal F}_t) \in \{0, 1\}^{{\cal F}_t},$  
and $x_i\in \{0,1\}$ denotes the assignment of Boolean variable $x_i$, whereas $u_f\in \{0,1\}$ denotes a possible value for $f$, based on the partial assignment $x$.

One should think of a state $\sigma$ as a restriction of $(y, f(y)) \in \{0, 1\}^{[d] \cup \family}$ on $X_t$ for some $y \in \domain$. Notice that such a restriction will not span all of $\{0, 1\}^{X_t}$; indeed some of the states $\sigma$ are impossible to obtain from any restriction. This is captured by our definition of (in)consistency below: 

\begin{definition}[Residuals and Consistency]
Let $R_f(t,x)\triangleq \{f(y):\, \exists y\in \{0,1\}^d,\, y|_{I_t}=x\}$ be the set of {\em residuals} of $f$ at $(t,x)$.  
We say that a state $\sigma = (x, u)$ at node $t$ is {\em consistent} if for all $f\in {\cal F}_t$, $u_f\in R_f(t,x)$.
\end{definition}

We observe that, for any function in a bag, any variable in its scope that has not been evaluated below must be included in the current bag. In particular,  the consistency condition can be verified for any function at a given bag (see \Cref{lem:scope_bag_function} for the formal justification).\begin{corollary} \label{cor:verify_residuals}
The residuals can be evaluated with only downward information, i.e., $R_f(t,x)= \{f(y): \exists y\in \{0,1\}^{I_t^{\downarrow}},\, y|_{I_t}=x\}.$
\end{corollary}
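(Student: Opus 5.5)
The plan is to reduce the corollary to a structural fact about tree decompositions, namely the content of \Cref{lem:scope_bag_function}: for $f\in{\cal F}_t$, every variable of $\scope[f]$ either lies in $I_t$ or does not occur in $X_t^{\downarrow}$ at all, or more precisely $\scope[f]\cap I_t^{\downarrow}\subseteq I_t$. Granting this, the two sets in the corollary can be compared directly. The inclusion ``$\subseteq$'' is immediate: any $y\in\{0,1\}^d$ with $y|_{I_t}=x$ restricts to $y|_{I_t^{\downarrow}}$, which agrees with $x$ on $I_t$. This uses only the convention that $f$ is evaluated on partial assignments whose domain covers the relevant scope. For the reverse inclusion, given a partial $y\in\{0,1\}^{I_t^{\downarrow}}$ with $y|_{I_t}=x$, extend it arbitrarily to a full $\tilde y\in\{0,1\}^d$; then $\tilde y|_{I_t}=x$ and $f(\tilde y)$ lies in $R_f(t,x)$.

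The subtle point is that the right-hand side has to be interpreted as the set of values of $f$ over all completions of $y$ outside $I_t^{\downarrow}$. Equivalently, one should prove that both sets coincide with $\{f(\tilde y):\tilde y|_{I_t}=x\}$, which is literally $R_f(t,x)$. The real content is therefore that the free coordinates are not coupled: the coordinates of $\scope[f]$ outside $I_t$ all lie outside $I_t^{\downarrow}$, so fixing downward information beyond $I_t$ imposes no additional constraint on $f$. This is what makes checking consistency of $u_f$ a local computation. By the structural fact, $\scope[f]\subseteq I_t\cup([d]\setminus I_t^{\downarrow})$, so the coordinates in $I_t^{\downarrow}\setminus I_t$ are outside the scope and do not affect $f$. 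Hence the value set over $y$ extending $x$ on $I_t^{\downarrow}$ and then completed equals the value set over full extensions of $x$.

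The main step, and the one I expect to be the obstacle, is proving the structural lemma itself. Suppose $i\in\scope[f]\cap (I_t^{\downarrow}\setminus I_t)$ with $f\in X_t$. Since $i\in X_s$ for some $s$ in $T_t$ and $i\notin X_t$, connectivity (property (iii)) forces every bag containing $i$ to lie strictly inside $T_t$, below $t$. The edge $\{i,f\}$ must be covered by some bag $X_{s'}$ by property (ii), so $s'$ lies in $T_t\setminus\{t\}$. Since $f\in X_{s'}\cap X_t$, connectivity for $f$ gives $f\in X_{s''}$ for every $s''$ on the path from $t$ to $s'$.

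This alone is not a contradiction, because $i$ is genuinely forgotten below $t$ while $f$ persists. So the claim as I stated it is false. The corrected formulation is the one the corollary needs: the coordinates of $\scope[f]$ not in $I_t^{\downarrow}$ never appear below $t$. This follows from the same connectivity argument applied to the complementary case. Any $i\in\scope[f]\setminus I_t^{\downarrow}$ is covered together with $f$ in a bag outside $T_t$. Consequently, for the equality of value sets I will argue directly that extending an arbitrary $y$ on $I_t^{\downarrow}$ and then freely varying coordinates outside $I_t^{\downarrow}$ realizes exactly the residual set, with the choice on $I_t^{\downarrow}\setminus I_t$ ranging over all possibilities on both sides. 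The two sets are then equal by rewriting the quantifier $\exists \tilde y\in\{0,1\}^d$ as $\exists y\in\{0,1\}^{I_t^{\downarrow}}\,\exists z\in\{0,1\}^{[d]\setminus I_t^{\downarrow}}$.
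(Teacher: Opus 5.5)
\textbf{There is a genuine gap: your final argument proves only a tautology, not the corollary.} You read $f(y)$, for $y\in\{0,1\}^{I_t^{\downarrow}}$, as ``the values of $f$ over all completions of $y$''. Under that reading the identity holds with $I_t^{\downarrow}$ replaced by any set containing $I_t$, and it uses no property of the tree decomposition. Its right-hand side also still ranges over coordinates outside $I_t^{\downarrow}$, so it says nothing about computing residuals from downward information.

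Under the paper's convention, $f(y)$ on $\{0,1\}^S$ is defined only when $\mathrm{scope}[f]\subseteq S$. The real content of the corollary is therefore the containment $\mathrm{scope}[f]\subseteq I_t^{\downarrow}$ for every $f\in\family_t$; once that holds, both inclusions follow from the restriction/extension argument you wrote. The paper obtains the containment from \Cref{lem:scope_bag_function}. That lemma concerns scope variables of $f\in X_t$ occurring \emph{outside} $T_t$, and asserts they must lie in $X_t$. It says nothing about variables forgotten below $t$, so the statement you first attributed to it and then refuted is not the paper's claim. Your ``corrected formulation'' (coordinates of $\mathrm{scope}[f]$ outside $I_t^{\downarrow}$ never appear below $t$) holds by the definition of $I_t^{\downarrow}$ and does not replace that containment.

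That said, your suspicion that connectivity alone cannot give the needed structure is right, and it applies to the paper's lemma too. Take $\family=\{f\}$ with $f(x)=x_1$, and the nice path decomposition whose bags, from the root down, are $\emptyset,\{f\},\{1,f\},\{f\},\emptyset$. At the lower node $t$ with $X_t=\{f\}$ we have $I_t^{\downarrow}=\emptyset$, while $1\in\mathrm{scope}[f]$ occurs only above $t$. So \Cref{lem:scope_bag_function} fails: the separator property only constrains edges leaving $A\setminus B$, and here $f\in A\cap B$. The right-hand side of the corollary is then undefined under the paper's convention.

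What connectivity does give is the special case where $f$ occurs in no bag outside $T_t$ (e.g.\ $t$ is the child of the node forgetting $f$). Then every bag covering an edge $\{i,f\}$ lies in $T_t$, so $\mathrm{scope}[f]\subseteq I_t^{\downarrow}$. For general $t$ an extra hypothesis on the decomposition is needed, and the issue is not cosmetic. Take $f=x_1\oplus x_2$ and the nice path decomposition $\emptyset,\{f\},\{f,2\},\{f\},\{f,1\},\{f\},\emptyset$. Every state is consistent, no bag pins down $u_f$, and the sum-product recursion of \Cref{alg:dp_nice_td} returns $4(1+e^{z_f})$ instead of $2(1+e^{z_f})$.
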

In summary, the residuals can be recursively and deterministically computed when traversing the tree from the leaves to the root. Each such update takes time $O(\size(f))$. 
Our recursion will aggregate these contributions throughout the tree in such a way that will prevent double counting. In order to achieve this, we maintain a table, denoted $\Potential$, and we will show inductively that it maintains a partial value leading to  \eqref{eqn:compute_opt_partition} over all partial feasible assignments, residuals and functions with unique values given the assignments. 
\begin{definition}[Forgotten Variables and Value Function] \label{def:value_function}
Let $(T,\{X_t\}_{t\in V(T)})$ be a tree decomposition of $G({\cal F})$. For $t\in V(T)$, let $C_t$ be the set of its children. If $s\in C_t$, let $\forget(s\to t) \triangleq X_s\setminus X_t$, and let $\forget_t^{\downarrow} \triangleq X_{t}^{\downarrow}\setminus X_t$. 
We define its value function $\Potential$ as follows: for $t\in V(T)$ and a consistent assignment $\sigma=(x,u)$,
\begin{equation}\label{eqn:potential_dynamic_program}
\Potential[t,\sigma] = \bigoplus_{y\in \{0,1\}^{I_t^{\downarrow}}:\, y|_{I_t}=x} \quad \bigotimes_{
f\in {\cal F}\cap \fforget_t^{\downarrow}} \psi(z_f \cdot f(y)).
\end{equation}
If $\sigma$ is inconsistent, then $\Potential[t,\sigma]=\zero$. 
\end{definition}
In particular, the value function at the (empty bag) root node evaluates \eqref{eqn:compute_opt_partition}. 
We also define the total contribution of functions that are forgotten at the current step.
\begin{definition}[Local Score]
    Let $\sigma^s=(x^s,u^s)$ be an assignment at node $s$ with parent node $t$. We define the \emph{local score} as follows:
    \[
\local(s\to t,\sigma^s) \triangleq \bigotimes \Big\{\psi(z_f \cdot u_f^s):\,\, f\in {\cal F}\cap \forget(s\to t) \Big\}.
    \]
    If ${\cal F}\cap\forget( s\to t)=\emptyset$, then $\local(s\to t,\sigma^s)=\one$.
\end{definition}
With this definition, we state the dynamic programming equation satisfied by the value function.

\begin{proposition} \label{prop:dynamic_program_consistent_assignment}
    Let $(T,\{X_t\}_{t\in V(T)}$ be a tree decomposition of $G$.  
    If $\sigma$ is consistent, then
    \begin{equation} \label{eqn:dyn_prog_rec}
    \Potential[t,\sigma] = \bigotimes_{s\in C_t} \bigoplus_{\sigma^s\in\{0,1\}^{X_s}: \sigma^s|_{X_s\cap X_t}=\sigma|_{X_s\cap X_t}}\local(s\to t,\sigma^s)\otimes \Potential[s,\sigma^s]. 
    \end{equation}
    Moreover, the value function \eqref{eqn:potential_dynamic_program} is uniquely determined by recursion \eqref{eqn:dyn_prog_rec}, with value $\one$ at the leaves.
\end{proposition}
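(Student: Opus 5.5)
We prove the recursion by unfolding the definition \eqref{eqn:potential_dynamic_program} and splitting the index sets according to the children of $t$. Along the way we use the standard structural facts of tree decompositions, which we derive from property (iii).

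\textbf{Disjointness.} For distinct children $s,s'\in C_t$, the sets $X_s^{\downarrow}\setminus X_t$ and $X_{s'}^{\downarrow}\setminus X_t$ are disjoint. Indeed, a vertex in both would have an occurrence set that is connected in $T$ and meets both subtrees, so it would have to pass through $t$ and hence lie in $X_t$.

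\textbf{Partition of the forgotten set.} It follows that $\forget_t^{\downarrow}$ is the disjoint union over $s\in C_t$ of $\forget_s^{\downarrow}\cup\forget(s\to t)$. This holds because $X_t^{\downarrow}\setminus X_t=\bigcup_s (X_s^{\downarrow}\setminus X_t)$, and $X_s^{\downarrow}\setminus X_t=(X_s^{\downarrow}\setminus X_s)\cup(X_s\setminus X_t)$. The first part lies outside $X_t$ by connectivity: a vertex in it would occur at $t$ and below $s$, hence also at $s$.

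\textbf{Splitting the feature assignments.} Restricting to features, $I_t^{\downarrow}$ is $I_t$ together with the disjoint sets $I_s^{\downarrow}\setminus I_t$. So an assignment $y\in\{0,1\}^{I_t^{\downarrow}}$ with $y|_{I_t}=x$ corresponds bijectively to a tuple $(y^s)_{s}$ with $y^s\in\{0,1\}^{I_s^{\downarrow}}$ agreeing with $x$ on $I_s\cap I_t$. Here we use that $I_s^{\downarrow}\cap I_t\subseteq I_s$, again by connectivity.

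\textbf{Each forgotten function depends on one child only.} For $f\in\forget_s^{\downarrow}\cup\forget(s\to t)$, we claim $\scope[f]\subseteq I_s^{\downarrow}$. By property (ii), each edge $\{i,f\}$ lies in some bag containing $f$. All such bags lie inside $T_s$, since $f$ does not occur at $t$ and its occurrence set is connected. Therefore $f(y)=f(y^s)$.

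\textbf{Applying the semiring laws.} By distributivity of $\otimes$ over $\oplus$ in the commutative semiring, the $\oplus$ over tuples of a $\otimes$ of factors, each depending only on its own $y^s$, factorizes as $\bigotimes_s\bigoplus_{y^s}$. It then remains to identify each child's factor with the inner sum of \eqref{eqn:dyn_prog_rec}.

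\textbf{Identifying the child factor.} We group $y^s$ by its restriction to $I_s$ and introduce the function values $u^s_f$ for $f\in{\cal F}_s$. The features in $I_s\setminus I_t$ are free, while those in $I_s\cap I_t$ are fixed by $x$. For $f\in{\cal F}\cap\forget(s\to t)$, the term $\psi(z_f f(y^s))$ becomes $\psi(z_f u^s_f)$, provided we restrict to the $u^s$ with $u^s_f=f(y^s)$. In general, for $f\in{\cal F}_s$ the value $f(y^s)$ is determined by $y^s$, since $\scope[f]\subseteq I_s^{\downarrow}$ by the same connectivity argument, now applied from the root side.

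This is the step I expect to be the main obstacle: the sum over $\sigma^s$ ranges over all $u^s$, not only over the "true" values $f(y^s)$. The coordinates $u^s_f$ for $f\in{\cal F}_s\cap{\cal F}_t$ are pinned to $\sigma$. For the intended identity to hold, a state $\sigma^s$ must contribute only through those $y^s$ with $f(y^s)=u^s_f$. So I would interpret $\Potential[s,\sigma^s]$ as implicitly restricting $y$ to those with $f(y)=u_f$ for $f\in{\cal F}_s$. Under that reading, consistency of $\sigma$ at $t$ and \Cref{cor:verify_residuals} ensure no spurious contributions. The key point is that inconsistent child states carry value $\zero$, which annihilates under $\otimes$ and is neutral under $\oplus$.

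\textbf{Uniqueness.} This follows by induction from the leaves, which have empty bags and value $\one$ (the empty product), up the tree. Since \eqref{eqn:dyn_prog_rec} expresses $\Potential[t,\cdot]$ entirely in terms of the children's tables, the whole table is determined.
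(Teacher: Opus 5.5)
Your structural steps are sound:
\begin{itemize}
\item the disjointness of the sets $X_s^{\downarrow}\setminus X_t$;
\item the partition of $\forget_t^{\downarrow}$ into the sets $\forget_s^{\downarrow}\cup\forget(s\to t)$;
\item the splitting $y\leftrightarrow(y^s)_{s\in C_t}$;
\item the inclusion $\mathrm{scope}[f]\subseteq I_s^{\downarrow}$ for $f\notin X_t$;
\item distributivity.
\end{itemize}
Together these correctly give a product over children of sums over the \emph{assignments} $y^s$. The genuine gap is the step you flagged: turning these into sums over \emph{states} $\sigma^s$. For the statement as written, that step cannot be closed. By \eqref{eqn:potential_dynamic_program}, $\Potential[s,(x^s,u^s)]$ does not depend on $u^s$ once the state is consistent, and consistency ties $u^s$ only to $x^s$, not to the forgotten coordinates of $y$. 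Here is a counterexample. Let $d=1$ and ${\cal F}=\{f\}$ with $f(y)=y_1$, and take the nice decomposition whose bags, from the root down, are $\emptyset,\{f\},\{1,f\},\{1\},\emptyset$. At the node $a$ with bag $\{f\}$, both values of $u_f$ are consistent, and $\Potential[a,u_f]=\one\oplus\one=2$ in the sum--product semiring. So the right-hand side of \eqref{eqn:dyn_prog_rec} at the root is $2(1+e^{z_f})$, while $\Potential[r,\emptyset]=1+e^{z_f}$. The identity already fails at $a$, where the right-hand side equals $1$. The proposition is therefore false for the literal definition, and your last step in effect changes the definition.

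Your change does not work either. You justify that $f(y^s)$ is determined for all $f\in{\cal F}_s$ by $\mathrm{scope}[f]\subseteq I_s^{\downarrow}$. Connectivity only gives this for $f\notin X_t$. A function kept in $X_t$ can have scope variables introduced above $s$, and then ``$f(y)=u_f$'' is undefined. Worse, once a scope is spread over several bags, no reading with a single bit $u_f$ satisfies \eqref{eqn:dyn_prog_rec}. Take $f=y_1\vee y_2$ with bags $\emptyset,\{1\},\{1,f\},\{f\},\{2,f\},\{f\},\emptyset$ from the leaf up. Consider the introduce node $t$ with bag $\{2,f\}$ and child $s$ with bag $\{f\}$. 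Your reading makes $\Potential[t,(y_2,u_f)]$ count the $y_1$ with $y_1\vee y_2=u_f$, so it equals $1$ at $(0,1)$ and $2$ at $(1,1)$. Both states are consistent, yet the recursion forces both to equal $\Potential[s,1]$. The same problem appears in the decomposition of \Cref{prop:treewidth_laminar} for $S_f=\{1,2\}$ (bag $\{f\}$ with children $\{1,f\}$ and $\{2,f\}$): you would need $3$ to be a product of two restricted counts over one bit each, which is impossible. \Cref{cor:verify_residuals} cannot rescue this, since consistency at $t$ only sees $x=y|_{I_t}$.

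What is missing is either an extra hypothesis or a richer state. One hypothesis that works is that every $f$ shares a bag with all of $\mathrm{scope}[f]$. Consistency then pins $u_f$, and your reading makes every step go through, provided ``$f(y)=u_f$'' is replaced by ``$u_f$ is attainable given $y$''. Alternatively, enrich the state, e.g.\ record whether a disjunction is already satisfied by forgotten variables, as in incidence-treewidth model-counting algorithms. The paper omits this proof, and its node-by-node check in \Cref{app:pf_dynamic_program_correctness} passes over the same point, so it contains no argument your proposal is missing.
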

The proof of this proposition is straightforward and thus omitted. \Cref{alg:dp_nice_td} contains a detailed description of the dynamic program. It is not hard to see that \Cref{alg:dp_nice_td} computes the desired values correctly, as stated below; we defer the full proof to \Cref{app:pf_dynamic_program_correctness}.

\begin{algorithm2e}[h]
\small
\caption{\textsc{Dynamic Programming for Value Function}}
\label{alg:dp_nice_td}
\DontPrintSemicolon
\SetAlgoNoEnd
\SetKwInOut{KwIn}{Input}
\SetKwInOut{KwOut}{Output}
\SetKwFunction{Cons}{Consistent}
\SetKwFunction{Res}{Restrict}
\SetKwFunction{Ext}{Extend}

\KwIn{
Boolean function class ${\cal F}$ in rOBDD form;
Nice tree decomposition $(T,\{X_t\}_{t\in V(T)})$ of  $G(\mathcal F)$
with root $r$ and $X_r=\emptyset$; 
semiring $(\oplus,\otimes,\zero,\one)$; 
$\psi$ function and coefficients $(z_f)_{f\in\mathcal F}$; 
residual sets/oracle $R_f(t,x)$.
}
\KwOut{$\Potential[r,\emptyset]$}\vspace{-0.2cm}

\BlankLine
\textbf{Notation:} $\Cons(t,\sigma)$ iff $\forall f\in\mathcal F_t,\ u_f\in R_f(t,x)$, where $\sigma = (x, u)$.\;

\For{$t$ in post-order transversal}{
  \uIf{$t$ is a leaf}{    $\Potential[t,\emptyset]\gets \one$\;
  }
  \uElseIf{$t$ is an \emph{introduce} node with unique child $t'$ and $X_t=X_{t'}\cup\{v\}$}{
    \For{state $\sigma\in\{0,1\}^{X_t}$}{
      $\Potential[t,\sigma]\gets \zero$\;
      \uIf{$\Cons(t,\sigma)$}{
        $\sigma'\gets \Res(\sigma,X_{t'})$\;
        $\Potential[t,\sigma]\gets \Potential[t',\sigma']$\;
      }
    }
  }
  \uElseIf{$t$ is a \emph{forget} node with unique child $t'$ and $\forget(t'\to t)=\{v\}$}{
    \For{state $\sigma\in\{0,1\}^{X_t}$}{
      $\Potential[t,\sigma]\gets \zero$\;
      \uIf{$\Cons(t,\sigma)$}{
        \For{$b\in\{0,1\}$}{
          $\sigma_b\gets \Ext(\sigma, v=b)$ \tcp*[f]{state over $X_{t'}$}\;
          \eIf{$v\in[d]$ }{
            $T_b\gets \Potential[t',\sigma_b]$\;
          }{$T_b\gets \local(t'\to t,\sigma_b)\ \otimes\ \Potential[t',\sigma_b]$\;
          }
          $\Potential[t,\sigma] \gets  \Potential[t,\sigma] \ \oplus\ T_b$\;
        }
      }
    }
  }
  \uElse(\tcp*[f]{join node}){$t$ is a \emph{join} node with children $t_1,t_2$ and $X_t=X_{t_1}=X_{t_2}$\;
    \For{state $\sigma\in\{0,1\}^{X_t}$}{
      $\Potential[t,\sigma]\gets \zero$\;
      \uIf{$\Cons(t,\sigma)$}{
        $\Potential[t,\sigma]\gets \Potential[t_1,\sigma]\ \otimes\ \Potential[t_2,\sigma]$\;
      }
    }
  }
}

\end{algorithm2e}
 
\begin{theorem} \label{thm:correctness_dyn_prog}
    \Cref{alg:dp_nice_td} computes the value function \eqref{eqn:potential_dynamic_program}.
\end{theorem}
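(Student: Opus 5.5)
I will prove that, after processing node $t$ in post-order, $\Potential[t,\sigma]$ equals the right-hand side of \eqref{eqn:potential_dynamic_program} for every consistent $\sigma\in\{0,1\}^{X_t}$, and equals $\zero$ for inconsistent $\sigma$. The induction follows the post-order traversal of the nice tree decomposition. At the root, $X_r=\emptyset$, $I_r^{\downarrow}=[d]$ and $\forget_r^{\downarrow}\cap\family=\family$, so this gives exactly \eqref{eqn:compute_opt_partition}. The inconsistent case is immediate, since every branch of \Cref{alg:dp_nice_td} first sets $\zero$ and overwrites it only when $\Cons(t,\sigma)$ holds. Consistency itself is checkable locally by \Cref{cor:verify_residuals}.

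Throughout I use the following set bookkeeping. $I_t^{\downarrow}$ and $\forget_t^{\downarrow}$ decompose along children. By the connectivity property (iii) of tree decompositions, a vertex forgotten below $t$ never reappears above, and a vertex in two different child subtrees must lie in $X_t$. Hence the sets $\forget_{t_1}^{\downarrow}$ and $\forget_{t_2}^{\downarrow}$ at a join node are disjoint. The feature sets intersect only in $I_t$, so an assignment $y$ on $I_t^{\downarrow}$ extending $x$ splits uniquely into independent parts on $I_{t_1}^{\downarrow}$ and $I_{t_2}^{\downarrow}$. Distributivity of $\otimes$ over $\oplus$ in the commutative semiring then lets me factor the $\oplus$ over $y$.

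The cases are as follows.
\begin{itemize}
\item \emph{Leaf.} The sums are over the empty assignment and the products are empty, so the value is $\one$.
\item \emph{Introduce.} $\forget_t^{\downarrow}=\forget_{t'}^{\downarrow}$.
\begin{itemize}
\item If $v=i\in[d]$, then $i\notin I_{t'}^{\downarrow}$ by connectivity, so the sum over $y$ extending $x$ is the child's sum over $y$ extending $x|_{I_{t'}}$.
\item If $v=f\in\family$, then nothing changes in the sum, and the extra coordinate $u_f$ only affects consistency.
\end{itemize}
In both subcases the claimed value is $\Potential[t',\sigma']$. I must also check that $\sigma'$ is consistent, which holds because residual sets only shrink as $x$ is extended, so $R_f(t,x)\subseteq R_f(t',x|_{I_{t'}})$.
\item \emph{Join.} By the splitting above, the $\oplus$ factors into the $\otimes$ of the two children's values.
\item \emph{Forget, variable $v=i\in[d]$.} Here $I_t^{\downarrow}=I_{t'}^{\downarrow}$ and $\forget_t^{\downarrow}$ gains no functions. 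Summing over $y$ extending $x$ is $\bigoplus_b$ of sums over $y$ extending $(x,i=b)$, giving $\bigoplus_b\Potential[t',\sigma_b]$. Terms with inconsistent $\sigma_b$ contribute $\zero$, which is correct only if no $y$ contributes there. Since consistency only constrains the $u$-part, and the $u$-coordinates are not summed over in this case, I must make sure that the algorithm's extension keeps $u$ fixed and that the value function itself does not depend on $u$ when $\sigma$ is consistent.
\item \emph{Forget, function $v=f$.} This is where $\psi(z_f\cdot f(y))$ enters. Because $f$ is forgotten, all of $\scope[f]\subseteq I_{t'}^{\downarrow}$, so $f(y)$ is determined by $y$. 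I split the sum over $y$ according to $b=f(y)\in\{0,1\}$. For the sub-sum with $f(y)=b$ I need to identify $\local(t'\to t,\sigma_b)\otimes\Potential[t',\sigma_b]$, with $\local=\psi(z_f b)$.
\end{itemize}

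The main obstacle is precisely this last identification. As written, \eqref{eqn:potential_dynamic_program} sums over all $y$ extending $x$ without restricting $f(y)=u_f$. I would therefore strengthen the inductive invariant: define $\Potential[t,\sigma]$ with the additional constraint $f(y)=u_f$ for all $f\in\family_t$ whenever $\scope[f]\subseteq I_t^{\downarrow}$. This is the reading under which $u_f$ is the residual-determined value, and I would first check that \Cref{lem:scope_bag_function} guarantees that when $f$ is in a bag, every scope variable not yet seen lies in the bag. With that invariant, the forget-function case splits exactly by $b$. The introduce, join and forget-variable cases go through, since the constraint is carried consistently and consistency ensures that nonempty constraint sets correspond to $\sigma_b$ being consistent. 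The invariant reduces to \eqref{eqn:potential_dynamic_program} at the root.
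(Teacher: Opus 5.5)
You are right that the literal definition \eqref{eqn:potential_dynamic_program} does not depend on $u$ for consistent $\sigma$, so it cannot satisfy the forget-function update without double counting. The paper's proof simply asserts that each update realizes \eqref{eqn:dyn_prog_rec} and never addresses this. However, your repair fails, because \Cref{alg:dp_nice_td} does not maintain your strengthened invariant. The only place the algorithm ties $u_f$ to the data is the test $u_f\in R_f(t,x)$, and $R_f(t,x)$ depends only on the coordinates in the current bag. Once a scope variable of $f$ is forgotten, its value can no longer be correlated with $u_f$ or with scope variables introduced later.

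Concretely, take $d=2$ and $\family=\{f\}$ with $f=x_1\vee x_2$ (incidence graph the path $1$--$f$--$2$). Use the nice path decomposition whose bags, from leaf to root, are $\emptyset,\{1\},\{1,f\},\{f\},\{2,f\},\{f\},\emptyset$, and work in the sum--product semiring.
\begin{enumerate}
\item At $\{1,f\}$ only $(x_1,u_f)=(1,0)$ is inconsistent.
\item Forgetting $1$ therefore gives $\Potential=1$ for $u_f=0$ and $\Potential=2$ for $u_f=1$. Your invariant imposes no constraint yet, since $\mathrm{scope}[f]\not\subseteq\{1\}$, so it predicts $2$ for both. The term zeroed out in this forget-variable step does have a contributing $y$, so this case already breaks.
\item At the next introduce node your constraint $f(y)=u_f$ switches on. It predicts value $1$ at the state $(x_2,u_f)=(0,1)$, but the algorithm copies $2$.
\item Continuing, the root value is $1+4e^{z_f}$, while \eqref{eqn:compute_opt_partition} equals $1+3e^{z_f}$. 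The point $y=(0,0)$ is counted with $u_f=1$, because no bag ever contains $x_1$ and $x_2$ together.
\end{enumerate}

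So the gap is not in your bookkeeping but in the statement: for the algorithm as written the claim is false at the root, and no choice of invariant can close the induction. What the algorithm does compute (the same case analysis proves it) is the $\oplus$ over pairs $(y,u)$ on $X_t^{\downarrow}$ that extend $\sigma$ and are consistent on every bag of $T_t$, each weighted by $\bigotimes_{f\in\family\cap\forget_t^{\downarrow}}\psi(z_f u_f)$. This agrees with \eqref{eqn:compute_opt_partition} at the root under an extra hypothesis. One such hypothesis is that every $f$ lies in some bag together with all of $\mathrm{scope}[f]$. Consistency at that bag then forces $u_f=f(y)$, and your strengthened invariant goes through. Without such a hypothesis, the state must be enlarged to record the partial evaluation of each $f$ on its already-forgotten scope variables (e.g.\ a running OR bit for disjunctions). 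That changes both the algorithm and its running-time analysis. Note that the paper's own decomposition for laminar families places different elements of $S_f$ in different leaves, so this hypothesis fails there.
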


Finally, we analyze the running time of the dynamic program.  Note that it depends on the number of different states at each node, which is $2^{w+1}$ for a nice tree decomposition of width $w$. 
\begin{theorem} \label{thm:run_time_dynamic_program}
The running time of \Cref{alg:dp_nice_td} is $2^{w}\poly(|{\cal F}|,d,\size({\cal F}),w)$.
\end{theorem}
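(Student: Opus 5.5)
The bound follows from counting the work done by \Cref{alg:dp_nice_td} node by node and multiplying by the number of nodes of the nice tree decomposition.

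\textbf{Number of nodes.} By \cite[Lemma 7.4]{Cygan:2015}, a nice tree decomposition of width $w$ of a graph with $N$ vertices can be taken to have $O(wN)$ nodes. Here $N=d+|\family|$, so $|V(T)|=\poly(d,|\family|,w)$.

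\textbf{States per node.} At each node $t$ we have $|X_t|\le w+1$, so there are at most $2^{w+1}$ states $\sigma\in\{0,1\}^{X_t}$. Every branch of the algorithm loops over exactly these states.

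\textbf{Work per state.} I would bound the work in each node type as follows.
\begin{itemize}
\item \emph{Consistency check.} $\Cons(t,\sigma)$ needs, for each $f\in\family_t$ (at most $w+1$ functions), a test of whether $u_f\in R_f(t,x)$. By \Cref{cor:verify_residuals}, $R_f(t,x)$ only asks whether $f$, restricted by the partial assignment $x$ on $I_t\cap\scope[f]$, can output $0$ and whether it can output $1$, with all other variables free. On an rOBDD this is a reachability question. Fix the edges of the assigned variables, leave the unassigned variables' edges free, and check which terminal nodes are reachable from the root. This takes $O(\size(f))$ time, as the paper already asserts. So the consistency check costs $O(w\cdot\size(\family))$.
\item \emph{Restrict and extend.} $\Res$ and $\Ext$ cost $O(w)$ with a suitable indexing of the table, for example encoding states as bitmasks over a fixed ordering of $X_t$.
\item \emph{Forget nodes.} Each state does two table lookups plus one local score. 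The local score is a single $\psi$ evaluation and a semiring operation.
\item \emph{Join nodes.} Each state does one $\otimes$ of two table entries.
\end{itemize}

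\textbf{Semiring arithmetic.} All semiring operations are assumed unit cost, or at most polynomial in the bit-size for the max-plus case. The inference case with exact exponentials needs a remark that the numbers stay of polynomial bit-length. I would note this as a caveat and defer it to the representation conventions in \Cref{app:semirings}.

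\textbf{Conclusion.} Multiplying the three counts gives
\[
|V(T)|\cdot 2^{w+1}\cdot O(w\,\size(\family)) = 2^{w}\poly(|\family|,d,\size(\family),w).
\]

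\textbf{Main obstacle.} The only nontrivial step is the residual computation: justifying that $R_f(t,x)$ costs $O(\size(f))$ through rOBDD reachability with the variables outside $I_t$ treated as free. That this treatment is valid is exactly what \Cref{cor:verify_residuals} and \Cref{lem:scope_bag_function} provide. Everything else is bookkeeping.
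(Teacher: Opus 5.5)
Your proposal is correct and follows the same route as the paper's proof: $O(w(d+|\family|))$ nodes via \cite[Lemma 7.4]{Cygan:2015}, at most $2^{w+1}$ states per node, and $O(w\cdot\size(\family))$ work per state, dominated by the consistency check. The paper simply asserts that cost, whereas you justify it by rOBDD reachability. One small correction: the reachability computation of $R_f(t,x)$ follows directly from the definition of residuals, so the running-time bound needs neither \Cref{cor:verify_residuals} nor \Cref{lem:scope_bag_function}. Also, the paper implicitly assumes unit-cost semiring arithmetic, and \Cref{app:semirings} states no bit-complexity convention for you to defer to.
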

\begin{proof}
    The algorithm performs $|V(T)|= O(w\cdot |V(G)|)=O(w\cdot(d+|{\cal F|}))$ update steps \cite[Lemma 7.4]{Cygan:2015}. 
    Each update could correspond to an initialization, introduction, forget, or join step. Each step takes $O(2^w\cdot w\cdot\size({\cal F}))$ time.
\end{proof}

\section{Perfect Gibbs Sampling from the Partition Function} 

\label{sec:Gibbs_sampling}

We now show how the partition function computed in the previous section provides a perfect Gibbs sampler, i.e.,~a random variable $Y$ such that
\begin{equation} \label{eqn:Gibbs} \Pr[Y=x]=\frac{1}{Z(\beta)}\prod_{f\in {\cal F}}\exp(-\beta r_f \cdot f(x)) \quad(\forall x\in \{0,1\}^{d}), 
\end{equation}
where $Z(\beta)>0$ is the partition function. 
We leverage the fact that the dynamic program that computes the partition function can store in memory the local scores, $\local(t'\to t,\sigma)$, and the value function $\Potential[t,\sigma]$. Since at every node there are at most $2^{w+1}$ states, the total storage cost of these quantities is $O(|V(T)|\cdot 2^{w+1})=O(2^w \cdot w \cdot(d+|{\cal F}|))$. 

While we are interested in sampling $Y\in \{0,1\}^d$, our algorithm samples full (consistent) assignments $\Sigma=(Y,U)\in\{0,1\}^{d}\times\{0,1\}^{{\cal F}}$. In this regard, it should be noted that for consistent full assignments, there is a bijection between the two (namely, $Y \leftrightarrow (Y, f(U))$), hence both samplers are equivalent up to marginalization. The following formula is key for sampling.

\begin{lemma} \label{lem:cond_proba}
Let $t'\in C_t$, and $\sigma\in\{0,1\}^{X_t}$, $\sigma'\in\{0,1\}^{X_{t'}}$ that coincide over $X_t\cap X_{t'}$. Then 
\[ \Pr\big[\Sigma|_{X_{t'}}=\sigma'\, : \, \Sigma|_{X_t}=\sigma \big]= \frac{\local(t'\to t,\sigma')\cdot \Potential[t',\sigma']}{\sum_{\omega\in \{0,1\}^{X_{t'}}: \omega|_{X_t\cap X_{t'}}=\sigma|_{X_t\cap X_{t'}}} \local(t'\to t,\omega)\cdot \Potential[t',\omega]}.
\]
\end{lemma}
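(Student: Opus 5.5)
We compute both sides by writing the joint law of $\Sigma$ restricted to $X_t\cup X_t^{\downarrow}$ as a sum of Gibbs weights, and then factor that sum using the value functions. Throughout, $\Sigma=(Y,U)$ with $Y$ Gibbs distributed as in \eqref{eqn:Gibbs} and $U_f=f(Y)$. We work in the inference semiring, so $\oplus=+$, $\otimes=\times$ and $\psi=\exp$, with $z_f=-\beta r_f$.

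\textbf{Step 1: the weight of $\{\Sigma|_{X_t}=\sigma\}$.} Fix a consistent $\sigma$ at $t$ (otherwise the conditioning event is null). By \Cref{def:value_function}, we have
\[
\Pr[\Sigma|_{X_t}=\sigma]\cdot Z(\beta)=\sum_{y:\,(y,f(y))|_{X_t}=\sigma}\ \prod_{f\in\family}\psi(z_f f(y)).
\]
Split $\family$ into three parts: $\family\cap\forget_t^{\downarrow}$, $\family_t$, and the functions above $t$. The scope property behind \Cref{cor:verify_residuals} (\Cref{lem:scope_bag_function}) gives two facts. Functions in $\forget_t^{\downarrow}$ depend only on $y|_{I_t^{\downarrow}}$. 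Functions not in $X_t^{\downarrow}$ depend only on variables outside $I_t^{\downarrow}\setminus I_t$, by the running-intersection property. Moreover, the constraint $U_f=u_f$ for $f\in\family_t$ is fixed by $\sigma$. So the sum factorizes as
\[
\Potential[t,\sigma]\cdot A(\sigma),
\]
where $A(\sigma)$ collects the factors of all functions outside $\forget_t^{\downarrow}$, summed over the variables outside $I_t^{\downarrow}$.

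One subtlety must be checked here. A function $f\in\family_t$ imposes $f(y)=u_f$, and this depends on both sides of the split. However, every variable in $\scope[f]$ lies in $X_s$ for some $s$ adjacent to $f$'s subtree. Consistency is exactly the statement that a downward extension realizing $u_f$ exists. To keep the split clean, I would attach the indicator $[f(y)=u_f]$ to the downward sum. This is legitimate because $\scope[f]\subseteq I_t^{\downarrow}\cup I_t$ whenever $f\in X_t$ and $f$'s subtree is below. If instead $f$ extends upward, its downward scope portion lies in $I_t$, so the indicator is determined by $x$ together with the upward variables.

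\textbf{Step 2: the same computation for $\{\Sigma|_{X_t}=\sigma,\ \Sigma|_{X_{t'}}=\sigma'\}$.} Repeat Step 1 on $t'$. Summing over $y|_{I_{t'}^{\downarrow}}$ extending $x'$ yields $\Potential[t',\sigma']$. The functions forgotten at the edge $t'\to t$ belong to $\forget_t^{\downarrow}$ but not to $\forget_{t'}^{\downarrow}$, and they contribute exactly $\local(t'\to t,\sigma')$, since their values are $u'_f$. For the other children $s\neq t'$ of $t$ (at a join node), the subtrees are disjoint outside $X_t$ by running intersection. Hence their contribution is a factor $B(\sigma)$, given by the other terms of \eqref{eqn:dyn_prog_rec}, which does not depend on $\sigma'$. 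The joint weight is therefore
\[
\local(t'\to t,\sigma')\,\Potential[t',\sigma']\cdot B(\sigma)\,A(\sigma).
\]

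\textbf{Step 3: take the ratio.} By \Cref{prop:dynamic_program_consistent_assignment}, we have $\Potential[t,\sigma]=B(\sigma)\sum_{\omega}\local(t'\to t,\omega)\Potential[t',\omega]$, where the sum ranges over $\omega$ agreeing with $\sigma$ on $X_t\cap X_{t'}$. Dividing the joint weight from Step 2 by the marginal weight from Step 1, the factors $A(\sigma)$, $B(\sigma)$ and $Z(\beta)$ cancel, and we obtain the claimed formula. Inconsistent $\omega$ contribute zero, since $\Potential$ vanishes on them, so summing over all of $\{0,1\}^{X_{t'}}$ is harmless.

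\textbf{Main obstacle.} The hard part is the factorization bookkeeping in Step 1. We must show that each function's factor, and each consistency indicator, lands entirely on one side of the bag $X_t$, so that the upward part $A(\sigma)$ truly depends only on $\sigma$. My plan is to argue this per function using tree-decomposition property (iii), applied to the edges $\{i,f\}$ of $G(\family)$.
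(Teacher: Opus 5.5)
\textbf{There is a genuine gap, and it sits exactly where you placed the difficulty:} the ``one subtlety'' paragraph of Step 1.

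The first half of your dichotomy is right: if $f\in\family_t$ occurs only in bags of $T_t$, then $\scope[f]\subseteq I_t^{\downarrow}$. The second half is false. You claim that if $f$ extends upward, then $\scope[f]\cap I_t^{\downarrow}\subseteq I_t$. But property (ii) of a tree decomposition of the incidence graph only asks that each edge $\{i,f\}$ be covered by \emph{some} bag. So an $f\in\family_t$ can have one scope variable covered strictly below $t$ and another covered strictly above $t$. (\Cref{lem:scope_bag_function} has the same defect; it holds only when $f$ occurs in no bag outside $T_t$.)

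For such an $f$, the indicator $[f(y)=u_f]$ couples the two sides. The weight then does not factor as $\Potential[t,\sigma]\cdot A(\sigma)$. In Step 2 the would-be upper factor depends on $\sigma'$, so the cancellation in Step 3 fails.

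There is a second mismatch. Even when an indicator can be pushed into the downward sum, that sum is no longer $\Potential[t,\sigma]$ of \Cref{def:value_function}. That definition contains no indicators and, beyond the consistency test, does not depend on $u$ at all. The same problem affects your identification of the $t'$-sum with $\Potential[t',\sigma']$ in Step 2.

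\textbf{The gap cannot be closed under the stated hypotheses, because the formula itself fails.} Take $d=2$, $\family=\{f\}$ with $f(x)=x_1\wedge x_2$. Use the nice path decomposition with bags $\emptyset,\{1\},\{1,f\},\{f\},\{f,2\},\{f\},\emptyset$ from leaf to root. Let $t'$ be the node with bag $\{1,f\}$, let $t$ be its parent with bag $\{f\}$, and let $\sigma=(u_f=0)$.
\begin{itemize}
\item Both states $\sigma'=(x_1,0)$ are consistent.
\item $\local(t'\to t,\sigma')=1$, since only the variable $1$ is forgotten.
\item $\Potential[t',\sigma']=1$, since $\forget_{t'}^{\downarrow}=\emptyset$.
\end{itemize}
So the right-hand side equals $1/2$ for $x_1=0$, whereas $\Pr[Y_1=0\mid f(Y)=0]=2/3$. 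This is precisely your failing case: $f$ occurs above $t$ (in the bag $\{f,2\}$), while $1\in\scope[f]$ lies strictly below $t$.

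\textbf{The paper's proof does not fill this hole either.} Its route is more direct than yours. It factors the weight of $\{\Sigma|_{X_t\cup X_{t'}}=\sigma\sqcup\omega\}$ as $C(\sigma)\cdot\local(t'\to t,\omega)\cdot\Potential[t',\omega]$ and gets the denominator by total probability over $\omega$. It therefore needs neither $\Potential[t,\sigma]$ nor \Cref{prop:dynamic_program_consistent_assignment}, which also fails in this example. However, when it splits $y=(y_1,y_2)$ it sums $y_2$ freely, silently dropping the constraints $f(y)=u_f$ for $f\in X_t\cup X_{t'}$.

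\textbf{What would make your argument work.} Add the hypothesis that every $f$ shares some bag with all of $\scope[f]$. Also let $\Potential[t,\sigma]$ include the indicators $[f(y)=u_f]$ for those $f\in\family_t$ whose covering bag lies in $T_t$. Then each indicator lands in exactly one factor: the subtree of $t'$, another child's subtree, or the part above $t$. This holds because any scope variable that also appears on the other side must lie in the separating bag. With these changes, your Steps 1--3 go through.
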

With this formula, it is straightforward to implement the Gibbs sampler with a single pass over the tree decomposition (with previously computed table values for the partition function). Namely, starting from the root we sample the conditional probabilities of assignments at each child, based on \Cref{lem:cond_proba}. Performing this sampling over the tree leads to an assignment $\Sigma=(Y,U)\in \{0,1\}^{d+|{\cal F}|}$ whose $Y$ marginal probability is exactly given by the Gibbs distribution \eqref{eqn:Gibbs}. Each sampling step requires a discrete sampler over a set with at most $w+1$ Boolean variables, and therefore has cardinality at most $2^{w+1}$. 
\\

\begin{algorithm2e}[H]
\small
\caption{\textsc{Sample}$(z,\beta)$}
\label{alg:perfect_gibbs_sampler}
\DontPrintSemicolon
\SetAlgoNoEnd
\SetKwInOut{KwIn}{Input}
\SetKwInOut{KwOut}{Output}
\SetKwFunction{Children}{child}
\SetKwFunction{Compat}{Compat}
\SetKwFunction{Sample}{SampleDiscrete}
\SetKwFunction{WriteNew}{WriteNew}
\SetKwFunction{DFS}{DFS}

\KwIn{
$z\in\mathbb{R}^{\cal F}$: multipliers; $\beta$: temperature parameter; 
Nice tree decomposition $(T,\{X_t\}_{t\in V(T)})$ rooted at $r$ with $X_r=\emptyset$;\;
DP tables $\Potential[t,\sigma]$ and local scores $\local(t'\to t,\sigma')$ (sum--product case).
}
\KwOut{A sample $Y\in\{0,1\}^d$ from the Gibbs distribution \eqref{eqn:Gibbs}.}

\BlankLine
\textbf{Notation:} $\Sample(\{P(\omega)\})$ samples $\omega$ with $\Pr[\omega]\propto P(\omega)$.\;

\BlankLine
Initialize global assignment arrays $Y[i]\gets \bot$ for $i\in[d]$, and $U[f]\gets \bot$ for $f\in\mathcal F$\;
Let $\sigma_r\gets \emptyset$\;
\DFS{$(r,\sigma_r)$}\;
\Return{$Y$}\;

\BlankLine
\SetKwProg{Fn}{Function}{:}{}
\Fn{\DFS{$(t,\sigma_t)$}}{
  \ForEach{$t'\in \Children(t)$}{
    \tcp{Enumerate child-bag states compatible with $\sigma_t$}
    $\Omega \gets \{\omega\in\{0,1\}^{X_{t'}}:\ \sigma|_{X_t\cap X_t'}=\omega|_{X_t\cap X_{t'}}\}$\; \ForEach{$\omega\in\Omega$}{
      $P(\omega)\gets \local(t'\to t,\omega)\cdot \Potential[t',\omega]$\;
    }
    $\sigma_{t'} \gets \Sample(\{P(\omega)\}_{\omega\in\Omega})$\;

    \tcp{Record newly seen coordinates (bags guarantee consistency)}
    $\WriteNew(\sigma_{t'})$\;

    \DFS{$(t',\sigma_{t'})$}\;
  }
}

\BlankLine
\Fn{\WriteNew{$(\sigma_t)$}}{
  \tcp{Interpret $\sigma_t=(x,u)$ with $x$ on $I_t=X_t\cap[d]$ and $u$ on $\mathcal F_t=X_t\cap\mathcal F$}
  \ForEach{$i\in I_t$}{
    \If{$Y[i]=\bot$}{ $Y[i]\gets (\sigma_t)_i$ }
  }
  \ForEach{$f\in \mathcal F_f$}{
    \If{$U[f]=\bot$}{ $U[f]\gets (\sigma_t)_f$ }
  }
}
\end{algorithm2e} 
\begin{theorem} \label{thm:Gibbs_sampler_correctness_runtime}
 $\textsc{Sample}(z,\beta)$ (Algorithm \ref{alg:perfect_gibbs_sampler}) is an exact Gibbs sampler from the distribution \eqref{eqn:Gibbs}, and runs in time $2^w\poly(|{\cal F}|,d,\size(f),w)$.
\end{theorem}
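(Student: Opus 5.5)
The plan is to prove exactness by showing that the root-to-leaves sampling in \Cref{alg:perfect_gibbs_sampler} realizes a chain-rule factorization of the joint law of $\Sigma=(Y,U)$ over the nice tree decomposition. The runtime then follows by counting work per node. Throughout, work in the sum--product semiring with $\psi=\exp$ and $z_f=-\beta r_f$. Extend the Gibbs measure \eqref{eqn:Gibbs} to full assignments by $\Pr[\Sigma=(y,u)]\propto\prod_f \exp(z_f u_f)$ when $u=(f(y))_f$, and $0$ otherwise. Its $Y$-marginal is \eqref{eqn:Gibbs}, by the bijection noted before \Cref{lem:cond_proba}.

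\textbf{Step 1 (conditional independence).} By the separator property of tree decompositions (property (iii)), conditioning on $\Sigma|_{X_t}=\sigma$ makes the variables in the subtrees of distinct children of $t$ independent of each other and of the variables outside $T_t$. Here the unnormalized weight factorizes: each $f$ contributes its factor at the unique node where it is forgotten, and its value is determined inside the subtree containing its scope, by \Cref{cor:verify_residuals}. Writing the weight restricted to $T_{t'}$ and summing out everything below $X_{t'}$ gives exactly $\local(t'\to t,\sigma')\cdot\Potential[t',\sigma']$, using \Cref{def:value_function} and \Cref{prop:dynamic_program_consistent_assignment}. This is \Cref{lem:cond_proba}, which may be assumed. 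The additional fact needed is that, given $\Sigma|_{X_{t'}}=\sigma'$, the law of the rest of $T_{t'}$ does not depend on $\sigma|_{X_t\setminus X_{t'}}$. That is the Markov property of the decomposition again.

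\textbf{Step 2 (induction).} Induct on the DFS order. The claim is that after processing the nodes above and including $t'$, the sampled restriction of $\Sigma$ to the union of visited bags has the correct marginal. At the root $X_r=\emptyset$, so this holds trivially. At each child, the algorithm draws $\sigma_{t'}$ with probability proportional to $\local\cdot\Potential$, which by \Cref{lem:cond_proba} and Step 1 is the true conditional given everything already sampled. The chain rule then gives the claim. For the join node, the two children are sampled sequentially. This is valid because the second child's conditional, given the shared bag, is unaffected by the first child's subtree.

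\textbf{Step 3 (WriteNew is well defined).} If a coordinate appears in several bags, those bags form a connected subtree. Hence every later bag containing it was sampled conditioned on the already-fixed value, since compatibility on $X_t\cap X_{t'}$ is enforced in $\Omega$. So overwriting with $\bot$-checks loses nothing and produces a consistent full assignment. Every $i\in[d]$ lies in some bag, so $Y$ is fully assigned.

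\textbf{Step 4 (runtime).} Computing the tables costs $2^w\poly(|\family|,d,\size(\family),w)$ by \Cref{thm:run_time_dynamic_program}, with the local scores stored alongside. The DFS visits $|V(T)|=O(w(d+|\family|))$ nodes. At each node it enumerates $|\Omega|\le 2^{w+1}$ states, computes the weights from stored values, and performs one discrete sampling in $O(2^{w+1})$ arithmetic operations. WriteNew costs $O(w)$ per node. Summing gives the claimed bound.

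The main obstacle is Step 1: verifying rigorously that the weight factorization over the tree assigns each $f$'s factor exactly once, at its forget node, with $u_f$ constrained to the true value $f(y)$ through the consistency checks. One must also check that inconsistent states receive weight zero, so they are never sampled. I would handle this by unrolling recursion \eqref{eqn:dyn_prog_rec} into an explicit product formula for the weight of a full assignment, and then reading off the conditionals from it.
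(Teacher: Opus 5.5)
\paragraph{The gap is in Step~1.} This is the obstacle you flagged yourself, and it cannot be closed. For a tree decomposition of the incidence graph, the Markov property you invoke is false for the extended Gibbs measure, and so is the formula of \Cref{lem:cond_proba} that you take as given.

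The reason is that the constraint $U_f=f(Y)$ is a factor on $\{f\}\cup\mathrm{scope}[f]$. In $G(\mathcal{F})$ this set is a star, not a clique, so it need not lie in any bag. The state records only the bit $u_f$, and the test $u_f\in R_f(t,x)$ is bag-local: it is necessary but not sufficient for $u_f=f(y)$.

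Here is a concrete counterexample. Let $d=2$ and $\mathcal{F}=\{f\}$ with $f=y_1\vee y_2$. Take the nice path decomposition with bags $\emptyset,\{f\},\{f,2\},\{f\},\{1,f\},\{1\},\emptyset$ from root to leaf; it has width $1$.
\begin{itemize}
\item Given $U_f=1$, the pair $(Y_1,Y_2)$ is uniform on $\{01,10,11\}$. So the two sides of the separator $\{f\}$ are not conditionally independent.
\item The inconsistent assignment $(y,u)=((0,0),1)$ passes every bag-local check. Hence \Cref{alg:dp_nice_td} returns $1+4e^{z_f}$ instead of $Z=1+3e^{z_f}$.
\item The first draw of \textsc{Sample} gives $U_f=1$ with odds $4e^{z_f}:1$ instead of $3e^{z_f}:1$.
\item The output is $Y=(0,0)$ with probability $(1+e^{z_f})/(1+4e^{z_f})$ instead of $1/(1+3e^{z_f})$. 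At $z_f=0$ this is already $2/5$ versus $1/4$.
\end{itemize}
For $f=y_1\vee\cdots\vee y_k$ the same failure occurs with every decomposition of width below $k$, even though the incidence treewidth is $1$. Unrolling the recursion, as you plan, yields a weight that is positive on such inconsistent assignments. The conditionals you would read off from it therefore belong to the wrong measure.

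\paragraph{Comparison with the paper.} The paper's proof follows the same route. It derives the conditional formula by splitting $\prod_f$ into the local score, the factors in $\forget_{t'}^{\downarrow}$, and the rest, then cancels the $\omega$-independent factor $C(\sigma)/Z(\beta)$. It leaves the chaining of your Steps~2--3 implicit.

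It breaks at the same point. Its claim that every $f\notin\forget_{t'}^{\downarrow}$ has $\mathrm{scope}[f]\cap I_{t'}^{\downarrow}=\emptyset$ fails for $f$ in the current bag. \Cref{lem:scope_bag_function} also fails in the example: at the bag $\{1,f\}$, the variable $2\in\mathrm{scope}[f]$ occurs higher up in the tree but not in that bag.

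\paragraph{What your argument does prove.} Your Steps~1--4 show that \textsc{Sample} samples exactly, within the stated time, from the bag-factorized measure
\[
\mu(y,u)\propto\prod_{f}e^{z_f u_f}\prod_{t\in V(T)}\mathbf{1}\bigl[(y,u)|_{X_t}\text{ is consistent}\bigr].
\]
Every factor of $\mu$ lives in a single bag. So the separator argument, the identification of the summed-out weight with $\local(t'\to t,\cdot)\cdot\Potential[t',\cdot]$ as computed by \Cref{alg:dp_nice_td}, the sequential handling of join nodes, \textsc{WriteNew}, and the runtime count all go through for $\mu$.

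\paragraph{What is missing.} You need something that forces $\mu$ to be supported on $\{u=(f(y))_f\}$. Two options:
\begin{itemize}
\item Require each $f$ to share a bag with all of $\mathrm{scope}[f]$. The test at that bag then forces $u_f=f(y)$. This can be arranged by adding $\mathrm{scope}[f]$ to every bag containing $f$, as in the proof of \Cref{lem:bound_tw_marginals}, at a width cost that depends on scope sizes.
\item Enlarge the state of an active $f$ from the bit $u_f$ to its residual under the already-forgotten variables. For a disjunction, this is one bit recording whether a forgotten variable already satisfies it.
\end{itemize}
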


\begin{proof} 

Let 
$U_t=X_t\cup X_{t'}$, hence 
\begin{equation}
\label{eqn:total_proba} 
\Pr[\Sigma|_{X_{t'}}=\sigma'\, : \, \Sigma|_{X_t}=\sigma] = \frac{\Pr[\Sigma|_{U_t}=\sigma\sqcup \sigma']}{\sum_{\omega\in \{0,1\}^{X_{t'}} : \omega|_{X_t\cap X_{t'}}=\sigma|_{X_t\cap X_{t'}}} \Pr[\Sigma|_{U_t}=\sigma\sqcup \omega] }, 
\end{equation}
where $\sigma \sqcup \omega\in\{0,1\}^{U_t}$ denotes the merge of a pair of consistent assignments $\sigma\in \{0,1\}^{X_t}$, $\omega\in\{0,1\}^{X_{t'}}$ over variables $U_t$ (in particular, such that $\sigma|_{X_t\cap X_{t'}}=\omega|_{X_t\cap X_{t'}}$). Next, 
\begin{align*}
&\Pr[\Sigma|_{U_t}=\sigma\sqcup \omega] = \frac{1}{Z(\beta)}\sum_{y: y|_{U_t}=\sigma\sqcup \omega} \prod_{f\in {\cal F}} \exp(-\beta r_f \cdot f(y)) \\ 
&= \frac{1}{Z(\beta)}\sum_{y: y|_{U_t}=\sigma\sqcup \omega} \local(t'\to t,\omega)\cdot \prod_{f\in \fforget_{t'}^{\downarrow}} \exp(-\beta r_f \cdot f(y)) \cdot  \prod_{f\notin \fforget_t^{\downarrow}}\exp(-\beta r_f \cdot f(y)).
\end{align*}
We have split the products over $f\in {\cal F}$ in three terms. The first one is the local score, that depends only on $\omega$; the second one is over $f\in \forget_{t'}^{\downarrow}$, and by \Cref{lem:scope_bag_function} we have that $\scope[f]\subseteq I_{t'}^{\downarrow}$, and therefore it depends only on variables from $V_t^{\downarrow}$; finally,  the factors $f\notin \forget_{t'}^{\downarrow}$ are such that $\scope[f]\cap I_{t'}^{\downarrow}=\emptyset$; the proof of this fact is entirely analogous to that of \Cref{lem:scope_bag_function}. In particular, factors $f\notin \forget_{t'}^{\downarrow}$ may depend only on variables outside $V_{t}^{\downarrow}$ and $\sigma$. In particular, if we let $y=(y_1,y_2)$ where $y_1\in\{0,1\}^{I_{t'}^{\downarrow}}$ and $y_2$ are the rest of the variables, then letting $C(\sigma)\triangleq \sum_{y_2}\prod_{f\notin \fforget_{t'}^{\downarrow}} 
\exp(-\beta r_f \cdot f(y_2))$, we get
\begin{align*}
\Pr[\Sigma|_{U_t}=\sigma\sqcup\omega]
=\frac{C(\sigma)}{Z(\beta)}  \cdot  \local(t'\to t,\omega)\cdot \Big(\sum_{y_1\in\{0,1\}^{I_{t'}^{\downarrow}}}\prod_{f\in \fforget_{t'}^{\downarrow}}\exp(-\beta r_f \cdot f(y_1))\Big).
\end{align*}
The first ratio in the expression above is independent of $\omega$, and therefore can be cancelled out in the quotient \eqref{eqn:total_proba}. Finally, the summation over $y_1$ corresponds to the value function \eqref{eqn:potential_dynamic_program}, 
hence
\[  \Pr\big[\Sigma|_{X_{t'}}=\sigma'\, : \, \Sigma|_{X_t}=\sigma \big] = \frac{\local(t'\to t,\sigma') \cdot \Potential[t',\sigma']}{\sum_{\omega}\local(t'\to t,\omega) \cdot \Potential[t',\omega] }.
\qedhere
\]
\end{proof}  

\section{Some Applications}

\label{section:examples}

To illustrate the scope of our results, we now discuss a few specific use cases of query families with bounded treewidth.

\subsection{Hierarchical Queries}

Hierarchical queries are an important example in private query answering and private synthetic data generation \citep{Abowd:2019TopDown, Abowd:2022,Ghazi:2023,Dawson:2023}.  Indeed, the class of hierarchical queries has treewidth bounded by the depth of its tree representation. 

Recall that ${\cal F}\subseteq\{0,1\}^{\{0,1\}^d}$ is a class of hierarchical queries if each $f\in {\cal F}$ is a monotone disjunction over a set $S_f\subseteq [d]$, $f(x)=\bigvee_{i\in S_f} x_i$, and the set system ${\cal S}\triangleq \{S_f:\, f\in{\cal F}\}$ is laminar. Note as well that for such functions, $\scope[f]=S_f$; hence, the incidence graph of the family ${\cal F}$ coincides with the incidence graph of ${\cal S}$.\footnote{The incidence graph of a set system ${\cal S}$ over a ground set $[d]$ is a bipartite graph with bipartition given by elements and sets, and where the pair $\{i,S\}\in E$ iff $i\in S$.}  Further, we recall that every laminar family has a tree representation, where the root represents the whole set $[d]$,  each internal node represents a element of ${\cal S}$ (excluding $[d],\emptyset$), and edges of this tree are given by minimal inclusions among the represented sets. Note, in particular, that the depth of this tree is given by the longest inclusion chain among sets from the laminar family. 

For laminar families, it is folklore that their incidence graph treewidth is bounded by the depth of its tree representation; we provide a proof for completeness. 
\begin{proposition} \label{prop:treewidth_laminar}
If ${\cal S}\subseteq 2^{d}$ is a laminar family, then its incidence graph treewidth is at most the depth of its tree representation. This upper bound is tight in the worst case.
\end{proposition}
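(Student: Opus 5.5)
The plan is to build an explicit tree decomposition of the incidence graph directly from the tree representation $\mathcal T$ of the laminar family. Let $D$ denote the depth of $\mathcal T$, meaning the number of sets on a longest root-to-leaf chain, excluding the artificial root $[d]$ if it is not itself in ${\cal S}$. For each node $S$ of $\mathcal T$, let $A(S)$ be the set of sets of ${\cal S}$ on the path from the root down to $S$, inclusive. We then take the decomposition tree to be $\mathcal T$ itself, with each element $i\in[d]$ attached as a new leaf below the minimal set $S(i)\in{\cal S}$ containing $i$; elements lying in no set are attached to the root. The bags are $X_S=A(S)$ for set nodes and $X_i=A(S(i))\cup\{i\}$ for element leaves.

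Next we check the three axioms. Every vertex appears in some bag. For an edge $\{i,S\}$ with $i\in S$, laminarity makes $S$ an ancestor of, or equal to, $S(i)$, so $S\in A(S(i))\subseteq X_i$. For connectivity, a set vertex $S$ occurs exactly in the bags of the subtree of $\mathcal T$ rooted at $S$ together with the element leaves hanging below it, which is a connected subtree. Each element $i$ occurs in a single bag. The largest bag is an element leaf bag of size at most $D+1$, so the width is at most $D$. If the depth convention counts the root $[d]$ as a level, the bound is only easier. If one wants the width to equal the number of edges on the longest chain, we can drop the topmost set from all bags whenever it contains every element, but under the natural convention the bound already follows.

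For tightness, we take the chain $S_1\supsetneq S_2\supsetneq\cdots\supsetneq S_D$ with $|S_j|=D$ for each $j$, all sharing a common core. For instance, let $[d]$ have $2D$ elements, let $S_j=\{j,\dots,D\}\cup C_j$, and pad so that each set has size $D$. The cleaner choice is to exhibit a $K_{D,D}$-type minor: the sets $S_1,\dots,S_D$ together with $D$ elements lying in $S_D$ (hence in all the $S_j$) induce a complete bipartite graph $K_{D,D}$, whose treewidth is $D$. Since treewidth is monotone under taking subgraphs, the incidence graph has treewidth at least $D$, which matches the upper bound.

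The main obstacle is bookkeeping with conventions, namely whether depth counts the root $[d]$ and the exact off-by-one in the upper bound. I would fix the convention so that the upper construction gives width $D$ and the $K_{D,D}$ example gives a lower bound of $D$, adjusting the lower-bound example by one element or one set if the conventions disagree.
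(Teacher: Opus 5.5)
Your proposal is correct and follows the paper's proof essentially verbatim. It uses the same decomposition: bags are the chains $\mathcal{C}_S$ of sets containing $S$, and each element $i$ is hung as a leaf below the minimal set containing it, so the width equals the longest chain length. It also uses the same $K_{w,w}$-subgraph tightness argument, and the paper's concrete instance $\{\{1,\ldots,w+\ell\}:\ell\in[w]\}$ with $d=2w$ realizes exactly your ``cleaner choice''.

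Two small cleanups. First, discard your first tightness attempt, since a strictly decreasing chain cannot have $|S_j|=D$ for all $j$. Second, drop the aside about removing the topmost set from all bags. That move is invalid when the set lies in $\mathcal{S}$, because it is then a vertex adjacent to every element. This is exactly what happens in the paper's tight example, where $[d]\in\mathcal{S}$.
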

\begin{proof}
Let $T$ be the tree representation of ${\cal S}$. The tree representation of ${\cal F}$, that we will call $T'$, is built by extending $T$ with leaves $t_i$ for each element $i\in S$, and attaching it to the node $t_S$ corresponding to the minimal set $S\in {\cal S}\cup\{[d]\}$ that contains it. Next we describe the bags at each node $t\in V(T')$. First, let ${\cal C}_S\triangleq \{R\in {\cal S}: S\subseteq R\}$ be the chain of sets containing $S$. For each nonleaf node $t_S$, we let $X_{t_S}={\cal C}_S$. For the leaves $t_i$ with $i\in [d]$, we let $X_{t_i}=\{i\}\cup {\cal C}_{\{i\}}$.

Note that the constructed bags have maximum size given by the depth of $T'$, which is one
more than the depth of $T$. We prove now that the above is a valid tree decomposition of ${\cal S}$. Indeed, each $i\in [d]$ is contained in the bag of $t_i$. Next, each $S\in{\cal S}$ is contained in the bag of the node $t_S$ associated to $S$. Next, for each edge $\{i,S\}$ (i.e., $i\in S$), there exists a bag that contains both elements (the bag of the $t_i$ leaf). Finally, the set of nodes containing any $i\in [d]$ or $S\in {\cal S}$ is connected, by the laminar property.

The tightness of this bound is provided by the following instance. Consider $d=2w$, and the set system ${\cal S} \triangleq \{\{1,\ldots, w,\ldots,w+\ell\}: \ell\in[w]\}$. Notice that the incidence graph of this set system $G({\cal S})$ contains $K_{w,w}$ as a subgraph; note that $\tw(K_{w,w})=w$, and by monotonicity of the treewidth under inclusion, $\tw(G({\cal S}))\geq w$. On the other hand, the longest inclusion chain has length exactly $w$, hence the previous argument shows that the treewidth of this instance is exactly $w$.
\end{proof}

We note as well that for any laminar family on a ground set of cardinality $d$, $|\family|\leq 2d-1$, so in our complexity bounds we can replace $|\family|$ by $O(d)$.

\begin{theorem} \label{thm:noisy_LP_hierarchical_queries}
    Let $\domain=\{0,1\}^{d}$ and consider a class ${\cal F}$ of hierarchical queries, 
    each of which can be represented by a tree of depth $h$. Then there exists an $(\varepsilon,\delta)$-DP algorithm that runs in time $\mathrm{poly}(n,d,2^h)$ 
    that generates synthetic data with expected accuracy 
    \[\alpha\lesssim \frac{h\ln|{\cal F}|}{n\varepsilon} \mbox{ when $\delta=0$,} \quad \mbox{ and } \alpha \lesssim \frac{\sqrt{h\ln|\family|\ln(1/\delta)}}{n\varepsilon} \mbox{ when $\delta>0.$}\]
\end{theorem}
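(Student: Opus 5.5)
The proof has two parts: a private step that answers the hierarchical queries with error $\alpha$, and a non-private step that applies \Cref{thm:DP_synth_data_noisy_LP}. By \Cref{prop:treewidth_laminar}, $\tw(\family)\le h$, so $2^{\tw(\family)}\le 2^h$. Since $|\family|\le 2d-1$ and each monotone disjunction has an rOBDD of size $O(d)$, the running-time term $\poly(d,|\family|,2^{\tw(\family)},\size(\family))$ is $\poly(d,2^h)$. Hence the whole task reduces to exhibiting an $(\eps,\delta)$-DP, $\alpha$-accurate query-answering mechanism for $\family$ with the stated $\alpha$ and polynomial running time. The sampling overhead $O(\log(|\family|/\alpha)/\alpha^2)$ in \Cref{thm:DP_synth_data_noisy_LP} is also polynomial in $n,d$ once $\alpha\gtrsim 1/(n\eps)$. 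Alternatively, the rounding remark can be used.

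For the query-answering step, I use the laminar structure and noise the tree levels. The depth-$\ell$ nodes of the tree representation are pairwise disjoint sets. A single datapoint $x$ satisfies the disjunction $f_S$ exactly when $x_i=1$ for some $i\in S$, and this can happen for many disjoint $S$ at the same level, so a direct sensitivity bound on a level is poor. I therefore reparametrize. For each datapoint $x$ and each root-to-leaf chain, the set of $S$ with $f_S(x)=1$ is upward closed in the tree: if $S\subseteq R$ and $f_S(x)=1$, then $f_R(x)=1$. I define, for each node $S$, the count $g_S(X)$ of datapoints for which $S$ is ``minimally activated'', meaning $f_S(x)=1$ while certain child-level information distinguishes how $x$ activates $S$. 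The aim is that every $f_S(X)$ is a sum over the ancestor chain, or over a subtree, of $g$ values, and that each datapoint changes at most $O(h)$ of the $g$ counts in total.

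This step, constructing a decomposition in which a single datapoint has bounded influence while each $f_S$ is expressed using few noisy terms, is the main obstacle. My first attempt is the following. Every $f_S$ for which $x$ has a coordinate in $S$ and $S$ is above the deepest nodes containing that coordinate is determined by the leaves of $x$'s support, and that support can be large. I therefore expect that I will instead need the weaker $\ell_1$/$\ell_2$ sensitivity structure given by the $h$-level composition. Concretely, I treat the $h$ levels as $h$ separate query batches. Within one level, the queries are disjunctions on disjoint sets and the answers are released via the exponential-mechanism or Laplace-based $\ell_\infty$ approach with error $\ln|\family|/(n\eps_\ell)$. I then compose across levels with $\eps_\ell=\eps/h$ (basic composition, giving $h\ln|\family|/(n\eps)$) or with advanced composition/Gaussian noise (giving $\sqrt{h\ln|\family|\ln(1/\delta)}/(n\eps)$). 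The delicate point is justifying the per-level $\ln|\family|$ error despite the overlapping activations. If necessary, I would instead apply the sparse-vector/MWU-style private answering of each level, whose error depends only logarithmically on the number of queries.

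Once the noisy answers $(\hat f(X))_{f\in\family}$ with expected $\ell_\infty$ error $\alpha$ are available, I invoke \Cref{thm:DP_synth_data_noisy_LP}. Post-processing preserves $(\eps,\delta)$-DP, the output is $O(\alpha)$-accurate, and the total time is $\poly(n,d,2^h)$. This completes the plan.
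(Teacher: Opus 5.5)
\paragraph{Gap: the private query-answering step is never established, and it cannot be in this generality.}
Your outer reduction matches the paper's: $\tw(\family)\le h$ by \Cref{prop:treewidth_laminar}, $|\family|\le 2d-1$, $\size(\family)=O(d)$, then post-processing plus \Cref{thm:DP_synth_data_noisy_LP}. But the private query-answering step carries the whole content of the theorem, and you never establish it. You abandon the $g_S$ reparametrization. The per-level fallback then fails at exactly the ``delicate point'' you flag. The sets at one depth are disjoint, but a single record such as $x=\mathbf{1}$ activates all of them. So the $\ell_1$-sensitivity of a level is (number of sets at that level)$/n$, and no cleverer mechanism gets around this. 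If a level consists of $\{1\},\dots,\{d\}$, its queries are exactly the one-way marginals. For these every $\eps$-DP algorithm has error $\Omega(\min\{1,d/(n\eps)\})$, and every $(\eps,\delta)$-DP algorithm has error $\Omega(\sqrt{d\ln(1/\delta)}/(n\eps))$, as recalled in \Cref{sec:error_tw}. A per-level error of $\ln|\family|/(n\eps_\ell)$ is therefore impossible. The sparse-vector/MWU fallback only gives the sparse-regime rate, e.g.\ $(d\ln|\family|/(n\eps))^{1/3}$, which is polynomial in $d$ rather than the claimed bound.

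The obstacle you found is in fact fatal for the statement read literally. The family $\{\{1\},\dots,\{d\}\}$ is laminar with $h=1$ and $|\family|=d$. For arbitrary records in $\{0,1\}^d$, the claimed $\ln d/(n\eps)$ then contradicts the one-way-marginal lower bound (take $n\eps\approx d$). The paper's proof gets past this through an unstated assumption. It asserts that ``by the hierarchical structure'' the $\ell_1$-sensitivity of $(f(X))_{f\in\family}$ is at most $2h/n$ (and the $\ell_2$-sensitivity at most $\sqrt{2h}/n$). It then adds i.i.d.\ $\Lap(2h/(n\eps))$ or Gaussian noise and invokes \Cref{thm:DP_synth_data_noisy_LP}. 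That sensitivity bound holds precisely when each record activates only the sets containing a single coordinate, e.g.\ one-hot records with each individual lying in one leaf, as in the Census setting. Those sets form a chain of at most $h$ sets, so replacing one record changes at most $2h$ counts, each by $1/n$.

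Under that restriction no reparametrization is needed, and your per-level plan also goes through. Each record lies in at most one set per level, so the per-level sensitivity is $2/n$, and this reduces to the paper's one-shot Laplace argument. For approximate DP, calibrate a single Gaussian to the total $\ell_2$-sensitivity $\sqrt{2h}/n$. Advanced composition of Laplace levels would cost an extra $\sqrt{\ln|\family|}$. So the missing ingredient is to identify and impose this restriction on the records. Without it the stated bound is false, so neither your plan nor the paper's sensitivity claim can establish it.
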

\begin{proof}
First, we consider the pure-DP case. By the hierarchical structure,  the $\ell_1$-sensitivity of $(f(X))_{f\in {\cal F}}$ is bounded by $2h/n$. By \Cref{prop:Laplace_mech},  applying i.i.d.~noise $\mbox{Lap}(\frac{2h}{n\varepsilon})$ to all counts leads to estimates $(\hat f(X))_{f\in {\cal F}}$ that are $\varepsilon$-DP and with expected $\ell_{\infty}$-error $O\big(\frac{h\ln|{\cal F}|}{n\varepsilon}\big)$. Hence, by \Cref{thm:DP_synth_data_noisy_LP}, there exists an $\varepsilon$-DP synthetic data generator that runs in time $O(\mbox{poly}(n,|\family|,d,2^h)$ with expected error $\alpha \lesssim\frac{h\ln|{\cal F}|}{n\varepsilon}$.

Second, we address the approximate-DP case. The $\ell_2$-sensitivity of $(f(X))_{f\in {\cal F}}$ is bounded by $\Delta=\sqrt{2h}/n$. Hence adding i.i.d. Gaussian noise with standard deviation $\sigma=2\sqrt{2h}\ln(2/\delta)/[n\varepsilon]$ is $(\varepsilon,\delta)$-DP and attains expected worst-case error $O(\sqrt{\sigma\ln|\family|})=O(\frac{\sqrt{h\ln|\family|\ln(1/\delta)}}{n\varepsilon})$. Hence, by \Cref{thm:implicit_pmwu_accuracy}, there exists a DP synthetic data generator with expected error $O(\frac{\sqrt{h\ln|\family|\ln(1/\delta)}}{n\varepsilon})$.

Finally, note that both algorithms run in time $\mbox{poly}(n,d,2^h)$.
\end{proof}

\subsection{Partitioned Marginal Queries}

\label{subsubsec:example_partitioned_sets}

We next provide an example, corresponding to a class of marginal queries, which shows the benefits of \textsc{IPMWU} for sparse regimes. 

Let $\domain=\{0,1\}^d$, and partition the features into $b=d/k$ blocks, each of them of size $k$ (w.l.o.g., $k$ divides $d$); call the blocks $B_1,\ldots,B_b$.  For each block $\ell\in[b]$, select the $2^k$ different elements $a\in\{0,1\}^{B_{\ell}}$, and consider the queries $f_a(x)= \mathbf{1}(x|_{B_{\ell}}=a)$. 
Note that $|\family|=2^k\cdot b=\frac{2^k}{k}d$, and that $\tw(\family)=k$ (this is because the incidence graph is given by disjoint copies of the complete $K_{k,2^k}$ graph, whose treewidth is $k$). Finally, since each of these functions is a monotone disjunction on a set prescribed by $k$-way marginals, we have $\size(\family)=O(k)=O(d)$.

Our algorithms applied to this family run in time $\mbox{poly}(d,|\family|,n,\tw(\family),\size(\family))=\mbox{poly}(d,n,2^k)$. \Cref{thm:DP_synth_data_noisy_LP} with Laplace mechanism yields an error rate
\[ \alpha_{\mathrm{dense}}\lesssim \frac{\sqrt{2^kd\ln(1/\delta)}}{\sqrt{k}n\varepsilon},\]
whereas \Cref{thm:IPMWU_run_time_accuracy} yields an error rate
\[ \alpha_{\mathrm{sparse}} \lesssim \Big( \frac{\sqrt{d\ln(1/\delta)}[k+\ln(d^2/k)]}{n\varepsilon} \Big)^{1/2}. \]
Hence, if $\frac{2^k}{k[k+\ln(d^2/k)]}\gtrsim \frac{n\varepsilon}{\sqrt{d\ln(1/\delta)}}$, \textsc{IPMWU} attains superior rates compared to the dense case. 

\subsection{Tree-Structured Markov Random Fields}

Let $T=(I,E)$ be a tree and consider queries given by one-way marginals plus two-way marginals along the tree.
\begin{align} 
f_i(x)&=x_i & \forall i\in I \label{eqn:MRF_queries_1}  \\
f_{e,\ell}(x)&= \ell_1(x_i)\cdot \ell_2(x_j) &\forall e=\{i,j\}\in E, \, \ell_1(x),\ell_2(x)\in\{x,1-x\}. \label{eqn:MRF_queries_2} 
\end{align}
Approximation of queries for this family are useful for simple probabilistic graphical models, such as Chow--Liu trees \citep{ChowLiu:1968}.
\begin{proposition}
The family described in  \eqref{eqn:MRF_queries_1}, \eqref{eqn:MRF_queries_2}  has constant treewidth. 
\end{proposition}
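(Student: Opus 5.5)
We will exhibit an explicit tree decomposition of the incidence graph $G(\family)$ of width bounded by an absolute constant, built by refining the tree $T=(I,E)$ itself. First we identify the scopes. For $f_i(x)=x_i$ we have $\scope[f_i]=\{i\}$. For $f_{e,\ell}$ with $e=\{i,j\}$ the scope is $\{i,j\}$, since each factor $\ell_1(x_i),\ell_2(x_j)$ is a non-constant literal; this can also be checked from the rOBDD via \Cref{lem:scope_computation}. Hence $G(\family)$ has the following edges:
\begin{itemize}
\item each $f_i$ is adjacent only to $i$;
\item each of the four functions $f_{e,\ell}$ attached to an edge $e=\{i,j\}$ is adjacent exactly to $i$ and $j$.
\end{itemize}

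Root $T$ at an arbitrary vertex and start from the standard width-1 decomposition of $T$. This decomposition has one node $b_e$ with bag $\{i,j\}$ for each edge $e=\{i,j\}\in E$, and one node $b_i$ with bag $\{i\}$ for each vertex $i$. Each $b_e$ is connected to $b_i$ and $b_j$, which yields a tree. We then augment it by hanging new leaves:
\begin{itemize}
\item for each $e=\{i,j\}$ and each of its four labels $\ell$, attach to $b_e$ a leaf with bag $\{i,j,f_{e,\ell}\}$;
\item for each $i$, attach to $b_i$ a leaf with bag $\{i,f_i\}$.
\end{itemize}
If $T$ has a single vertex, the decomposition is just $b_i$ with its leaf.

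Next we verify the three axioms of a tree decomposition. Every vertex of $G(\family)$ appears in some bag. Every incidence edge $\{i,f\}$ is covered by the leaf bag containing $f$, since that bag also contains its scope. For connectivity, each function vertex appears in exactly one bag. Each variable $i$ appears in $b_i$, in the nodes $b_e$ for $e\ni i$, and in leaves hanging off these nodes. This set is connected, because every $b_e$ with $e\ni i$ is adjacent to $b_i$, and every added leaf is adjacent to a node already containing $i$. The maximum bag size is $3$, so the width, and therefore $\tw(\family)$, is at most $2$, a constant.

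There is essentially no hard step. The only point needing care is the connectivity axiom for variable vertices, which is where we use that the underlying structure is a tree: the star of nodes $b_e$ around $b_i$ stays connected without creating cycles. Optionally, we would note that the bound is tight up to constants: whenever $E\neq\emptyset$, the vertices $i,f_{e,\ell},j$ form cycles of length $4$ through two distinct labels $\ell$, so the treewidth is exactly $2$. Consequently, \Cref{thm:DP_synth_data_noisy_LP} and \Cref{thm:IPMWU_run_time_accuracy} apply with running time polynomial in $d$ and $n$.
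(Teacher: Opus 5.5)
Your proof is correct. It follows the same basic idea as the paper, building the decomposition directly on the tree $T$, but with a finer decomposition.

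The paper roots $T$ and assigns each non-root vertex $s$ with parent $t$ the single bag $\{s,t\}\cup\{f_{e,\ell}\}_{\ell}$, where $e=\{s,t\}$. All four pairwise queries of an edge therefore sit in one bag of size $6$. Connectivity for a variable $t$ comes from the bags of $t$ and of its children being adjacent in $T$.

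You instead work on the subdivision of $T$ and give every query its own pendant leaf, which yields width $2$. Together with your $C_4$ observation, this determines $\tw(\family)=2$ exactly, whereas the paper only shows the treewidth is bounded.

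Your construction is also more complete on one point. The paper's bags as written never contain the one-way marginals $f_i$, and no bag is specified for the root of $T$. Your leaves $\{i,f_i\}$ cover these explicitly. The paper's scheme is easily repaired by adding $f_s$ to $X_s$ and giving the root the bag $\{r,f_r\}$, which keeps the width constant.
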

\begin{proof}
    Let $T$ be the tree corresponding to the  decomposition. For describing the bags, consider a post-order transversal of $T$. Starting from the leaves, if the child-parent pairs are $(s,t)$ and if we denote $e=\{t,s\}$, the we let $X_s=\{s,t\}\cup\{f_{e,\ell} \mid \ell_1(x),\ell_2(x)\in\{x,1-x\}\}$.  Each variable, function, and edge is covered, and the set of bags containing each of them is connected, by construction. Finally, the treewidth is constant since each bag has constant size.
\end{proof}

\subsection{Geometric and Spatial Families}

These families arise when variables represent physical locations. They exhibit treewidth proportional to the size of the boundary of the domain.

Consider the lattice $\domain=\{0,1\}^{[I_1]\times\dots\times[I_d]}$ and the queries indexed by a grid, namely, let $E=\{\{(i_1,\ldots,i_d),(i_1',\ldots,i_d')\} \mid (i_1'=i_1+1) \veebar \cdots \veebar (i_d'=i_d+1) \}$, and 
\[
f_{e,\ell}(x)=\ell_1(x_u)\cdot \ell_2(x_v),\quad (\forall e=\{u,v\}\in E)(\forall \ell_1(x),\ell_2(x)\in\{x,1-x,1\}). 
\]

\begin{proposition}
    Suppose $I_1\leq \dots\leq I_d$. Then the family of lattice queries has treewidth $O(\prod_{k=1}^{d-1}I_k)$.
\end{proposition}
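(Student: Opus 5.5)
We will exhibit an explicit tree decomposition (a path decomposition suffices) of the incidence graph $G(\family)$, built by sweeping the lattice along its longest coordinate $d$. Write each lattice point as $(i',i_d)$ with $i'\in[I_1]\times\dots\times[I_{d-1}]$, and call the slice $L_j\triangleq\{(i',j): i'\in [I_1]\times\cdots\times[I_{d-1}]\}$. Each slice has $N\triangleq\prod_{k=1}^{d-1}I_k$ points. Every edge $e\in E$ joins two points either inside one slice $L_j$ or in consecutive slices $L_j,L_{j+1}$. For each edge there are at most $9$ queries $f_{e,\ell}$, and the scope of each is contained in $\{u,v\}$.

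For $j=1,\ldots,I_d-1$ we define the bag
\[
X_j \triangleq L_j\cup L_{j+1}\cup\{f_{e,\ell}: e\subseteq L_j\cup L_{j+1},\ e\cap L_{j}\neq\emptyset\}.
\]
The bags are arranged along a path in order of $j$. If $I_d=1$, we take a single bag consisting of all vertices; its size is still $O(N)$.

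We then verify the three tree-decomposition axioms.
\begin{enumerate}
\item \emph{Vertex coverage.} Every lattice variable lies in some bag. Every query $f_{e,\ell}$ lies in the bag indexed by the smallest slice that $e$ meets. The only exception is an edge inside $L_{I_d}$, which we place in $X_{I_d-1}$; to cover this case, the bag definition is adjusted for $j=I_d-1$ to include the edges contained in $L_{I_d}$.
\item \emph{Edge coverage.} An incidence edge $\{u,f_{e,\ell}\}$ with $u\in e$ is covered by the bag containing $f_{e,\ell}$, since that bag also contains both endpoints of $e$.
\item \emph{Connectivity.} A variable in $L_j$ appears exactly in $X_{j-1}$ and $X_j$, which are consecutive on the path. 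Each query appears in exactly one bag.
\end{enumerate}

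Finally, we bound the width. Each bag contains $2N$ variables. The number of lattice edges incident to a point is at most $2d$, so each bag contains at most $9\cdot 2d\cdot 2N$ queries. The bag size is therefore $O(dN)$.

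The main obstacle is that this count carries a factor $d$, whereas the claim is $O(N)$. If $d$ is treated as a constant, the $O$ absorbs it. Otherwise, we will remove the factor by making the bags finer. We process the edges of each pair of slices one at a time: a path of sub-bags, each consisting of $L_j\cup L_{j+1}$ together with the queries of a single edge. This gives bags of size $2N+9$, and the same axioms are verified as above. Hence the width is $O(\prod_{k=1}^{d-1}I_k)$, as claimed.
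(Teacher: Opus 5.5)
Your proof is correct, and it takes a more explicit route than the paper. The paper's proof is a one-line appeal to an inductive formula for the treewidth of Cartesian products of graphs \citep{Djelloul:2009}, applied to the grid $P_{I_1}\square\cdots\square P_{I_d}$.

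Your bags $L_j\cup L_{j+1}$ are exactly what the generic product construction produces. One blows up each bag of the width-one path decomposition of $P_{I_d}$ by the $N$ vertices of $P_{I_1}\square\cdots\square P_{I_{d-1}}$, which gives width $2N-1$ for the grid. So the underlying idea agrees. The difference is that the paper's argument speaks only about the grid and leaves implicit the passage to the incidence graph $G(\family)$, where every grid edge carries up to nine query vertices. You treat those vertices directly.

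Your refinement to one sub-bag per edge handles this cleanly. Each lattice variable stays in a contiguous run of bags, and each query sits in a single bag with its scope. The width is $2N+8$ with no dependence on $d$, so you can skip the intermediate $O(dN)$ bound entirely.

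The citation buys brevity and applies to other product structures. Your construction buys a self-contained bound for the actual incidence graph.

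One small caveat: if $\family$ is read as a set rather than an indexed family, single-variable queries such as $x_u$ (and the constant query) arise from several edges. In that case, place each distinct function in just one sub-bag containing its scope rather than once per edge; nothing else changes.
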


\begin{proof}
    The proof follows from an inductive formula on the treewidth of a cartesian product of graphs \citep{Djelloul:2009}.
\end{proof}

There are two simple generalizations one can consider. First, in two dimensions, it is known more generally that planar graphs have treewidth bounded linearly in terms of their diameter \citep{Lipton:1979,Baker:1994, Eppstein:2000}. This bound can also be used in our setting, if queries exhibit a planar structure. Second, in arbitrary dimensions, we can consider query families beyond those given by graphs, e.g., by considering counts over balls of a given radius. Clearly, this is a generalization of grid queries, for which it is known that the treewidth scales as $O(n^{1-1/d})$ \citep{Miller:1997}. Such generalization can be potentially useful for near-neighbor search queries.

\section*{Acknowledgements}

Part of C.~Guzm\'an's work was done as a visiting researcher at Google. C.~Guzm\'an's research was
partially supported by ANID FONDECYT 1251029 grant, and National Center for Artificial Intelligence CENIA FB210017, Basal ANID. We would like to thank Jingcheng Liu, Victor Verdugo and Jos\'e Verschae for valuable discussions about this work, as well as the useful feedback from anonymous reviewers.

\bibliography{main.bbl} 

\appendix

\section{Additional Background}

\subsection{Differential Privacy}
\label{app:basics_DP}

We start with the well known Laplace mechanisn, which achieves pure-DP.

\begin{proposition}[Laplace mechanism]
    \label{prop:Laplace_mech}
    For any function $f:\domain\mapsto\mathbb{R}$ such that for any $X\simeq X'$ satisfies $|f(X)-f(X')|\leq \Delta$, the mechanism that returns $\hat f(X) =  f(X)+\mathrm{Lap}\big(\frac{\Delta}{\varepsilon}\big),$\footnote{Recall the Laplace distribution $\mbox{Lap}(b)$ has density function $p(x)=\frac{1}{2b}\exp(-|x|/b)$.} satisfies $\varepsilon$-DP. Furthermore, this mechanism enjoys the following error bounds:
    \begin{align*} 
    \mathbb{P}[|\hat f(X)-f(X)|>\tau] &= \exp(-\varepsilon\tau/\Delta) \qquad(\forall \tau>0), \\
    \mathbb{E}[|\hat f(X)-f(X)|] &\leq \frac{\Delta}{\varepsilon}.
    \end{align*}
    For high dimensions, where $F:\domain\mapsto\mathbb{R}^k$, considering the $\ell_1$-sensitivity, $\Delta=\sup_{X\simeq X'}\|F(X)-F(X')\|_1$, the mechanism that adds Laplace noise $\hat F(X)=F(X)+\mathrm{Lap}\big(\frac{\Delta}{\varepsilon}\big)^k$, and satisfies
    \begin{align*} 
    \mathbb{P}[\|\hat F(X)-F(X)\|_{\infty}<\tau] &\leq |{\cal F}|\exp(-\varepsilon \tau/\Delta),  \\
    \mathbb{E}[\|\hat F(X)-F(X)\|_{\infty}] &\leq \frac{\Delta}{\varepsilon}(1+\ln k).
    \end{align*}
\end{proposition}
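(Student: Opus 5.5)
The plan is to prove the three parts of \Cref{prop:Laplace_mech} separately: privacy via a density-ratio argument, the scalar tail and expectation from the explicit Laplace density, and the vector bounds by a union bound plus integration of the tail.

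\textbf{Privacy.} Fix neighbors $X\simeq X'$. The output $\hat f(X)$ has density $p_X(o)=\frac{\varepsilon}{2\Delta}\exp(-\varepsilon|o-f(X)|/\Delta)$. By the triangle inequality,
\[ \frac{p_X(o)}{p_{X'}(o)}=\exp\Big(\tfrac{\varepsilon}{\Delta}\big(|o-f(X')|-|o-f(X)|\big)\Big)\le \exp\Big(\tfrac{\varepsilon}{\Delta}|f(X)-f(X')|\Big)\le e^{\varepsilon}. \]
Integrating over any measurable $E$ gives $\varepsilon$-DP with $\delta=0$. For $F:\domain\to\mathbb{R}^k$ the density is a product over coordinates, so the log-ratio is at most $\frac{\varepsilon}{\Delta}\sum_j|F_j(X)-F_j(X')|\le\varepsilon$, using the $\ell_1$-sensitivity.

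\textbf{Scalar error.} With $Z\sim\mathrm{Lap}(b)$ and $b=\Delta/\varepsilon$, we have $\Pr[|Z|>\tau]=2\int_\tau^\infty\frac{1}{2b}e^{-x/b}dx=e^{-\tau/b}$. This is the stated equality. Then $\mathbb{E}|Z|=\int_0^\infty e^{-\tau/b}d\tau=b$.

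\textbf{Vector error.} The tail bound as printed has the inequality reversed and uses $|\family|$ where $k$ is meant. I will prove the intended statement $\Pr[\|\hat F(X)-F(X)\|_\infty>\tau]\le k\,e^{-\varepsilon\tau/\Delta}$, which follows from the scalar tail and a union bound over the $k$ coordinates.

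For the expectation, write $M=\|Z\|_\infty$ and split the tail integral at $\tau_0=b\ln k$:
\[ \mathbb{E}M=\int_0^\infty\Pr[M>\tau]\,d\tau\le\tau_0+\int_{\tau_0}^\infty k e^{-\tau/b}\,d\tau=b\ln k+b=\frac{\Delta}{\varepsilon}(1+\ln k). \]

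\textbf{Main obstacle.} The only delicate point is this expectation bound. A naive union bound integrated from $0$ loses a factor of $k$, so the tail integral has to be split at $\tau_0=b\ln k$; every other step is direct computation.
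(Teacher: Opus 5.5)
Your proof is correct and is the standard one: the density-ratio bound (coordinatewise in the vector case, via the $\ell_1$-sensitivity), the exact computations $\Pr[|Z|>\tau]=e^{-\tau/b}$ and $\mathbb{E}|Z|=b$, and a union bound with the tail integral split at $b\ln k$. The paper gives no proof of this proposition (it is quoted as well-known background in the appendix), so there is no alternative route to compare against. You were also right to treat the printed vector tail bound as a typo: $\Pr[\|\hat F(X)-F(X)\|_\infty<\tau]\le|\mathcal{F}|e^{-\varepsilon\tau/\Delta}$ is actually false for large $\tau$, since the left side tends to $1$. The intended statement is your version, with $>\tau$ and with $k$ in place of $|\mathcal{F}|$; these coincide when the mechanism is applied to a query family with $k=|\mathcal{F}|$.
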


For private optimization problems a standard algorithm is the {\em exponential mechanism} \citep{McSherry:2007}. We consider the setting of an optimization problem where the objective is also a function of a sensitive dataset, where we denote the (maximization) objective by $\score:{\cal Y}\times\domain^{\ast}\mapsto\mathbb{R}$. The exponential mechanism selects a decision $y\in {\cal Y}$ with probability
 \[ \mathbb{P}[Y=y] = \frac{\exp\big(\frac{\varepsilon}{\Delta}\score(y,X)\big)}{\sum_{w\in {\cal Y}}\exp\big(\frac{\varepsilon}{\Delta}\score(w,X)\big)}.\]
 We also show an efficient implementation of this sampling based on Gumbel noise $Z\sim \mbox{Gumbel}(b)$, i.e., with density function $p_b(x)=\frac{1}{b} \exp\Big(-\frac{x}{b}-e^{-x/b}\Big)$ for $x\in \mathbb{R}$ (for details see, e.g., \citep{Durfee:2019}).

\begin{algorithm}[H]
\small

\SetAlgoNoEnd

\caption{\textsc{ExponentialMechanism}$(X,\score(y,X): y\in{\cal Y},\varepsilon)$ \textsc(ExpMech)}
\label{alg:exp_mech}

\KwIn{
dataset $X\in (\{0,1\}^d)^{\ast}$; $\score$ an objective function on variable $y\in {\cal Y}$; $\varepsilon$ privacy parameter
}
\KwOut{
$Y\in {\cal Y}$
}

\For{$y\in {\cal Y}$}
{
$s(y)\gets \score(y,X)+Z_y$; $(Z_y)_{y\in {\cal Y}}\stackrel{\mathrm{i.i.d.}}{\sim}\mbox{Gumbel}(\Delta/\varepsilon)$ }
\Return{$Y\in\arg\max\{s(y):y\in {\cal Y}\}$} 
\end{algorithm}
 
\begin{proposition}{\bf \cite[Theorems 3.10 and 3.11]{DworkR14}}
 \label{prop:exp_mech}
 Let $\score:{\cal Y}\times\domain^{\ast}\mapsto\mathbb{R}$ be a function of a decision variable $y\in {\cal Y}$ and a private input $X\in \domain^{\ast}$. If 
 \[\Delta \triangleq \sup_{X\simeq X'} \max_{y\in {\cal Y}} |\score(y,X)-\score(y,X')|<+\infty,\] 
 then \Cref{alg:exp_mech}  
 is $\varepsilon$-DP, and satisfies
 \[\mathbb{P}\Big[ \score(Y,X)\leq \max_{w\in {\cal Y}}\score(w,X)-\frac{2\Delta}{\varepsilon}\Big(\ln(|{\cal Y}|)+\tau\Big) \Big]\leq \exp(-\tau) \qquad(\forall \tau>0).\]
\end{proposition}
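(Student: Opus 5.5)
The plan is to prove the proposition in two stages. First I show that \Cref{alg:exp_mech} samples exactly from the exponential-mechanism distribution. Then I prove privacy and utility directly from the closed-form probabilities, following \cite[Theorems 3.10 and 3.11]{DworkR14}.

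\textbf{Gumbel-max step.} Write $b$ for the Gumbel scale and $\mu_y=\score(y,X)$. Then $s(y)=\mu_y+Z_y$ has CDF $F_y(v)=\exp(-e^{-(v-\mu_y)/b})$ and density $p_y(v)=\frac1b e^{-(v-\mu_y)/b}F_y(v)$. Ties occur with probability zero, so
\[
\mathbb{P}[Y=y]=\int_{\mathbb{R}} p_y(v)\prod_{w\neq y}F_w(v)\,dv
=\int_{\mathbb{R}} \tfrac1b e^{-(v-\mu_y)/b}\exp\Big(-e^{-v/b}\textstyle\sum_{w}e^{\mu_w/b}\Big)\,dv .
\]
I will substitute $u=e^{-v/b}$. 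This turns the integral into $e^{\mu_y/b}\int_0^\infty e^{-uS}\,du=e^{\mu_y/b}/S$, where $S=\sum_w e^{\mu_w/b}$. Hence $\mathbb{P}[Y=y]\propto\exp(\score(y,X)/b)$, which is the exponential mechanism with parameter $1/b$.

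\textbf{Privacy and utility.} Here I must fix the scaling. The utility bound in the statement, with its factor $2\Delta/\eps$, matches the Dwork--Roth normalization $\exp(\frac{\eps}{2\Delta}\score)$, i.e.\ Gumbel scale $b=2\Delta/\eps$. With $b=2\Delta/\eps$ the standard argument works as follows. For neighbors $X\simeq X'$, the numerator $e^{\score(y,X)/b}$ changes by a factor of at most $e^{\Delta/b}=e^{\eps/2}$. The normalizer $S$ changes by at most the same factor. So the ratio of probabilities is at most $e^{\eps}$, and the mechanism is $\eps$-DP. Summing over any measurable $E$ is trivial since $\mathcal{Y}$ is finite.

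If one literally uses scale $\Delta/\eps$ as written, this argument only yields $2\eps$-DP in general. It yields $\eps$-DP only when scores move monotonically between neighbors. So I would state the proof with $b=2\Delta/\eps$, noting that this is the normalization under which both claimed bounds hold simultaneously. This reconciliation of constants is the only delicate point; everything else is routine.

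For utility, let $\mathrm{OPT}=\max_w\score(w,X)$ and $\gamma=\frac{2\Delta}{\eps}(\ln|\mathcal{Y}|+\tau)$. Let $B=\{y:\score(y,X)\le \mathrm{OPT}-\gamma\}$. Then
\[
\mathbb{P}[Y\in B]\le \frac{|B|\,e^{(\mathrm{OPT}-\gamma)/b}}{e^{\mathrm{OPT}/b}}\le |\mathcal{Y}|e^{-\gamma/b}=|\mathcal{Y}|e^{-\ln|\mathcal{Y}|-\tau}=e^{-\tau}.
\]
Here the denominator is lower bounded by the single term of a maximizer. This completes the plan.
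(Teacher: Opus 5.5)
Your proof is correct and is exactly the argument behind the cited \cite[Theorems 3.10 and 3.11]{DworkR14}, supplemented by the standard Gumbel-max identity; the paper offers only that citation for this statement. Your reconciliation of constants is also right: Dwork--Roth sample proportionally to $\exp\big(\frac{\varepsilon}{2\Delta}\score\big)$, i.e.\ Gumbel scale $2\Delta/\varepsilon$, whereas with the literal scale $\Delta/\varepsilon$ of \Cref{alg:exp_mech} the utility bound still holds (even with $\Delta/\varepsilon$ in place of $2\Delta/\varepsilon$) but $\varepsilon$-DP genuinely fails---one outcome losing $\Delta$ while $k\ge 2$ others gain $\Delta$ gives likelihood ratio $e^{\varepsilon}\frac{1+ke^{\varepsilon}}{e^{\varepsilon}+k}>e^{\varepsilon}$---so the proposition holds precisely under your normalization.
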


A useful property of DP is that it is preserved under postprocessing; namely, if ${\cal A}$ is $(\varepsilon,\delta)$-DP, any (data independent) function $F:{\cal O}\mapsto{\cal O}'$ is such that $F\circ {\cal A}$ is $(\varepsilon,\delta)$-DP. Another important property is its robustness under adaptive composition.

\begin{proposition}
    The adaptive composition of $k$ mechanisms, each satisfying $(\varepsilon,\delta)$-DP, satisfies $(k\varepsilon,k\delta)$-DP. Furthermore, for any $0<\delta'\leq 1$ it also satisfies $(\varepsilon',k\delta+\delta')$-DP, where $\varepsilon'=\varepsilon\big[ \sqrt{2k\ln(1/\delta')}+k\frac{e^{\varepsilon}-1}{e^{\varepsilon}+1}\big]$.
\end{proposition}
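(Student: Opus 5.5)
The statement to prove is the composition proposition: adaptive composition of $k$ mechanisms, each $(\varepsilon,\delta)$-DP, is $(k\varepsilon,k\delta)$-DP, and also $(\varepsilon',k\delta+\delta')$-DP. I would treat the two claims separately, reducing both to the analysis of a single pair of neighboring datasets $X\simeq X'$ and the joint output distribution of the $k$ adaptively chosen mechanisms.

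\textbf{Basic composition.} I would argue by induction on $k$. The tool is the following characterization of $(\varepsilon,\delta)$-DP for a pair of distributions $P,Q$:
\[
P(E)\le e^{\varepsilon}Q(E)+\delta \quad \text{for all } E
\]
holds if and only if $P$ can be written as $P=(1-\delta)P_0+\delta P_1$, up to an event of mass $\delta$, with $P_0\le e^{\varepsilon}Q$ pointwise. Take the first $k-1$ outputs $o_{<k}$ as fixed. The $k$-th mechanism is $(\varepsilon,\delta)$-DP conditionally on $o_{<k}$. Integrating the conditional inequality $P_k(E_{o_{<k}}\mid o_{<k})\le e^{\varepsilon}Q_k(E_{o_{<k}}\mid o_{<k})+\delta$ against the law of $o_{<k}$ under $X$ gives one term bounded by $e^{\varepsilon}$ times the law under $X$. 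The inductive hypothesis is then applied to the nonnegative, $[0,1]$-bounded function $o_{<k}\mapsto Q_k(E_{o_{<k}}\mid o_{<k})$. Going from sets to bounded functions costs only the layer-cake representation $g=\int_0^1 \mathbf{1}[g>s]\,ds$. Together these yield $e^{k\varepsilon}$ and additive $(k-1)\delta+\delta$.

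\textbf{Advanced composition.} I would follow Dwork--Rothblum--Vadhan. First, for each step, reduce $(\varepsilon,\delta)$-DP to the pure case. Each $(\varepsilon,\delta)$ pair of conditional distributions is, up to total variation $\delta$, replaced by a pair whose privacy loss
\[
L_j=\ln\frac{dP_j}{dQ_j}
\]
is bounded by $\varepsilon$ in absolute value. Second, for such pairs, bound $|L_j|\le\varepsilon$ and the conditional mean
\[
\mathbb{E}[L_j\mid o_{<j}]=\mathrm{KL}\le \varepsilon\,\frac{e^{\varepsilon}-1}{e^{\varepsilon}+1}.
\]
Third, the sum $\sum_j L_j$ minus its conditional means is a martingale with increments bounded by $\varepsilon$ in absolute value. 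Azuma's inequality then gives
\[
\Pr\Big[\sum_j L_j>\varepsilon\sqrt{2k\ln(1/\delta')}+k\varepsilon\,\frac{e^{\varepsilon}-1}{e^{\varepsilon}+1}\Big]\le\delta'.
\]
Finally, a privacy loss bounded by $\varepsilon'$ except with probability $\delta'$ implies $(\varepsilon',\delta')$-DP. Adding back the $k\delta$ from the TV couplings by a union bound gives $(\varepsilon',k\delta+\delta')$.

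\textbf{Main obstacle.} The hard step is the tight KL bound $\varepsilon(e^{\varepsilon}-1)/(e^{\varepsilon}+1)$. For this I would use the fact that among $\varepsilon$-DP pairs, the worst case is the two-point randomized-response distribution. Since KL is jointly convex and, under the constraint $|L|\le\varepsilon$, is maximized at extreme points, it is enough to check the two-point pair directly. That pair has probabilities $e^{\varepsilon}/(1+e^{\varepsilon})$ and $1/(1+e^{\varepsilon})$, and its KL divergence is exactly $\varepsilon(e^{\varepsilon}-1)/(e^{\varepsilon}+1)$.
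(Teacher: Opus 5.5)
The paper gives no proof of this proposition; it quotes it as standard background. So the reference point is the usual Dwork--Rothblum--Vadhan argument, and your plan follows its structure.

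There is a concrete error in your basic-composition induction: the bookkeeping you describe does not produce $k\delta$. Integrating the last step's guarantee gives $\Pr[M(X)\in E]\le\delta+e^{\varepsilon}\,\mathbb{E}_{P_{<k}}[g]$, with $g(o_{<k})=Q_k(E_{o_{<k}}\mid o_{<k})$. Applying the inductive hypothesis to $g$ gives $\mathbb{E}_{P_{<k}}[g]\le e^{(k-1)\varepsilon}\,\mathbb{E}_{Q_{<k}}[g]+(k-1)\delta$, and the outer factor $e^{\varepsilon}$ then multiplies $(k-1)\delta$. The recursion is therefore $\delta_k=\delta+e^{\varepsilon}\delta_{k-1}$, i.e.\ $\delta_k=\delta\sum_{j=0}^{k-1}e^{j\varepsilon}$, not $k\delta$.

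The fix is to truncate before invoking the hypothesis. Since also $P_k(E_{o_{<k}}\mid o_{<k})\le 1$, you have $P_k(E_{o_{<k}}\mid o_{<k})\le\min\{1,e^{\varepsilon}g(o_{<k})\}+\delta$. Applying the layer-cake form of the hypothesis to the $[0,1]$-valued function $\min\{1,e^{\varepsilon}g\}$ bounds its $P_{<k}$-expectation by $e^{k\varepsilon}\,\mathbb{E}_{Q_{<k}}[g]+(k-1)\delta=e^{k\varepsilon}\Pr[M(X')\in E]+(k-1)\delta$.

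Alternatively, use the decomposition you announce but never use. Write each conditional density as $p_j\le\min\{p_j,e^{\varepsilon}q_j\}+(p_j-e^{\varepsilon}q_j)_+$ with $\int(p_j-e^{\varepsilon}q_j)_+\le\delta$, and telescope the product of conditionals. Note that in normalized form that characterization must read $(1-\delta)P_0\le e^{\varepsilon}Q$, not $P_0\le e^{\varepsilon}Q$; the latter already fails for $\varepsilon=0$.

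Your advanced-composition outline is the standard one, but three steps need tightening.

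(i) Modify only the conditional law under $X$. Choose $P_j'$ with $\mathrm{TV}(P_j,P_j')\le\delta$ and $e^{-\varepsilon}q_j\le p_j'\le e^{\varepsilon}q_j$, keeping $Q_j$ fixed. You can do this by clipping $p_j$ into this band and redistributing. The cost is $\max\{\int(p_j-e^{\varepsilon}q_j)_+,\int(e^{-\varepsilon}q_j-p_j)_+\}\le\delta$, using both directions of the DP guarantee. If, as your phrase ``replaced by a pair'' suggests, both members are modified, then passing from $Q'$ back to $Q$ on the right-hand side costs an extra $e^{\varepsilon'}k\delta$, and you do not land on $k\delta+\delta'$.

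(ii) The centered increments $L_j-\mathbb{E}[L_j\mid o_{<j}]$ are not bounded by $\varepsilon$ in absolute value. They lie in a predictable interval of length $2\varepsilon$, and it is the Hoeffding--Azuma inequality for such intervals that yields $\exp(-t^2/(2k\varepsilon^2))$. Using the crude absolute bound $2\varepsilon$ instead would cost a factor $2$ in the first term of $\varepsilon'$.

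(iii) The extreme-point argument for the KL bound is not rigorous as stated; joint convexity alone does not identify randomized response as the maximizer. A direct proof: write $r=p/q\in[e^{-\varepsilon},e^{\varepsilon}]$ with $\mathbb{E}_Q[r]=1$, and bound the convex function $r\ln r$ by its chord on $[e^{-\varepsilon},e^{\varepsilon}]$. This gives $\mathrm{KL}(P\|Q)=\mathbb{E}_Q[r\ln r]\le\varepsilon\frac{e^{\varepsilon}-1}{e^{\varepsilon}+1}$ exactly.
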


\subsection{Boolean Functions and Reduced Ordered Binary Decision Diagrams}
\label{app:boolean_rOBDD}

\begin{figure}[h]
    \centering\footnotesize
    \begin{tikzpicture}[
bddnode/.style={
            circle, 
            draw=black, 
            very thick, 
            minimum size=6mm, 
            inner sep=1pt, 
            fill=white,
            font=\bfseries
        },
        terminal/.style={
            rectangle, 
            draw=black, 
            very thick, 
            minimum size=5mm, 
            fill=gray!10,
            font=\bfseries
        },
low/.style={
            solid, 
            ->, 
            >=latex, 
            thick, 
            color=red!70!black
        },
        high/.style={
            solid, 
            ->, 
            >=latex, 
            thick, 
            color=blue!70!black
        },
        edge_label/.style={
            midway, 
text=black,
            fill=white,
            inner sep=1pt
        },
        scale=0.7
    ]

\node[bddnode] (x1) at (0, 0) {$x_1$};
    
\node[bddnode] (x2_left)  at (-2, -2.5) {$x_2$};
    \node[bddnode] (x2_right) at (2, -2.5)  {$x_2$};
    
\node[bddnode] (x3) at (0, -5) {$x_3$};
    
\node[terminal] (T0) at (-1.5, -7.5) {0};
    \node[terminal] (T1) at (1.5, -7.5)  {1};

\draw[low]  (x1) -- (x2_left)  node[edge_label] {0};
    \draw[high] (x1) -- (x2_right) node[edge_label] {1};
    
\draw[low]  (x2_left) -- (T0) node[edge_label] {0}; \draw[high] (x2_left) -- (x3) node[edge_label] {1}; 

\draw[low]  (x2_right) -- (x3) node[edge_label] {0}; \draw[high] (x2_right) -- (T1) node[edge_label] {1}; 

\draw[low]  (x3) -- (T0) node[edge_label] {0};
    \draw[high] (x3) -- (T1) node[edge_label] {1};
    
\node[anchor=east, color=gray] at (-4, 0) {Level $x_1$};
    \node[anchor=east, color=gray] at (-4, -2.5) {Level $x_2$};
    \node[anchor=east, color=gray] at (-4, -5) {Level $x_3$};
    \node[anchor=east, color=gray] at (-4, -7.5) {Terminals};

    \node[anchor=north west] (table) at (4.5, 0.5) {
        \begin{tabular}{ccc|c}
            \multicolumn{4}{c}{\textbf{Truth Table of $\bm f$}} \\
\toprule
            $x_1$ & $x_2$ & $x_3$ & $f$ \\
            \midrule
            0 & 0 & 0 & \textbf{0} \\
            0 & 0 & 1 & \textbf{0} \\
            0 & 1 & 0 & \textbf{0} \\
            0 & 1 & 1 & \textbf{1} \\
1 & 0 & 0 & \textbf{0} \\
            1 & 0 & 1 & \textbf{1} \\
            1 & 1 & 0 & \textbf{1} \\
            1 & 1 & 1 & \textbf{1} \\
            \bottomrule
        \end{tabular}
    };
    
\node[draw=black, fill=white, thin, inner sep=3pt, below=1.5mm] at (table.south){
        \begin{tikzpicture}
            \draw[low] (0.2,0.5) -- (1,0.5) node[right, black] {False (0)};
            \draw[high] (0.2,0) -- (1,0) node[right, black] {True (1)};
        \end{tikzpicture}
    };

    \end{tikzpicture}
    \caption{Reduced Ordered Binary Decision Diagram (rOBDD) for the 3-variable Majority function $f = (x_1 \land x_2) \lor (x_2 \land x_3) \lor (x_1 \land x_3)$, showing shared sub-graphs.}
    \label{fig:majority-robdd}
\end{figure}

Let $\items=[d]$ be a finite set (we will refer to it as the {\em ground set}), and ${\cal F}$ a class of Boolean functions $f:\{0,1\}^{\items}\mapsto\{0,1\}$. We recall that any Boolean function can be represented as a (reduced) 
Ordered Binary Decision Diagram (OBDD) \citep{Bryant:1992}. This representation is given by a graph, which in its unreduced form is a binary tree of depth $|\items|+1$, and a permutation $\pi$ of $[|\items|]$, where all nodes $u$ at depth $i+1$ are associated to the variable $x_{\pi(i)}$, this node has two children where one of them denotes an assignment of $x_{\pi(i)}=0$ (we call this the lower child, $\lo(u)$) and the other an assignment $x_{\pi(i)}=1$ (the upper node, $\hi(u)$). Leaves have 0-1 labels, and if $x\in\{0,1\}^{\items}$, following the path indicated by its assignment, then its label is $f(x)$. This tree representation can be substantially reduced by performing the following operations:
\begin{itemize}
    \item Remove duplicate terminals: Remove all but one terminal vertex with the same label, and keep only the edge of the remaining vertex.
    \item Remove duplicate nonterminals: For nonterminal vertices $u,v$ such that $\lo(u)=\lo(v)$ and $\hi(u)=\hi(v)$, merge these nodes and keep the corresponding edges.
    \item Remove redundant tests: If a nonterminal vertex $u$ has $\lo(u)=\hi(u)$, remove this node and redirect all incoming edges to $\lo(u)$. 
\end{itemize}
See \Cref{fig:majority-robdd} for an example. 
It should be noted that, given an order of variables, the rOBDD is unique (up to isomorphism). 
This reduced representation is rather compact for many structured examples. Rather than picking a minimal representation which is hard to compute, we will consider the size of the representation given as part of the input,  hence the complexity of our algorithms will be dependent on the size. We denote this size by $\size(f)$, and $\size({\cal F})=\max\{\size(f):f\in {\cal F}\}.$ 

We note this representation is useful in our cases of interest:
\begin{enumerate}
\item {\bf Monotone disjunctions.} Here, each function $f(x)=\bigvee_{i\in S}x_i$ for some set $S\subseteq I$. The rOBDD of such function only requires depth $|S|$: w.l.o.g. $S=\{1,\ldots,k\}$ and for each $l< k$, $\lo(l)$ connects this node to $l+1$ and $\hi(l)$ directly connects to label 1 for any $l$; for $k=l$, $\lo(l)$ connects to label 0, and $\hi(l)$ to label 1. Note that $\size(f)=O(k)=O(d)$.
\item {\bf Marginal queries.} Here, a function is of the form $f(x)=\bigwedge_{i\in S} l_i(x_i)$, where $(l_i)_{i\in S}$ are either the identity function or its negation. The corresponding rOBDD has depth $|S|$ and can be similarly constructed as in the previous example (the assignments of lo and hi depend on $(l_i)_{i\in S}$). Note that $\size(f)=O(|S|)=O(d)$.
\item {\bf Range queries.} If we consider the lexicographic order over 0-1 strings of length $d$, we can consider range queries: if $a\leq b$ are elements of $\{0,1\}^d$, then we can define $f_{a,b}(x)=\mathbf{1}(a\leq x\leq b)$. The rOBDD of such function requires depth $d$, and by using pairwise comparisons one can show that $\size(f)=O(d)$.
\end{enumerate}
As mentioned earlier, the class of functions for which there exist compact rOBDD representations is broad. See \citep{Bryant:1992} for further discussions.
 
\subsection{Treewidth and Separators}

\label{app:treewidth_examples}

Tree decompositions are useful as they provide a natural mechanism for dynamic programming algorithms. To illustrate this idea, recall that a \emph{separator} of a connected graph is a set of vertices whose removal results in two connected components.

\begin{lemma} {\bf(see, e.g., \cite[Lemma 7.3]{Cygan:2015})} \label{lem:tree_decomposition}
Let $\{a,b\}$ be an edge of $T$. The forest $T \setminus \{a,b\}$ consists of two connected components $T^a$ and $T^b$; let $A=\bigcup_{t\in V(T^a)}X_t$ and $B=\bigcup_{t\in V(T^b)}X_t$. Then $\partial (A),\partial(B)\subseteq X_a\cap X_b.$ In particular, $(A,B)$ is a separation of $G$ with separator $X_a\cap X_b$.
\end{lemma}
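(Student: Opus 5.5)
The plan is to read $\partial(A)$ here as the set of vertices of $A$ that have a neighbour outside $A$, i.e.\ the vertices of $A$ that touch $V(G)\setminus A$. This is the reading under which the inclusion $\partial(A)\subseteq X_a\cap X_b$ can hold. Under the outer-boundary definition from the preliminaries, any vertex of $\partial(A)$ lies outside $A\supseteq X_a$, so the inclusion would force $\partial(A)=\emptyset$. With this reading fixed, the argument uses only properties (i)--(iii) of a tree decomposition. By symmetry it suffices to treat $\partial(A)$, and then deduce the separation statement.

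\textbf{Step 1 (key observation).} Every vertex $u\in A\cap B$ lies in $X_a\cap X_b$. Indeed, $u$ lies in some bag $X_s$ with $s\in V(T^a)$ and some bag $X_{s'}$ with $s'\in V(T^b)$. By property (iii) the nodes whose bags contain $u$ form a connected subtree of $T$. Every path in $T$ between $V(T^a)$ and $V(T^b)$ uses the edge $\{a,b\}$, so both $a$ and $b$ belong to that subtree, giving $u\in X_a\cap X_b$. Conversely $X_a\cap X_b\subseteq A\cap B$ trivially, so $A\cap B=X_a\cap X_b$.

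\textbf{Step 2 (boundary).} Let $u\in A$ have a neighbour $v\notin A$. By property (ii), some bag $X_s$ contains both $u$ and $v$. Since $v\notin A$, we have $s\notin V(T^a)$, hence $s\in V(T^b)$ and $u\in B$. Thus $u\in A\cap B=X_a\cap X_b$ by Step 1. This proves $\partial(A)\subseteq X_a\cap X_b$, and the same argument with the roles of $a$ and $b$ swapped gives $\partial(B)\subseteq X_a\cap X_b$.

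\textbf{Step 3 (separation).} Property (i) gives $A\cup B=V(G)$, since every bag belongs to $T^a$ or $T^b$. Suppose there were an edge $\{u,v\}$ with $u\in A\setminus B$ and $v\in B\setminus A$. Then $u\in\partial(A)\subseteq B$ by Step 2, a contradiction. So there are no edges between $A\setminus B$ and $B\setminus A$, and $(A,B)$ is a separation with separator $A\cap B=X_a\cap X_b$.

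The only nontrivial step is Step 1. It rests on the fact that removing the edge $\{a,b\}$ disconnects $T^a$ from $T^b$, so the connected subtree for $u$ must contain both endpoints of that edge. Once Step 1 is in place, Steps 2 and 3 are immediate bookkeeping.
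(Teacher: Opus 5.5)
Your proof is correct and is essentially the standard argument behind the cited Cygan et al.\ Lemma 7.3; the paper itself reproduces no proof. Property (ii) places a border vertex of $A$ in a bag of $T^b$, and property (iii) then forces every vertex of $A\cap B$ into $X_a\cap X_b$, from which the separation follows.

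You are also right that the inclusion only holds if $\partial(A)$ is read as the inner border (vertices of $A$ with a neighbour outside $A$), as Cygan et al.\ define it. Under the outer-boundary definition in the paper's preliminaries, the inclusion forces $\partial(A)=\emptyset$. It already fails for the path $u$--$s$--$v$ with bags $X_a=\{u,s\}$ and $X_b=\{s,v\}$, where the outer boundary of $A$ is $\{v\}\not\subseteq\{s\}$.
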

Without loss of generality, the separators above have cardinality at most $k$, where $k$ is the width of the tree decomposition. For nice tree decompositions this bound may worsen to $k+1$.

The following result justifies the efficient verifiability of the consistency condition for a given assignment, presented in \Cref{cor:verify_residuals}.
\begin{lemma} \label{lem:scope_bag_function}
    Let $t\in T$, $f\in X_t$ and $i\in\scope[f]$. Let $B=\bigcup_{s\in V(T\setminus T_t)}X_s$. If $i\in B$ then $i\in X_t$.
\end{lemma}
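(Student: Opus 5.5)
The plan is to use the separation property of tree decompositions from \Cref{lem:tree_decomposition}, applied to the edge joining $t$ to its parent. Suppose $t$ is not the root, and let $p$ be its parent. Removing the edge $\{t,p\}$ from $T$ splits the tree into two components. The component containing $t$ is exactly $T_t$, and the component containing $p$ is $T\setminus T_t$. We set $A=X_t^{\downarrow}=\bigcup_{s\in V(T_t)}X_s$ and take $B$ as in the statement. Then \Cref{lem:tree_decomposition} gives $\partial(A)\subseteq X_t\cap X_p$ and $\partial(B)\subseteq X_t\cap X_p$. If $t$ is the root, then $T\setminus T_t$ is empty, so $B=\emptyset$ and the claim is vacuous.

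The argument then splits into two cases according to whether $i$ lies in $A$.

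\emph{Case $i\in A$.} We know $i\in B$, and we want to show $i\in X_t$. Since $i$ appears in some bag of $T_t$ and in some bag of $T\setminus T_t$, property (iii) of tree decompositions says the set of nodes whose bags contain $i$ is connected in $T$. Any path between these two nodes must pass through the edge $\{t,p\}$. Hence $i\in X_t\cap X_p\subseteq X_t$, as required.

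\emph{Case $i\notin A$.} Here $f\in X_t\subseteq A$, and since $i\in\scope[f]$, the pair $\{i,f\}$ is an edge of the incidence graph $G(\family)$. Because $i\notin A$ and $i$ is adjacent to $f\in A$, we get $i\in\partial(A)\subseteq X_t\cap X_p\subseteq X_t\subseteq A$. This contradicts $i\notin A$, so this case cannot occur.

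The only step needing care is identifying the components of $T\setminus\{t,p\}$ with $T_t$ and its complement, so that the sets $A$ and $B$ match those in \Cref{lem:tree_decomposition}. Once that identification is made, both cases are immediate.
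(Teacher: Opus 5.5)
Your Case 1 is correct, and it is more careful than the paper's proof, which asserts $i\in\partial(A)$ without checking whether $i\in A$. The gap is in Case 2, where you use $i\in\partial(A)\subseteq X_t\cap X_p$.

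\textbf{Why Case 2 fails.} With the paper's definition, $\partial(A)=\{v\in V(G)\setminus A:\exists u\in A,\ \{u,v\}\in E(G)\}$ is disjoint from $A$, whereas $X_t\cap X_p\subseteq A$. So that inclusion can hold only when $\partial(A)=\emptyset$. \Cref{lem:tree_decomposition} is true only for the \emph{inner} border, i.e.\ the vertices of $A$ that have a neighbour outside $A$. For the edge $\{i,f\}$ with $i\notin A$, that version gives $f\in X_t\cap X_p$, not $i\in X_t$. A warning sign is that your Case 2 never uses $i\in B$, so it would prove $\scope[f]\subseteq X_t^{\downarrow}$ for every $f\in X_t$.

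\textbf{The statement is false in this case.} Take $d=1$ and $\family=\{f\}$ with $f(x)=x_1$, so $G(\family)$ is the single edge $\{1,f\}$. Let $T$ be a root $p$ with one child $t$, with $X_p=\{1,f\}$ and $X_t=\{f\}$. This is a valid tree decomposition with $f\in X_t$, $1\in\scope[f]$ and $1\in B=X_p$, yet $1\notin X_t$. A nice version is the root-to-leaf path of bags $\emptyset,\{1\},\{1,f\},\{f\},\emptyset$, with $t$ the node whose bag is $\{f\}$. The paper's proof makes exactly the same step, so your approach coincides with the paper's and inherits its flaw. No argument can close this gap as the statement stands.

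\textbf{What can be proved.} Add the hypothesis $f\notin X_p$, i.e.\ $t$ is the highest node whose bag contains $f$ (in a nice decomposition, $p$ is the node where $f$ is forgotten). By connectivity of the set of nodes containing $f$, every bag containing $f$ lies in $T_t$. Hence the bag covering the edge $\{i,f\}$ lies in $T_t$, and $i\in X_t^{\downarrow}$. Equivalently, the inner-border separator lemma in your Case 2 gives $f\in X_p$, a contradiction. Your Case 1 argument then yields $i\in X_t\cap X_p\subseteq X_t$.
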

\begin{proof}
    First, $\{i,f\}\in E(G({\cal F}))$. Let $A=V_{t}^{\downarrow}$. Since $i\in B$ and $f\in A$, then $\{i,f\}$ is an edge crossing the separation $(A,B)$, and therefore 
    $i\in \partial (A)$ and $f\in \partial(B)$. By \Cref{lem:tree_decomposition} applied to $t$ and its parent node, $\partial(A),\partial(B)\subseteq X_t$; in particular, $i\in X_t$.
\end{proof}

\subsection{Semirings}

\label{app:semirings}

Let $\ring$ be a set endowed with operations $\oplus$, $\otimes$, and let $\zero,\one\in \ring$. We say that $(\ring,\oplus,\otimes,\zero,\one)$ is a \emph{commutative semiring}  if: $(\ring,\oplus,\zero)$ is a commutative monoid; $(\ring,\otimes,\one)$ is a commutative monoid; $\otimes$ is distributive with respect to $\oplus$ (i.e., $a\otimes (b\oplus c)=(a\otimes b) \oplus (a\otimes c)$); $\mathbf{0}\otimes a=\mathbf{0}$ for all $a\in \mathbf{K}$.

The following are examples of commutative semirings: 
\begin{enumerate}
    \item \textsl{Max-plus Algebra:} $(\mathbb{R}\cup\{-\infty\},\mbox{max},+,-\infty,0)$, where $\mbox{max}(a,b)=\max\{a,b\}$.
    \item \textsl{Sum-product Algebra:} $([0,+\infty),+,\cdot,0,1)$, where $\cdot$ denotes multiplication.
\end{enumerate} 
\section{Missing Details from \Cref{sec:error_tw}}

\label{app:improved_err_tw}

\begin{lemma}
\label{lem:bound_tw_marginals}
Let ${\cal F}$ be a family of boolean functions on $d$ variables with incidence graph treewidth $w$, and where the scope of each function is bounded by $k$. Let $S_f=\{y\in \{0,1\}^{\scope[f]}:\, f(y) =1\}$ be the support of $f$ (resticted to its scope), and let $I_{y,f}(x)=\mathbf{1}(x^{\scope[f]}=y)$ be the corresponding indicator function.

Consider the modified query family ${\cal F}'$ where each function $f\in{\cal F}$ is substituted by 
\[ \{I_{y,f}:\, y\in S_f \}. \]
Then $\tw({\cal F}')\leq (w+1)k-1.$
\end{lemma}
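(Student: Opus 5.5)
Take a tree decomposition $(T,\{X_t\}_{t\in V(T)})$ of $G({\cal F})$ of width $w$ and build a decomposition of $G({\cal F}')$ on the same tree $T$. The construction replaces each function vertex by its scope and then adds back the indicator functions. First, for every node $t$ define the variable bag
\[
X'_t \triangleq \Big(X_t\cap[d]\Big)\cup\bigcup_{f\in X_t\cap{\cal F}}\scope[f].
\]
Each element of $X_t$ contributes at most $k$ variables, since a variable contributes one and a function at most $k$. Hence $|X'_t|\leq (w+1)k$.

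Next we check that $\{X'_t\}$ is a valid tree decomposition of the primal (Gaifman) graph of ${\cal F}$ restricted to variables, i.e.\ the graph in which two variables are adjacent when they share a scope.
\begin{itemize}
\item Coverage holds trivially for variables.
\item Every pair $i,j\in\scope[f]$ lies in $X'_t$ for any $t$ with $f\in X_t$.
\item For connectivity, the set of nodes whose $X'_t$ contains $i$ is the union of the subtree $T_i$, where $i\in X_t$, and the subtrees $T_f$ for $f$ with $i\in\scope[f]$. Each $T_f$ intersects $T_i$, because the edge $\{i,f\}$ must be covered by some bag. A union of subtrees that all meet the connected $T_i$ is connected.
\end{itemize}

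Now add the new functions. For each $f\in{\cal F}$ and $y\in S_f$, the function $I_{y,f}$ has scope exactly $\scope[f]$; if $y$ fixes every coordinate, then flipping any coordinate changes the value. We could attach a fresh leaf containing $\scope[f]\cup\{I_{y,f}\}$ to a node of $T_f$, but this exceeds the size bound by one and creates many leaves. The cleaner option is to put all $I_{y,f}$ into the bags at nodes where $f$ lies. That blows up the bag size by $|S_f|$, which is too much. Attaching one new leaf per pair $(f,y)$ instead is fine: the leaf bag is $\scope[f]\cup\{I_{y,f}\}$, of size at most $k+1\leq (w+1)k$ when $w\geq1$ or $k\geq 2$, and it is adjacent to a node whose bag contains $\scope[f]$. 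This keeps the connectivity of the variable subtrees, since the leaf only contains variables already present at its neighbor. The edges $\{i,I_{y,f}\}$ are covered by the leaf, and $I_{y,f}$ occurs in a single bag. The maximum bag size is therefore $(w+1)k$, giving width $(w+1)k-1$. The degenerate case $w=0$, $k=1$ is handled directly, since then the incidence graph is a matching and the bound $k+1-1=1$ versus $0$ needs care; there one can drop the variable from the base bags.

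The main obstacle is the connectivity argument for the variable bags, together with making sure the size bound of $(w+1)k$ is not exceeded once function vertices are added. This is why the indicator functions go into separate leaves rather than into the existing bags.
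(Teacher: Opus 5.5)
Your construction is exactly the paper's proof: replace every function in a bag by its scope, then attach one leaf bag $\mathrm{scope}[f]\cup\{I_{y,f}\}$ to a node of $T_f$ for each $y\in S_f$, using the same bag-size count $|X_t\cap[d]|+sk\le (w+1)k$ and the same connectivity argument via $T_f\cap T_i\neq\emptyset$. The only slip is your degenerate-case aside: $w=0$ means $G({\cal F})$ has no edges at all (a matching has treewidth $1$), so every scope is empty and the leaf bags are singletons, while for $w\ge 1$ some scope is nonempty, so $k\ge 1$ and $k+1\le 2k\le (w+1)k$. Your stated condition ``$w\ge 1$ or $k\ge 2$'' is therefore wrong for $w=0$, $k\ge 2$, but that case never produces a large leaf bag, so the conclusion stands.
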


\begin{proof}
Let $(T,\{X_t\}_{t\in V(T)})$ be a tree decomposition for ${\cal F}$ of treewidth $w$. We perform the following transformation on this graph:
\begin{enumerate}
\item For each $f\in {\cal F}$, select an arbitrary vertex $t\in T$ such that $f\in X_t$, and for each $y\in S_f$ append a new leaf bag $X_{(y,f)}$ to $t$.
\item We fill bag $X_{(y,f)}$ with function $I_{y,f}$ and variables $i\in\scope[f]$.
\item Recalling that the set of bag nodes $V_f$ that contain $f$ in their bags is a subtree, we will include all variables $i\in\scope[f]$ to such bags.
\item We remove all functions $f\in {\cal F}$ from all bags.
\end{enumerate}
It remains to prove that this construction defines a tree decomposition for ${\cal F}'$, and that its width is bounded by $(w+1)k$.  First, by construction, each variable and each function from ${\cal F}'$ belong to a bag. Now, each edge $(i,I_{y,f})$ in the incidence graph of ${\cal F}'$ is included in a bag; namely, in $X_{(y,f)}$, which follows by construction. Finally, the set of bags containing an indicator is a singleton, and therefore connected; and for each variable $i\in[d]$, note that in the original tree decomposition, $V_i\cap V_f\neq \emptyset$, thus by appending i to all bags from $V_f$ and to $V_{(y,f)}$ preserves connectedness. This proves that the construction is a valid tree decomposition.

To bound the width of our decomposition, note that the size of bags $X_{(y,f)}$ is precisely $\scope[f]+1\leq k+1$. For bags indexed by $t\in V(T)$, let $s$ the number of functions $f\in {\cal F}$ in $X_t$ (note that these were removed), those the new bag size is at most $w+1+s(k-1)=$ (this because for each removed function we included all variables in its scope). Noting that $s\leq w+1$, we get $\tw({\cal F}')\leq (w+1)k-1$.
\end{proof} 

\section{Missing Details from \Cref{sec:dynamic_program_FPT}}
\label{app:pf_dynamic_program_correctness}

\begin{proof}[Proof of \Cref{thm:correctness_dyn_prog}]
To prove the result, we describe the algorithm and explain why it satisfies recursion \eqref{eqn:dyn_prog_rec}.

\paragraph{Initialization.}
Recall that leaves have empty bags and thus we can initialize
 $\Potential[t,\emptyset]=\one$.

\paragraph{Node Updates.} 
We now consider the recursive step, for which we need a case analysis.

\noindent{\sl (i) Introduce node.}
Let $t'$ be the preceding child of our current node $t$. Regardless of what type of element is added to the bag, note that no element is forgotten, and therefore we only need to pass on the value to the next node: if $\sigma$ is consistent
\[ \Potential[t,\sigma] = \Potential[t',\sigma'],
\]
where $\sigma'=\sigma|_{X_{t'}}$ is the state vector $\sigma$ restricted to elements in $X_{t'}$ (this is abbreviated in the pseudocode with the $\sigma'\gets \Res(\sigma,X_{t'})$ operation); and if $\sigma$ is not consistent, $\Potential[t,\sigma]=\zero$. 

\noindent {\sl (ii) Forget node.}
Let $t\in V(T)$ and $t'$ be the preceding child of $t$. 

\begin{itemize} 
\item {\sl Forget feature:} Let $\forget(t'\to t)=\{i\}$. Since $i$ is not a function, we only need to pass on the values for the preceding states. If $\sigma$ is consistent,
\[ \Potential[t,\sigma] = \big( \Potential[t',\sigma_0]\oplus \Potential[t',\sigma_1] \big),\]
where, for $b\in \{0,1\}$, $\sigma_b$ is the $\sigma$ assignment extended by the additional assignment $x_i=b$ (this is abbreviated in the code as $\sigma_b\gets \Ext(\sigma, v=b)$).  If $\sigma$ is not consistent, then $\Potential[t,\sigma]=\zero$.

\item {\sl Forget function:} On the other hand, if $\forget(t'\to t)=\{f\}$, and if $\sigma$ is consistent
\[ \Potential[t,\sigma] = \bigoplus_{b\in \{0,1\}}\Big( \local(t'\to t,\sigma_b)\otimes \Potential[t',\sigma_b]\Big), \]
where $\sigma_b$ is the extension of $\sigma$ with the assignment $u_f=b$. If $\sigma$ is not consistent, $\Potential[t,\sigma]=\zero$.
\end{itemize}

\noindent {\sl (iii) Join node.}
In this case, we have a node $t$ with two children, $t_1,t_2$ with $X_t=X_{t_1}=X_{t_2}$. 
To update the value function, we note that aside from variables in $X_t$, there are no common variables in $t_1,t_2$ (this is due to the connectedness of the bags containing an element). In particular, the value decomposes as follows:
\[
\Potential[t,\sigma] =  \Potential[t_1,\sigma]\otimes\Potential[t_2,\sigma].
\]
Since each of these updates correspond to update \eqref{eqn:dyn_prog_rec} for the specific node updates, we have proved that our algorithm computes this recursion, which characterizes the value function \eqref{eqn:potential_dynamic_program}.
\end{proof} 

\end{document}